\newif\ifarxiv
    \renewcommand\footnotetextcopyrightpermission[1]{}
  \providecommand\BibTeX{{%
    \normalfont B\kern-0.5em{\scshape i\kern-0.25em b}\kern-0.8em\TeX}}}
\definecolor{Dark Ruby Red}{HTML}{7c1b1e}
\definecolor{Dark Blue Sapphire}{HTML}{004c5c} 
\definecolor{Dark Gamboge}{HTML}{be7c00}
\knowledgenewrobustcmd{\expfun}{\cmdkl{\textit{exp}}}
\knowledgenewrobustcmd{\uGraph}[1]{\cmdkl{\ensuremath{\textup{G}_{#1}}}}
\knowledgenewrobustcmd{\uGraphC}[1]{\cmdkl{\ensuremath{\textup{G}_{#1}}}}%
\knowledgenewrobustcmd{\SigmaE}{\cmdkl{\Sigma_{E}}}
\knowledgenewrobustcmd{\SigmaN}{\cmdkl{\Sigma_{N}}}
\knowledgenewrobustcmd{\tree}{\cmdkl{\textit{tree}}}
\knowledgenewrobustcmd{\posB}{\cmdkl{B^+}}
\knowledgenewrobustcmd{\translatedC}{\cmdkl{\hat \pi_{T_C,f}}}%
\knowledgenewrobustcmd{\negdepth}{\cmdkl{\textit{neg-depth}}}
\knowledgenewrobustcmd{\ICPDLp}{\ensuremath{\cmdkl{\textup{ICPDL}^{\!+}}}\xspace}
\knowledgenewrobustcmd{\allStates}{\cmdkl{\+Q}}
\knowledgenewrobustcmd{\kSimGame}[1][k]{\cmdkl{\mathbf{G}[\pebblesim{#1}\!]}}%
\knowledgenewrobustcmd{\kBisimGame}[1][k]{\cmdkl{\mathbf{G}[\pebblebisim{#1}\!]}}%
\renewcommand{\epsilon}{\varepsilon}
\newif\ifproofappendix
\newrobustcmd\labelwithproof[1]{%
\label{#1}%
\ifproofappendix%
\marginnote{\footnotesize{%
  \textnormal{First stated in page~\pageref{#1}.}%
}}
\else%
\marginnote{\footnotesize{%
  \textnormal{See the proof of \Cref{#1} in page~\pageref{proof-#1}.}%
}}%
\fi%
}
\newrobustcmd\introinrestatable[1]{%
\ifproofappendix%
\kl{#1}%
\else%
\intro{#1}%
\fi%
}
\newrobustcmd\introinrestatableopt[1]{%
\ifproofappendix%
\kl[#1]{#1}%
\else%
\intro[#1]{#1}%
\fi%
}
\newrobustcmd\recall[1]{
  \proofappendixtrue%
    #1*
  \proofappendixfalse%
}
\definecolor{green}{RGB}{0,120,0}
\definecolor{hlyellow}{RGB}{250, 250, 190}
\definecolor{diegoeditcolor}{RGB}{210,210,255}
\definecolor{remieditcolor}{RGB}{210,255,210}
\definecolor{edwineditcolor}{RGB}{150,200,55}
\newcommand{\sidediego}[1]{\todo[backgroundcolor=diegoeditcolor, size=\tiny]{{\bf D:} #1}}
\newcommand{\sidesanti}[1]{\todo[backgroundcolor=remieditcolor, size=\tiny]{{\bf S:} #1}}
\newcommand{\sideedwin}[1]{\todo[backgroundcolor=edwineditcolor, size=\tiny]{{\bf E:} #1}}
\definecolor{light-gray}{gray}{0.9}
\newcommand{\proofcase}[1]{\noindent\colorbox{light-gray}{#1}~~}
\newcommand{\proofsubcase}[1]{\noindent$\rhd$~\underline{~#1~}~}
\newrobustcmd{\wrote}{\color{wrote}\scriptsize\text{wrote}}
\newrobustcmd{\advised}{\color{advised}\scriptsize\text{advised}}
\renewcommand{\phi}{\varphi}
\newcommand{\set}[1]{\{#1\}}
\newrobustcmd{\Nat}{\mathbb{N}}
\newcommand{\dcup}{\mathop{\dot\cup}} 
\knowledgenewrobustcmd\pset[1]{\cmdkl{\wp}(#1)} 
\newcommand{\resp}[1]{(resp.~#1)}
\newcommand{\ie}{\textit{i.e.}}
\newcommand{\aka}{\textit{a.k.a.}\ }
\newcommand{\eg}{\textit{e.g.}}
\newcommand{\cf}{\textit{cf.}}
\knowledgenewrobustcmd{\A}{\mathbb{A}} 
\knowledgenewrobustcmd{\Aext}{\cmdkl{\mathbb{A}^\pm}} 
\knowledgenewrobustcmd\vertex[1]{\cmdkl{V}(#1)}
\knowledgenewrobustcmd\edges[1]{\cmdkl{E}(#1)}
\knowledgenewrobustcmd{\Graphloop}{\cmdkl{\+G_\circlearrowleft}}
\knowledgenewrobustcmd{\Graphcap}{\cmdkl{\+G_\cap}}
\knowledgenewrobustcmd\worlds[1]{\cmdkl{W}\!(#1)} 
\knowledgenewrobustcmd\qvar{\footnotesize\bullet} 
\newcommand{\eqdef}{\mathrel{{\mathop:}}=}
\newcommand{\eqqdef}{\mathrel{{\mathop:}{\mathop:}}=}
\knowledgenewrobustcmd{\kHoms}[1][k]{\cmdkl{\textit{Hom}_{#1}}}%
\knowledgenewrobustcmd{\Prop}{\cmdkl{\mathbb{P}}}
\knowledgenewrobustcmd{\Prog}{\cmdkl{\mathbb{A}}}
\knowledgenewrobustcmd\Vars{\cmdkl{\mathbb{V}}}
\knowledgenewrobustcmd{\ICPDLg}[1]{\ensuremath{\cmdkl{\textup{ICPDL}^{\!+}}\!}(#1)}
\newcommand{\twticpdl}{\ICPDLg{\Tw[2]}}
\knowledgenewrobustcmd{\subexpr}{\ensuremath{\cmdkl{\textup{sub}}}}
\knowledgenewrobustcmd{\dbracket}[1]{\cmdkl{\llbracket} #1 \cmdkl{\rrbracket}}
\knowledgenewrobustcmd{\CPDL}{\ensuremath{\cmdkl{\textup{CPDL\xspace}}}}
\knowledgenewrobustcmd{\PDL}{\ensuremath{\cmdkl{\textup{PDL\xspace}}}}
\knowledgenewrobustcmd{\ICPDL}{\ensuremath{\cmdkl{\textup{ICPDL}}}\xspace}
\knowledgenewrobustcmd{\loopCPDL}{\ensuremath{\cmdkl{\textup{loop-CPDL}}}\xspace}
\knowledgenewrobustcmd{\CPDLp}{\cmdkl{\ensuremath{\textup{CPDL}^{\!+}}}}
\knowledgenewrobustcmd{\CPDLg}[1]{\ensuremath{\cmdkl{\textup{CPDL}^{\!+}}\!}(#1)}
\knowledgenewrobustcmd{\iwidth}{\cmdkl{\textup{iw}}}
\knowledgenewrobustcmd{\Iwidth}{\cmdkl{\textup{IW}}}
\knowledgenewrobustcmd{\cqsize}[1]{\cmdkl{\textup{cw}}(#1)}
\knowledgenewrobustcmd{\Cqsize}[1]{\cmdkl{\textup{CW}}(#1)}
\knowledgenewrobustcmd{\Cqsizealt}{\cmdkl{\textup{CQ}}}
\knowledgenewrobustcmd{\dbracketaut}[1]{\cmdkl{\llbracket} #1 \cmdkl{\rrbracket}}
\knowledgenewrobustcmd{\dbracketnfa}[1]{\cmdkl{\llbracket} #1 \cmdkl{\rrbracket}}
\knowledgenewrobustcmd{\dbrackett}[1]{\cmdkl{\llbracket} #1 \cmdkl{\rrbracket}}
\knowledgenewrobustcmd{\indexAut}[1]{\cmdkl{i(}#1\cmdkl{)}}
\knowledgenewrobustcmd{\pebblesim}[1]{\mathrel{\cmdkl{\rightharpoonup_{#1}}}}
\knowledgenewrobustcmd{\notpebblesim}[1]{\mathrel{\cmdkl{{\not\rightharpoonup}_{#1}}}}
\knowledgenewrobustcmd{\pebblebisim}[1]{\mathrel{\cmdkl{\rightleftharpoons_{#1}}}}
\knowledgenewrobustcmd{\pebblequasibisim}[1]{\mathrel{\cmdkl{\rightharpoondown^{\!\!\leftrightarrow}_{#1}}}}
\knowledgenewrobustcmd{\notpebblequasibisim}[1]{\mathrel{\cmdkl{\not\rightharpoondown^{\!\!\leftrightarrow}_{#1}}}}
\knowledgenewrobustcmd{\notpebblebisim}[1]{\mathrel{\cmdkl{{\not\rightleftharpoons}_{#1}}}}
\knowledgenewrobustcmd{\pebblesimneg}[1]{\mathrel{\cmdkl{\rightharpoonup^\lnot_{#1}}}}
\knowledgenewrobustcmd{\mlsim}{\mathrel{\cmdkl{\rightharpoonup}}}
\knowledgenewrobustcmd{\mlbisim}{\mathrel{\cmdkl{\rightleftharpoons}}}
\knowledgenewrobustcmd{\pathl}{\cmdkl{\mathbf{P}_{\!l}}} 
\knowledgenewrobustcmd\subaut[3]{#1\cmdkl{[#2,#3]}}
\knowledgenewrobustcmd\bagmap{\cmdkl{\mathbf{v}}}
\knowledgenewrobustcmd\tagmap{\cmdkl{\mathbf{t}}}
\knowledgenewrobustcmd\tagmappath[1]{\cmdkl{\mathbf{t}[#1]}}
\newrobustcmd\tagmappathprime[1]{%
  \withkl{\kl[\tagmappath]}{%
    \cmdkl{\mathbf{t}'[#1]}%
  }%
}
\knowledgenewrobustcmd{\atom}[1]{\,\xrightarrow{\smash{#1}}\,}
\knowledgenewrobustcmd{\coatom}[1]{\,\xleftarrow{\smash{#1}}\,}
\knowledgenewrobustcmd{\atoms}[1]{\cmdkl{\textnormal{Atoms}}(#1)}
\knowledgenewrobustcmd{\contained}{\mathrel{\cmdkl{\subseteqq}}}
\newrobustcmd{\strcontained}{
  \mathrel{\withkl{\kl[\contained]}{\cmdkl{%
    \subsetneqq
  }}}
}
\suggestcommand\equiv{Use instead \semequiv for semantical equivalence.}
\knowledgenewrobustcmd{\semequiv}{\mathrel{\cmdkl{\LaTeXequiv}}} 
\knowledgenewrobustcmd{\langsemequiv}{\mathrel{\cmdkl{\LaTeXequiv}}} 
\knowledgenewrobustcmd{\lleq}{\mathrel{\cmdkl{\leqq}}}
\knowledgenewrobustcmd{\notlleq}{\mathrel{\cmdkl{\nleqq}}}
\knowledgenewrobustcmd{\lleqs}{\mathrel{\cmdkl{\lneqq}}}
\knowledgenewrobustcmd{\vars}{\cmdkl{\textit{vars}}} 
\knowledgenewrobustcmd{\UCtwoRPQ}{\cmdkl{\textnormal{UC2RPQ}}}
\newrobustcmd{\CtwoRPQ}{%
  \withkl{\kl[\UCtwoRPQ]}{\cmdkl{%
    \textnormal{C2RPQ}
  }}
}
\knowledgenewrobustcmd{\UCRPQSRE}{\ensuremath{\cmdkl{\textup{UCRPQ}(\textup{SRE})}}}
\newrobustcmd{\CRPQSRE}{%
  \withkl{\kl[\UCRPQSRE]}{\cmdkl{%
    \textup{CRPQ}(\textup{SRE})
  }}
}
\newcommand{\xrightarrowdbl}[2][]{%
  \xrightarrow[#1]{#2}\mathrel{\mkern-14mu}\rightarrow
}
\knowledgenewrobustcmd\surj{%
    \mathrel{\cmdkl{%
      \xrightarrowdbl{\textit{\tiny hom}}
    }}
}
\knowledgenewrobustcmd{\fun}{f}
\knowledgenewrobustcmd{\homto}{\mathrel{\cmdkl{\xrightarrow{\textit{\tiny hom}}}}} 
\knowledgenewrobustcmd{\class}{\mathcal{C}}
\knowledgenewrobustcmd{\Tw}[1][k]{\cmdkl{\textup{TW\!}_{#1\!}}}
\knowledgenewrobustcmd{\Refin}[1][]{\cmdkl{\textnormal{Ref}^{\smash{#1}}}}
\knowledgenewrobustcmd{\MUA}[2]{\cmdkl{\ensuremath{\textnormal{App}_{#2}(#1)}}}
\knowledgenewrobustcmd{\MUAHom}[2]{\cmdkl{\ensuremath{\textnormal{App}_{#2}^{\smash{\star}}(#1)}}}
\knowledgenewrobustcmd{\MUAHomBounded}[3]{\cmdkl{\ensuremath{\textnormal{App}_{#2}^{\smash{\star,#3}}(#1)}}}
\knowledgenewrobustcmd{\type}{\cmdkl{\textnormal{type}}}
\knowledgenewrobustcmd{\Qapp}{\cmdkl{\ensuremath{\textnormal{App}_{\Tw}^{\textup{zip}}(\gamma)}}}
\knowledgenewrobustcmd{\contract}[1]{\cmdkl{[}#1\cmdkl{]}}
\newcommand{\complexityclass}[1]{\textup{\textsc{#1}}\xspace}
\newcommand{\np}{\complexityclass{NP}}
\newcommand{\DP}{\complexityclass{DP}}
\newcommand{\ptime}{\complexityclass{PTime}}
\newcommand{\exptime}{\complexityclass{ExpTime}}
\newcommand{\wone}{\complexityclass{W[1]}}
\newcommand{\fpt}{\complexityclass{FPT}}
\newrobustcmd\pitwo{\ensuremath{\Pi^p_2}}
\newrobustcmd\sigmatwo{\ensuremath{\Sigma^p_2}}
\knowledgenewrobustcmd{\tw}{\ensuremath{\cmdkl{\text{tw}}}}
\newcommand{\lo}{{\rm loop}\xspace}
\knowledgenewrobustcmd{\mapcoord}{\cmdkl{\mapsto}}
\renewcommand{\phi}{\varphi}
\newtheorem{claim}{Claim}
\newcommand{\tup}[1]{\langle #1 \rangle}
\newtheorem{remark}{Remark}
\begin{document}
\sloppy
\ifarxiv
\title[PDL on Steroids]{PDL on Steroids: on Expressive Extensions of PDL with Intersection and  Converse}
\else
\title{PDL on Steroids: \\on Expressive Extensions of PDL \\with Intersection and  Converse}
\fi
\ifarxiv
    \author{Diego Figueira}
    \email{diego.figueira@cnrs.fr}
    \affiliation{%
      \institution{Univ. Bordeaux, CNRS,  Bordeaux INP, LaBRI, UMR 5800}
      \country{F-33400 Talence, France}
      \postcode{F-33400}
    }
    \author{Santiago Figueira}
    \email{santiago@dc.uba.ar}
    \affiliation{%
        \institution{Univ. Buenos Aires, FCEN, DC \& CONICET-UBA, ICC}
        \country{Argentina}}

      \author{Edwin Pin}
    \email{epin@dc.uba.ar}
      \affiliation{%
        \institution{Univ. Buenos Aires, FCEN, DM \& CONICET-UBA, ICC}
        \country{Argentina}}
    \renewcommand{\shortauthors}{Figueira, Figueira, and Pin}
\else
  \author{
    \IEEEauthorblockN{Diego Figueira} 
    \IEEEauthorblockA{Univ. Bordeaux, CNRS,  \\Bordeaux INP, LaBRI, UMR 5800 \\ F-33400 Talence, France} 
    \and 
    \IEEEauthorblockN{Santiago Figueira} 
    \IEEEauthorblockA{Univ. Buenos Aires, FCEN, DC \&\\CONICET-UBA, ICC\\Argentina}
    \and 
    \IEEEauthorblockN{Edwin Pin} 
    \IEEEauthorblockA{Univ. Buenos Aires, FCEN, DM \&\\CONICET-UBA, ICC\\Argentina}}
\fi

\ifarxiv
  \begin{abstract}
%
We introduce $\CPDLp$, a family of expressive logics rooted in Propositional Dynamic Logic ($\PDL$).
In terms of expressive power, $\CPDLp$ strictly contains $\PDL$ extended with intersection and converse (\aka~$\ICPDL$) as well as Conjunctive Queries (CQ), Conjunctive Regular Path Queries (CRPQ), or some known extensions thereof (Regular Queries and CQPDL). 
We investigate the expressive power, indistinguishability via bisimulations, satisfiability, and model checking for $\CPDLp$.

We argue that natural subclasses of $\CPDLp$ can be defined in terms of the "tree-width" of the underlying graphs of the formulas.
We show that the class of $\CPDLp$ formulas of "tree-width" 2 is equivalent to $\ICPDL$, and that it also coincides with $\CPDLp$ formulas of "tree-width" 1. However, beyond "tree-width" 2, incrementing the "tree-width" strictly increases the expressive power. We characterize the expressive power for every class of fixed "tree-width" formulas in terms of a bisimulation game with pebbles. Based on this characterization, we show that $\CPDLp$ has a tree-like model property.
We prove that the satisfiability problem is decidable in 2\exptime on fixed "tree-width" formulas, coinciding with the complexity of $\ICPDL$. 
We also exhibit classes for which satisfiability is reduced to \exptime.
Finally, we establish that the "model checking" problem for fixed "tree-width" formulas is in \ptime, contrary to the full class $\CPDLp$.

  \end{abstract}
  \maketitle
\else
  \IEEEoverridecommandlockouts
  \IEEEpubid{\makebox[\columnwidth]{979-8-3503-3587-3/23/\$31.00~
  \copyright2023 IEEE \hfill} \hspace{\columnsep}\makebox[\columnwidth]{ }}
  \maketitle
  \begin{abstract}
  
  \end{abstract}
\fi


\smallskip

\noindent
\raisebox{-.4ex}{\HandRight}\ \ This pdf contains internal links: clicking on a "notion@@notice" leads to its \AP ""definition@@notice"".\footnote{\url{https://ctan.org/pkg/knowledge}}
\ifarxiv
\smallskip

\noindent
\AP
\raisebox{-.4ex}{\HandRight}\ \ This manuscript is the full version of \cite{thispaper}.
\else

\noindent
\AP
\raisebox{-.4ex}{\HandRight}\ \ The ""full version"" of this article is available at \url{\fullverurl}.
\fi

\section{Introduction}

This paper can be seen as a proof of concept.
Since "Kripke structures" and "graph databases" (\ie, structures over a binary signature) are essentially the same: is it possible to amalgamate the expressive logics (for "Kripke structures") and query languages (for "graph databases") into one well-behaved framework? 

Our point of departure is Propositional Dynamic Logic, or $\PDL$, extended with intersection and converse. This is a well-studied logic, and amongst the most expressive decidable logics on "Kripke structures". But on the other hand, we want to be able to test for "conjunctive queries", or more generally "CRPQs", which are the basic building block for query languages for "graph databases". 
The outcome is an arguably natural and highly expressive logic, which inherits all good computational and model-theoretical behaviors of $\PDL$ and "CRPQs", which has the flavor of $\PDL$ but allows for richer ``conjunctive'' tests.
We name this logic ``$\CPDLp$''.\sidesanti{en estos párrafos de acá arriba dejé CRPQ}

$\PDL$ was originally conceived as a logic for reasoning about programs \cite{DBLP:journals/jcss/FischerL79}. However, variants of $\PDL$ are nowadays used in various areas of computer science, in particular in description logics, epistemic logics, program verification, or for querying datasets (see \cite{DBLP:journals/japll/Lange06,DBLP:conf/csl/GollerL06} for more applications).
One of the most studied extensions of $\PDL$ is the addition of converse navigation, "program intersection" and program complement. 
In particular, adding converse and "program intersection" operators results in a well-behaved logic, known as $\ICPDL$ (`I' for "intersection@program intersection", `C' for converse). $\ICPDL$ has decidable "satisfiability@satisfiability problem", polynomial time "model checking", and enjoys a ``tree-like'' model property \cite{DBLP:conf/csl/GollerL06}.
A simpler version of "program intersection" studied before is $\CPDL$ extended with "program looping@\loopCPDL", which can state that a program starts and ends at the same point, known as $\loopCPDL$, and it is strictly less expressive than $\ICPDL$.
However, adding the complement of programs results in a logic with a highly undecidable satisfiability problem \cite{harel2001dynamic,DBLP:conf/csl/GollerL06}, although its model checking is still polynomial time \cite{DBLP:journals/japll/Lange06}.

In this work we explore a family of logics which in particular generalizes "program intersection". 
The intuition is that the intersection of two $\PDL$ programs $\pi$ and $\pi'$ could be viewed as the conjunction of two atoms $R_\pi(x,y) \land R_{\pi'}(x,y)$ over the binary relations denoted by the programs $\pi$ and $\pi'$ (\ie, essentially a "conjunctive query"). Pursuing this idea, we introduce an extension of $\CPDL$ that we call $\CPDLp$ which allows for testing of any arbitrary number of atoms. For example, a formula can test for an $n$-clique of $\pi$-related elements with $\bigwedge_{1 \leq i < j \leq n} R_\pi(x_i,x_j)$. Further, these tests can be nested, composed, or iterated just like conventional $\PDL$ programs. 
The resulting logic seems appealing from an expressiveness point of view: it captures not only $\ICPDL$ but also several "graph database" query languages studied lately in the quest for finding well-behaved expressive query languages. These include "C2RPQ", "Regular Queries" and "CQPDL". See \Cref{fig:expressive-power} for a general idea of where $\CPDLp$ sits.\sidesanti{acá cambié a C2RPQ}

Further, subclasses of $\CPDLp$ can be naturally defined by restricting the allowed underlying graphs (\aka Gaifman graph) of these new kind of tests. Thus, $\CPDLg{\+G}$ is the restriction to tests whose underlying graphs are in the class of graphs $\+G$. In particular, for suitable (and very simple) classes we find $\ICPDL$ and $\loopCPDL$.

\begin{figure}
    \noindent
    \ifarxiv
        \includegraphics[width = .6 \textwidth]{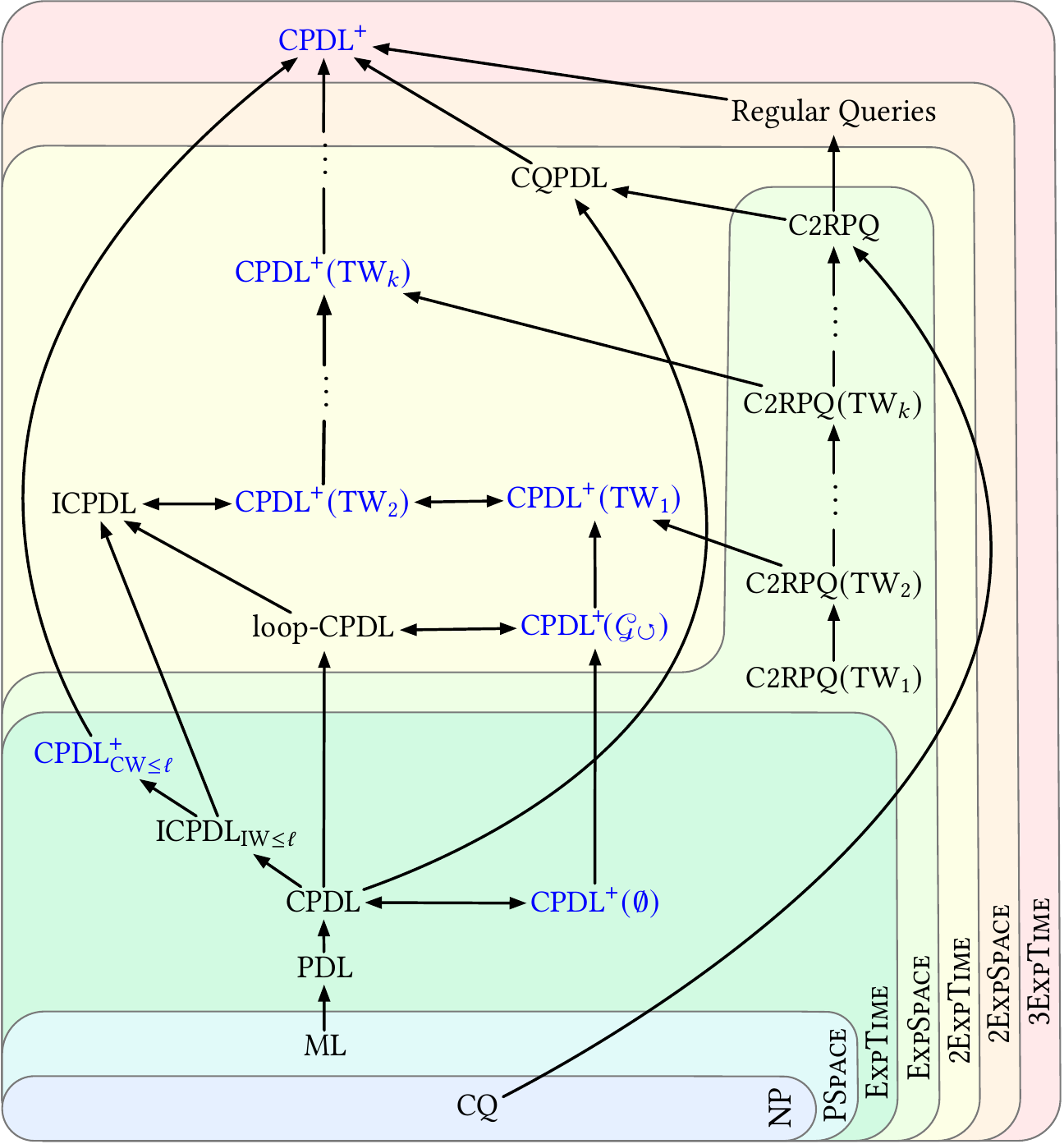}    
    \else
        \includegraphics[width = .47 \textwidth]{img/expressivity2.pdf}    
    \fi
    \caption{{\bf The landscape of expressiveness and complexity for PDL-like and CQ-like formalisms.} 
    { The arrow goes in the direction of the more expressive language, and all arrows are witnessed via polynomial time translations, except the one departing from "Regular Queries", which is exponential. For the query languages "CQ", "C2RPQ", "Regular Queries", and "CQPDL" we restrict to connected queries of arity 1 in order to be able to compare the expressive power. 
    By "C2RPQ"($\Tw$) we denote (connected) "C2RPQs" whose underlying graph has "tree-width" $\leq k$.
    $\ICPDL_{\Iwidth\leq\ell}$ for any $\ell$, is the set of all $\ICPDL$ "formulas" $\phi$ s.t.\ $\Iwidth(\phi) \leq \ell$, introduced in \cite{DBLP:journals/jsyml/GollerLL09} and defined in \Cref{sec:solving-omega-reg-sat}.
    $\CPDLp_{\Cqsizealt \leq \ell}$ for any $\ell$, is the set of all $\CPDLp$ "formulas" $\phi$ s.t.\ $\Cqsize{\phi} \leq \ell$, as defined in \Cref{sec:solving-omega-reg-sat}. The complexities correspond to the basic reasoning problem for each formalism: satisfiability for logics closed under negation, and containment for query languages of the "CQ"/"C2RPQ" family. In \color{blue}blue \color{black} we highlight the family of logics introduced in the present work.}
    }
    \label{fig:expressive-power}
\end{figure}


\subsection{Summary of results}

\paragraph{Expressivity}
As already mentioned, $\CPDLp$ contains some previously studied graph database query languages (see \Cref{fig:expressive-power}). Further, depending on the class $\+G$ of graphs $\CPDLg{\+G}$ can be identified with different previous formalisms. In particular, $\CPDL$ corresponds to $\CPDLg{\emptyset}$ on the empty class.

In \Cref{sec:relation-to-loop-icpdl} we show that, on the other hand, the logic $\loopCPDL$ can also be expressed as $\CPDLg{\Graphloop}$, where $\Graphloop$ is the class with just one "graph" having a self-looping vertex (\Cref{prop:loopCPDL-eq-CPDLgLoop}).
It is also easy to see that $\ICPDL$ and  $\CPDLg{\Graphcap}$ are "equi-expressive", where $\Graphcap$ is the class having just one "graph"  containing two vertices and one "edge" between them. What is more, we also show that $\ICPDL$ is "equi-expressive" to $\CPDLg{\Tw[2]}$. Hence, "formulas" and "programs" of $\CPDLg{\Tw[1]}$, $\CPDLg{\Tw[2]}$, $\CPDLg{\Graphcap}$, and $\ICPDL$ define the same unary and binary relations, respectively (\Cref{thm:ICPDL_equals_TW1_equals_TW2}).

However, as we show in \Cref{sec:caracterization}, beyond $\Tw[2]$ the expressive power starts increasing in an infinite hierarchy, where $\CPDLg{\Tw[k]}$ is strictly contained in $\CPDLg{\Tw[k+1]}$ in terms of expressive power, for every $k \geq 2$ (\Cref{thm:separation}).

\paragraph{Indistinguishability}
We characterize model-indis\-tin\-gui\-shability for $\CPDLg{\Tw[k]}$ in terms of a $k$-pebble local consistency game. That is, we study the conditions under which a pair of models cannot be distinguished by a "formula" or "program" from $\CPDLg{\Tw[k]}$.
In Section~\ref{sec:caracterization} we characterize both the indistinguishability via "positive" "formulas" (\Cref{thm:sim-char}) or arbitrary "formulas"  (\Cref{them:bisim-charac}).
These notions correspond, in some sense, to the lifting of the simulation relation from Modal Logics to $\CPDLg{\Tw[k]}$ \ifarxiv (though we defer this connection to \Cref{sec:MLbisim}). \else . \fi
We show that $\CPDLg{\Tw}$ enjoys the "$\Tw$-model property": if a "formula" of $\CPDLg{\Tw}$ is satisfiable, it is satisfiable in a "model@Kripke structure" of "tree-width" $k$ (\Cref{cor:treewidth-k-model-property}).

\paragraph{Satisfiability}
In \Cref{sec:sat} we study the "satisfiability problem", that is, whether for a given formula there exists a "model@Kripke structure" and "element@world" thereof which satisfy it. We show that, for any $k$, $\CPDLg{\Tw[k]}$ is decidable in $2$\exptime, \ie, the same complexity as $\ICPDL$ and $\loopCPDL$ \cite{DBLP:journals/jsyml/GollerLL09}. We also identify a hierarchy, based on what we call "conjunctive width" instead of "tree-width", so that any class of bounded "conjunctive width" is decidable in \exptime, \ie, the complexity of $\CPDL$.
For the unrestricted logic, namely $\CPDLp$, we obtain a $3$\exptime upper bound (\Cref{thm:sat-cpdlp}). These results exploit the "bounded tree-width model property@$\Tw$-model property" of $\CPDLp$ above.

\paragraph{Model checking}
Finally, we argue in \Cref{sec:modelchecking} that "model checking" on finite "structures@@kripke" is still in polynomial time for any class $\CPDLg{\Tw[k]}$ of bounded "tree-width", unlike the unrestricted logic $\CPDLp$ (\Cref{thm:modelchecking}).

All missing proofs are contained in the \ifarxiv
appendix.
\else
"full version".
\fi

\subsection{Related logics}

As mentioned earlier, $\CPDLp$ can be seen as an ``umbrella logic'' capturing both expressive extensions of $\PDL$, such as $\ICPDL$, and  expressive extensions of "conjunctive queries" studied in the context of "graph databases". We next detail some query languages related to $\ICPDLp$.

\AP
A ""graph database"" is often abstracted as a finite, edge-labeled graph, where labels come from some finite alphabet of relations~(see \eg~\cite{DBLP:conf/pods/Baeza13} for more details), but via trivial adaptations one can use "graph@graph database" query languages on "Kripke structures", or vice-versa.

\AP
""Conjunctive queries"", or \reintro{CQs}, are conjunctions of relational atoms, closed under projection. For example, the "CQ" ``$q(x,x,y) = R(x,y) \land S(y,z) \land T(z,x)$'' returns, on an edge-labelled graph, all triplets of vertices $(u,u,v)$ such that $u,v$ participates in an $R$-$S$-$T$-cycle.

\AP
In analogy to "conjunctive queries", the class of ""conjunctive regular path queries"", or \reintro{CRPQs}, allow for atoms of the form $x \xrightarrow{L} y$, where $L$ is a regular expression over the alphabet of the relations, with the semantics that there exists a path from $x$ to $y$ reading a word of relation names in $L$ \cite{DBLP:conf/kr/CalvaneseGLV00}. This language is often considered as the backbone of  any reasonable graph query language \cite{DBLP:conf/pods/Baeza13}. The extension of 
\AP
""conjunctive two-way 
regular path queries"", or ""C2RPQs"" allows $L$ to be built over the alphabet augmented with the converse of the relations.

\AP
""Regular queries"" \cite{DBLP:journals/mst/ReutterRV17}, also known as \emph{Nested Positive 2RPQ}~\cite{DBLP:conf/dlog/BourhisKR14}, can be seen as the closure under Kleene star, union and converse of "C2RPQs".Its containment problem  has been shown to be decidable in elementary time, contrary to some more ambitious generalizations of "C2RPQs"  \cite{DBLP:conf/pods/RudolphK13,DBLP:conf/dlog/BourhisKR14}.

\AP
Perhaps the language closest in spirit to this work is ""CQPDL"" \cite{DBLP:conf/lics/BenediktBB16}, denoting "conjunctive queries" whose atoms relations can be defined by $\CPDL$ "programs". It subsumes "C2RPQs" in expressive power, but it is incomparable with "Regular queries".


\section{Preliminaries}\label{sec:prelim}
We use the standard notation $\bar u$ to denote a tuple of elements from some set $X$ and $\bar u[i]$ to denote the element in its $i$-th component. 
\AP
If $\bar q=(q_1, \dotsc, q_k)$ is a tuple and $1\leq i\leq k$, by $\bar q[i\intro*\mapcoord r]$ we denote the tuple $(q_1, \dotsc, q_{i-1}, r, q_{i+1}, \dotsc, q_k)$.
For relations $R,S\subseteq X\times X$ we define the ""composition"" $R\circ S:=\set{(u,v)\mid \exists w. uRw, wSv}$.
For a function $f: X \to Y$, we often write $f(\bar u)$ to denote $(f(\bar u[1]), \dotsc, f(\bar u[k]))$, where $k$ is the dimension of $\bar u$. 
\AP
By $\intro*\pset{X}$ we denote the set of subsets of $X$.

\AP
A (simple, undirected) ""graph"" is a tuple $G=(V,E)$ where $V$ is a set of ""vertices"" and $E$ is a set of ""edges"", where each "edge" is a set of vertices of size at most two. We refer to $V$ and $E$ by $V(G)$ and $E(G)$, respectively. 
We henceforth assume that "graphs" have a finite set of "vertices" unless otherwise stated. 
\AP
A ""tree"" is a connected "graph" with no cycles (in particular no edges of the form $\set{v}$).

\AP
A ""tree decomposition"" of a "graph" $G$ is a
pair $(T, \intro*\bagmap)$ where $T$ is a "tree" and $\bagmap: V(T) \to \pset{V(G)}$ is a function that associates to each node of $T$, called ""bag"",
a set of vertices of $G$. When $x \in \bagmap(b)$ we shall say that the bag $b \in V(T)$
""contains@@tw"" the vertex $x$. 
\AP
Further, it must satisfy the following three properties:
    ""A@@treedec"". each vertex $x$ of $G$ is "contained@@tw" in at least one "bag" of $T$;
    ""B@@treedec"". for each edge of $G$ between $x$ and $y$, there is at least one "bag" of $T$
        that "contains@@tw" both $x$ and $y$; and 
    ""C@@treedec"".
    for each vertex $x$ of $G$, the set of bags of $T$ "containing@@tw" $x$ is a 
        connected subtree of $V(T)$.

The "tree decomposition" of an infinite "graph" is an infinite tree and mapping satisfying the conditions "A@@treedec", "B@@treedec", "C@@treedec" above.
\AP
The ""width"" of $(T, \bagmap)$ is the maximum (or supremum if $T$ is infinite) of $|\bagmap(b)|-1$ when $b$ ranges over
$V(T)$. The ""tree-width"" of $G$, notated $\intro*\tw(G)$, is the minimum of the "width" of all "tree decompositions" of $G$.
\AP 
We denote by $\intro*\Tw$ the set of all (finite) "graphs" of "tree-width" at most $k$. See Figure \ref{fig:example_ugraph} (B and C) for "graphs" of different "tree-width" and Figure \ref{fig:example_treedecomp} for an example of a "tree decomposition".

\AP
Let $\intro*\Prop$, $\intro*\Prog$, and $\intro*\Vars$ be countably infinite, pairwise disjoint, sets of ""atomic propositions"", ""atomic programs"", and ""variables"", respectively. 
\AP
A ""Kripke structure"" is a tuple 
\begin{align}
    K = (X,\set{\rightarrow_a \mid a \in \Prog}, \set{X_p \mid p \in \Prop})    
    \label{eq:Kripke}
\end{align}
where $X$ is a set of ``""worlds""'', ${\rightarrow_a} \subseteq X \times X$ is a transition relation for each $a \in \Prog$, and $X_p \subseteq X$ is a unary relation for each $p \in \Prop$. 
\AP
The ""distance"" between two "worlds" of $K$ is simply the length of the shortest path in the underlying undirected "graph" of $K$.
\AP
We say that $K$ is of ""finite degree"" if every "world" of $K$ has a finite number of `neighbors' at "distance" $1$.
\AP
We use $\intro*\worlds K$ to denote the set $X$ of "worlds" of $K$.
\AP
A ""homomorphism"" from a "structure@@kripke" $K=(X,\set{{\rightarrow_a} \mid a \in \Prog}, \set{X_p \mid p \in \Prop})$ to another "structure@@kripke" $K'=(X',\set{{\rightarrow_a'} \mid a \in \Prog}, \set{X'_p \mid p \in \Prop})$ is a function $f : X \to X'$ such that for every $w,w' \in X$ we have that $w \in X_p$ implies $f(w) \in X'_p$ and $(w,w') \in {\rightarrow_a}$ implies $(f(w),f(w')) \in {\rightarrow'_a}$, for every $p \in \Prop$ and $a \in \Prog$. 
\AP
A function $f : \hat X \to X'$ with $\hat X \subseteq X$ is a ""partial homomorphism"" if it is a "homomorphism" from $\hat K$ to $K'$, where $\hat K$ is the substructure of $K$ induced by $\hat X$.
\AP
Let $\intro*\kHoms(K,K')$ be the set of all $(\bar u, \bar v) \in \worlds{K}^k \times \worlds{K'}^k$ such that $\set{\bar u[i] \mapsto \bar v[i] \mid 1 \leq i \leq k}$ is a "partial homomorphism" from $K$ to $K'$. 
We will often abuse notation and write $(\bar u', \bar v') \in \kHoms(K,K')$ for $\bar u', \bar v'$ of dimension $k' < k$  to denote $(\bar u', \bar v') \in \kHoms[k']$.
\AP
The ""tree-width@@kripke"" of a "Kripke structure" is the "tree-width" of its underlying (possibly infinite) "graph".

\section{PDL and its extensions}

\paragraph{$\intro*\CPDL$}
\AP
""Expressions"" of $\CPDL$ can be either
""formulas"" $\phi$ or ""programs"" $\pi$, defined by the following grammar, where $p$ ranges over $\Prop$ and $a$ over $\Prog$:
\begin{align*}
    \phi &\eqqdef p \mid \lnot \phi \mid \phi\land\phi \mid \tup{\pi} \\
    \pi &\eqqdef \epsilon \mid a \mid \bar a \mid \pi \cup \pi \mid \pi¿ \circ \pi \mid \pi^* \mid \phi? 
\end{align*}
\AP
We define the semantics of "programs" $\intro*\dbracket{\pi}_K$ and of "formulas" $\intro*\dbracket{\phi}_K$ in a "Kripke structure" $K$ as \eqref{eq:Kripke}, where  $\dbracket{\pi}_K\subseteq X\times X$ and $\dbracket{\phi}_K\subseteq X$, as follows:
\begin{align*}
    \dbracket{p}_K \eqdef{}& X_p \text{ for $p \in \Prop$}, \qquad
    \dbracket{\lnot\phi}_K \eqdef{}  X\setminus \dbracket{\phi}_K,\\
    \dbracket{\phi_1\land\phi_2}_K \eqdef{}&  \dbracket{\phi_1}_K\cap\dbracket{\phi_2}_K,\\
    \dbracket{\tup{\pi}}_K \eqdef{}&  \{u\in X\mid \exists v\in X.(u,v)\in\dbracket{\pi}_K\},\\
    \dbracket{\epsilon}_K \eqdef{}&  \{(u,u) \mid u\in X\}, \qquad
    \dbracket{a}_K \eqdef{} {\rightarrow_a} \ ,\\
    \dbracket{\bar a}_K \eqdef{}&  \{(v,u)\in X^2\mid (u\rightarrow_av)\},\\
    \dbracket{\pi_1\star\pi_2}_K \eqdef{}&  \dbracket{\pi_1}_K\star\dbracket{\pi_2}_K \text{ for $\star \in \set{\cup,\circ}$},\\
    \dbracket{\pi^*}_K \eqdef{}&  \mbox{the reflexive transitive closure of $\dbracket{\pi}_K$},\\
    \dbracket{\phi?}_K \eqdef{}&  \{(u,u) \mid u\in X,u\in\dbracket{\phi}_K\}.
\end{align*}
We write $K,u\models\phi$ for $u\in\dbracket{\phi}_K$ and $K,u,v\models\pi$ for $(u,v)\in\dbracket{\pi}_K$ 
\AP
and we write $\phi_1\intro*\semequiv\phi_2$ \resp{$\pi_1\reintro*\semequiv\pi_2$} 
if $\dbracket{\phi_1}_K=\dbracket{\phi_2}_K$ \resp{$\dbracket{\pi_1}_K=\dbracket{\pi_2}_K$} 
\AP
for every "structure@@kripke" $K$, in which case we say that $\phi_1,\phi_2$ \resp{$\pi_1,\pi_2$} are ""equivalent"".
\AP
$\intro*\PDL$ is the fragment $\CPDL$ "expressions" which do not use "atomic programs" of the form $\bar a$.

\paragraph{$\intro*\ICPDL$}
\AP
$\ICPDL$ is defined by $\CPDL$ plus ""program intersection""
$\pi \eqqdef \pi \cap \pi$
with semantics defined by
$    \dbracket{\pi_1\cap\pi_2}_K \eqdef \dbracket{\pi_1}_K\cap\dbracket{\pi_2}_K.
$

\paragraph{$\intro*\loopCPDL$}
\AP
$\loopCPDL$ is defined by $\CPDL$ plus
$    \phi  \eqqdef  \lo(\pi) $
with semantics defined by
    $\dbracket{\lo(\pi)}_K  \eqdef \{x \mid (x,x)\in \dbracket{\pi}_K \}.$

\paragraph{Adding steroids}
\AP
An ``""atom""'' is an expression of the form $\pi(x,x')$, where $\pi$ is a {\CPDLp} "program" and $x,x' \in \Vars$. 
\AP
For an "atom" $\pi(x,x')$ we define $\intro*\vars(\pi(x,x')) \eqdef \set{x,x'}$, and for a set of "atoms" $C$ we define $\intro*\vars(C) \eqdef \bigcup_{A \in C} \vars(A)$.

We define $\intro*\CPDLp$ as an extension of $\CPDL$ allowing also "programs" of the form 
    \[\pi \eqqdef C[x_s,x_t]\]
where:
\AP
(1) $C$ is a finite set of "atoms";  
(2) $x_s,x_t\in\vars(C)$\footnote{Note that $x_s$ and $x_t$ may be equal or distinct "variables".};
and
(3) the ""underlying graph@@C"" 
$\intro*\uGraphC{C}$ of $C$ is connected, where $\uGraphC{C}$ is defined as $V(\uGraphC{C})=\vars(C)$, and $E(\uGraphC{C}) = \set{\vars(A) \mid A \in C}$.\footnote{The fact that we restrict our attention to ``connected'' "conjunctive programs" is unessential for any of our results, but it will allow us to easily relate to previously defined logics, such as $\ICPDL$ and $\loopCPDL$.}
\AP
We call these "programs" ""conjunctive programs"".
Observe that $\set{x_s,x_t}\subseteq{\vars(C)}$, and hence  we also define $\intro*\vars(C[x_s,x_t]) \eqdef {\vars(C)}$.

\AP
A function $f : \vars(C) \to \worlds{K}$, is a ""$C$-satisfying assignment"" if $(f(x),f(x')) \in \dbracket{\pi'}_K$ for every "atom" $\pi'(x,x')\in C$.
The semantics $\dbracket{C[x_s,x_t]}_K$ of a "conjunctive program" on a "Kripke structure" $K$ is the set of all pairs $(w_s,w_t) \in \worlds{K} \times \worlds{K}$ such that $f(x_s) = w_s$ and $f(x_t) = w_t$ for some "$C$-satisfying assignment" $f$.

\AP
For a class $\+G$ of "graphs" we define $\intro*\CPDLg{\+G}$ as the fragment of $\CPDLp$ whose "programs" have one of the shapes allowed in $\+G$. 
Formally, for any "conjunctive program" $\pi = C[x_s,x_t]$ we consider the ""underlying graph"" $\intro*\uGraph{C[x_s,x_t]}$ of $C[x_s,x_t]$ as having $V(\uGraph{C[x_s,x_t]}) = V(\uGraphC{C})$ and $E(\uGraph{C[x_s,x_t]}) = \set{\set{x_s,x_t}} \cup E(\uGraphC{C})$. 
Observe that $x_s$ and $x_t$ are always connected via an "edge" in $\uGraph{C[x_s,x_t]}$ but not necessarily in $\uGraphC{C}$. $\CPDLg{\+G}$ is the fragment of $\CPDLp$ whose only allowed "conjunctive programs" are of the form $C[x_s,x_t]$ where $\uGraph{C[x_s,x_t]}\in\+G$. See Figure \ref{fig:example_ugraph} for an example of "underlying graphs".

\begin{figure}
\ifarxiv
    \includegraphics[width=.7\textwidth]{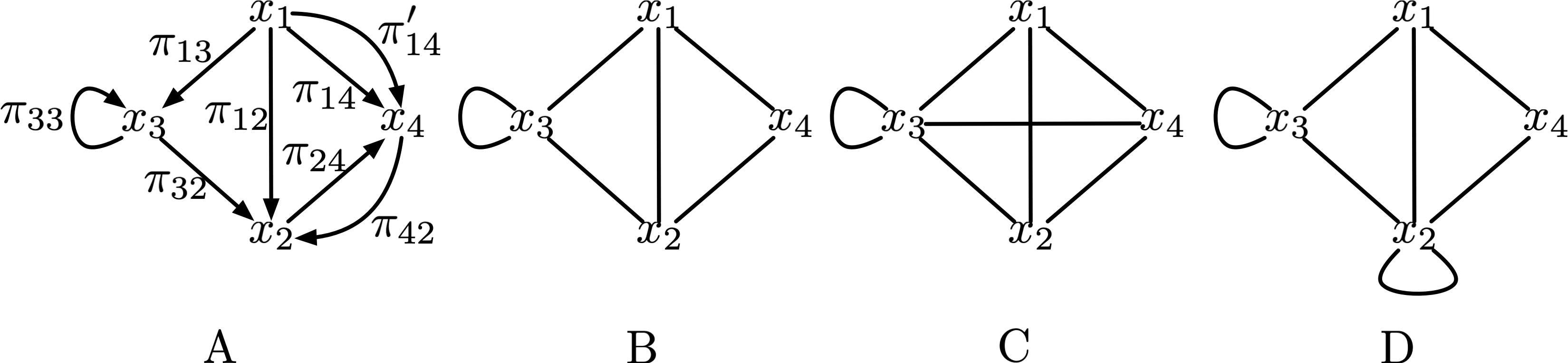}
\else
    \includegraphics[width=.47\textwidth]{img/ugraphs.png}
\fi
\caption{A. graphical representation of the "atoms" contained in the set 
$C=\{
\pi_{13}(x_1,x_3),$ $
\pi_{12}(x_1,x_2),$ $
\pi_{14}(x_1,x_4),$ $
\pi'_{14}(x_1,x_4),$ $
\pi_{33}(x_3,x_3),$ $
\pi_{32}(x_3,x_2),$ $
\pi_{24}(x_2,x_4),$ $
\pi_{42}(x_4,x_2)
\}$; B. $\uGraph{C[x_1,x_2]}$; C. $\uGraph{C[x_3,x_4]}$; D. $\uGraph{C[x_2,x_2]}$. Observe that $\uGraph{C[x_1,x_2]},\uGraph{C[x_2,x_2]}\in\Tw[2]$ but $\uGraph{C[x_3,x_4]}\in\Tw[3]\setminus\Tw[2]$.
}
\label{fig:example_ugraph}
\end{figure}

\AP
We say that an "expression" of $\CPDLg{\+G}$ is ""positive"" if it does not contain any subformula of the form $\lnot \psi$.

\AP
Sometimes we will need to use $\CPDLp$ extended with "program intersection" ($\cap$), that we denote by $\intro*\ICPDLp$. The semantics of $\pi_1\cap\pi_2$ is that of $\ICPDL$. Observe that any expression of $\ICPDL$ is also an expression of $\ICPDLp$ (with no "conjunctive programs") with equal semantics; hence $\ICPDLp$ extends $\ICPDL$.
\AP
Analogously, $\intro*\ICPDLg{\+G}$ is defined as $\CPDLg{\+G}$ extended with "program intersection". 
\begin{proposition}\label{prop:intersecion_does_not_add_expressive_power}
 $\ICPDLp$ and $\CPDLp$ are equivalent in terms of expressive power. Furthermore, for any class $\+G$ containing the graph which consists of an edge between two distinct nodes we have that $\ICPDLg{\+G}$ and $\CPDLg{\+G}$ are also equivalent in terms of expressive power.
\end{proposition}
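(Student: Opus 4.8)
The plan is to show that program intersection $\pi_1 \cap \pi_2$ can be simulated by a conjunctive program, so that every $\ICPDLp$ expression can be rewritten into an equivalent $\CPDLp$ expression (and similarly for the bounded-graph variants). The key observation is semantic: by definition $\dbracket{\pi_1 \cap \pi_2}_K = \dbracket{\pi_1}_K \cap \dbracket{\pi_2}_K$ consists of all pairs $(u,v)$ such that \emph{both} $(u,v) \in \dbracket{\pi_1}_K$ and $(u,v) \in \dbracket{\pi_2}_K$. This is exactly what a two-atom conjunctive program expresses: take $C = \{\pi_1(x,y), \pi_2(x,y)\}$ with source $x$ and target $y$. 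Then a $C$-satisfying assignment $f$ requires $(f(x),f(y)) \in \dbracket{\pi_1}_K$ and $(f(x),f(y)) \in \dbracket{\pi_2}_K$, so $\dbracket{C[x,y]}_K = \dbracket{\pi_1}_K \cap \dbracket{\pi_2}_K = \dbracket{\pi_1 \cap \pi_2}_K$. Crucially, the underlying graph $\uGraph{C[x,y]}$ consists of exactly two distinct vertices $x,y$ with a single edge between them.

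The main argument is then a structural induction on $\ICPDLp$ expressions. First I would define a translation that recursively replaces every subprogram of the form $\pi_1 \cap \pi_2$ by $\{\hat\pi_1(x,y), \hat\pi_2(x,y)\}[x,y]$, where $\hat\pi_1, \hat\pi_2$ are the (already translated) versions of $\pi_1, \pi_2$, and leaves all other constructors (including existing conjunctive programs, $\epsilon$, $a$, $\bar a$, $\cup$, $\circ$, ${}^*$, $\phi?$, and the Boolean/diamond operators on formulas) untouched apart from recursing into their subexpressions. The inductive invariant is that the translation preserves semantics in every $K$; the only nontrivial inductive case is $\cap$, which is handled by the semantic identity above, and all remaining cases follow immediately since the constructors have identical semantics in $\ICPDLp$ and the translated expression.

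For the second statement, the same translation works, but I must check that it stays inside $\CPDLg{\+G}$ whenever we start in $\ICPDLg{\+G}$. The subtlety is that the translation introduces new conjunctive programs whose underlying graph is the single-edge graph on two distinct vertices; this is precisely why the hypothesis requires $\+G$ to contain that graph. So I would verify that: (i) conjunctive programs already present in the input have underlying graphs in $\+G$ and are preserved; and (ii) each newly introduced program $\{\hat\pi_1(x,y),\hat\pi_2(x,y)\}[x,y]$ has underlying graph equal to the two-vertex single-edge graph, which lies in $\+G$ by assumption. Hence the translated expression is a legitimate $\CPDLg{\+G}$ expression.

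I expect the main obstacle to be bookkeeping rather than conceptual depth: since both logics share all the $\CPDLp$ constructors, the content is entirely in the single identity $\dbracket{\{\pi_1(x,y),\pi_2(x,y)\}[x,y]}_K = \dbracket{\pi_1 \cap \pi_2}_K$ together with tracking the underlying-graph condition through the induction. One point that deserves care is that $x,y$ are chosen to be \emph{distinct} fresh variables, so that the underlying graph is genuinely the two-node single-edge graph (rather than a single looped vertex), matching the hypothesis on $\+G$; using the same variable for source and target would instead produce a loop and would not suffice.
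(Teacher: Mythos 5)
Your proof is correct and follows exactly the paper's approach: the paper's entire proof is the one-line observation that $\pi_1\cap\pi_2$ is equivalent to $\set{\pi_1(x,y),\pi_2(x,y)}[x,y]$, which is precisely your key identity; the recursive translation and the check that the new conjunctive programs have the two-vertex single-edge underlying graph (hence stay in $\CPDLg{\+G}$) are the routine bookkeeping the paper leaves implicit.
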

\begin{proof}
$\pi_1\cap\pi_2$ is "equivalent" to $\set{\pi_1(x,y),\pi_2(x,y)}[x,y]$.
\end{proof}
\AP
For $e$ an $\ICPDLp$ "expression", let $\intro*\subexpr(e)$ be the set of all ""subexpressions"" of $e$ in the usual sense\ifarxiv ~(definition in \S\ref{app:subexpressions}). \else . \fi

\begin{toappendix}
\subsection{Definition of subexpressions}
\label{app:subexpressions}
For $e$ an $\ICPDLp$-"program" or an $\ICPDLp$-"formula", let $\intro*\subexpr(e)$ be the set of all ""subexpressions"" of $e$, that is, the smallest set satisfying the following:
\begin{itemize}
    \item $e \in \subexpr(e)$,
    \item if $\psi \land \psi' \in \subexpr(e)$ then $\set{\psi, \psi'} \subseteq \subexpr(e)$
    \item if $\lnot\psi \in \subexpr(e)$ then $\psi \in \subexpr(e)$
    \item if $\tup{\pi} \in \subexpr(e)$, then $\pi \in \subexpr(e)$
    \item if $\pi \star \pi' \in \subexpr(e)$, then $\set{\pi,\pi'} \subseteq \subexpr(e)$ for every $\star \in \set{\circ,\cup,\cap}$,
    \item if $\pi^* \in \subexpr(e)$, then $\pi \in \subexpr(e)$,
    \item if $\psi? \in \subexpr(e)$, then $\psi \in \subexpr(e)$,
    \item if $C[x_s,x_t] \in \subexpr(e)$ and $\pi(x,y) \in C$, then $\pi \in \subexpr(e)$.
\end{itemize}
\end{toappendix}





\newcommand{\twtcpdl}{\CPDLg{\Tw[2]}}
\newcommand{\twocpdl}{\CPDLg{\Tw[1]}}
\newcommand{\twncpdl}{\CPDLg{\Tw[n]}}

\newcommand{\tr}{{\rm Tr}\xspace}

\section{Relation to $\loopCPDL$ and $\ICPDL$}
\label{sec:relation-to-loop-icpdl}
For logics $\+L_1$ and $\+L_2$ with expressions consisting of "formulas" and "programs", we say that $\+L_2$ is ""at least as expressive as"" $\+L_1$, notated $\+L_1{\intro*\lleq}\+L_2$, 
\phantomintro{\notlleq}%
if there is a translation 
mapping $\+L_1$-"formulas" to $\+L_2$-"formulas" and $\+L_1$-"programs" to $\+L_2$-"programs" which preserves "equivalence".
\AP
We write $\+L_1 \intro*\lleqs \+L_2$ to denote that $\+L_1 \lleq \+L_2$ and $\+L_2 \not\lleq \+L_1$. We write $\+L_1 \intro*\langsemequiv \+L_2$ to denote that $\+L_1 \lleq \+L_2$ and $\+L_2 \lleq \+L_1$, in which case we say that $\+L_1$ and $\+L_2$ are ""equi-expressive"".

\AP
Let $\intro*\Graphloop \eqdef \set{G}$ be the class having just one "graph" $G=(\set{v},\set{\set{v}})$ consisting of just one self-looping "edge", and
\AP
$\intro*\Graphcap \eqdef \set{G}$ be the class having just one "graph" $G = (\set{v,v'}, \set{\set{v,v'}})$  containing one "edge". Observe that $\Graphloop$ and $\Graphcap$ are contained in $\Tw[1]$.

We next show that $\loopCPDL\langsemequiv\CPDLg{\Graphloop}$, and $\ICPDL\langsemequiv\CPDLg{\Graphcap}\langsemequiv\CPDLg{\Tw[1]}\langsemequiv\CPDLg{\Tw[2]}$ via polynomial time translations.

\subsection{$\loopCPDL$ and $\CPDLg{\Graphloop}$ are "equi-expressive"}
The translation from $\loopCPDL$ to $\CPDLg{\Graphloop}$ goes by simply recursively replacing every occurrence of $\lo(\pi)$ such that $\pi$ which does not use loop, with $\{\pi(x,x)\}[x,x]$\ifarxiv ~(details in \Cref{sec:from_loopCPDL}). \else . \fi

The translation in the other direction, from $\CPDLg{\Graphloop}$ to $\loopCPDL$, goes by first observing that every "conjunctive program" is of the form $\set{\pi_1(x,x),\dotsc, \pi_n(x,x)}[x,x]$. If the "program" is  so that no $\pi_i$ contains a "conjunctive program", then it is equivalent to $(\lo(\pi_1) \land \dotsb \land \lo(\pi_n))?$. Hence, a recursive replacement yields an equivalent $\loopCPDL$ "expression"\ifarxiv ~(details in \Cref{sec:to_loopCPDL}).\else . \fi

From these translations we then obtain:
\begin{proposition}\label{prop:loopCPDL-eq-CPDLgLoop}
    $\loopCPDL$ and $\CPDLg{\Graphloop}$ are "equi-expressive", via polynomial time translations.
\end{proposition}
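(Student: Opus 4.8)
The plan is to establish the two inequalities $\loopCPDL \lleq \CPDLg{\Graphloop}$ and $\CPDLg{\Graphloop} \lleq \loopCPDL$ separately, in each case exhibiting a syntax-directed translation that I verify by induction on the structure of "expressions" and whose output size I bound linearly in the input (so both translations run in polynomial, in fact linear, time).

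For $\loopCPDL \lleq \CPDLg{\Graphloop}$, the key observation is the semantic identity
\[
\dbracket{\lo(\pi)}_K = \{u \mid (u,u)\in\dbracket{\pi}_K\} = \dbracket{\tup{\{\pi(x,x)\}[x,x]}}_K,
\]
which holds because the "conjunctive program" $\{\pi(x,x)\}[x,x]$ relates a "world" $u$ only to itself, and only when $(u,u)\in\dbracket{\pi}_K$, so prefixing the diamond $\tup{\cdot}$ recovers exactly the set of such $u$. Since the single "atom" $\pi(x,x)$ yields the "underlying graph@@C" $(\{x\},\{\{x\}\})$, this "program" indeed lies in $\CPDLg{\Graphloop}$. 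I would define the translation $\tr$ homomorphically over every $\CPDL$ constructor, the only interesting clause being the replacement of each $\lo(\pi)$ by $\tup{\{\tr(\pi)(x,x)\}[x,x]}$, where $\tr(\pi)$ is already loop-free by the recursion. Correctness is then a routine induction on "subexpressions" using the displayed identity, and each $\lo$ contributes only constant local overhead.

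For the converse $\CPDLg{\Graphloop}\lleq\loopCPDL$, I would exploit the shape constraint imposed by $\Graphloop$: any "conjunctive program" whose "underlying graph" is a single self-looping "vertex" must have $\vars(C)=\{x\}$ and $x_s=x_t=x$, hence be of the form $\set{\pi_1(x,x),\dotsc,\pi_n(x,x)}[x,x]$. Its semantics is the set of pairs $(u,u)$ with $(u,u)\in\dbracket{\pi_i}_K$ for every $i$, which matches the "test@test transitions" program $(\lo(\pi_1)\land\dotsb\land\lo(\pi_n))?$ provided the $\pi_i$ are already free of "conjunctive programs". The translation therefore proceeds bottom-up, so that when an outermost "conjunctive program" is reached each $\pi_i$ has already been rewritten into $\loopCPDL$ and the local replacement applies; correctness is again an induction on $\subexpr$, and the size of the output is proportional to the sum of the sizes of the $\pi_i$ plus the $n$ added $\lo$/conjunction symbols, hence linear.

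The main obstacle — and the only place requiring real care — is the bookkeeping for nested "conjunctive programs" in the second direction: the replacement rule is only directly justified once the component "programs" contain no "conjunctive programs", so the induction must be arranged to process innermost occurrences first, with the inductive hypothesis asserting "semantical equivalence" at every intermediate stage. I expect the polynomial-time bound itself to be immediate once each translation is framed as a single structural pass in which every constructor is handled by a rule of constant local overhead and no "subexpression" is duplicated.
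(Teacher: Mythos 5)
Your proposal is correct and follows essentially the same route as the paper: a homomorphic translation replacing $\lo(\pi)$ by (the diamond of) $\{\pi(x,x)\}[x,x]$ in one direction, and replacing $\set{\pi_1(x,x),\dotsc,\pi_n(x,x)}[x,x]$ by $(\lo(\pi_1)\land\dotsb\land\lo(\pi_n))?$ in the other, with the same innermost-first treatment of nested conjunctive programs. The only cosmetic difference is that you explicitly wrap the first translation in $\tup{\cdot}$ to turn the program into a formula, a step the paper's appendix leaves implicit.
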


\begin{toappendix}
    
\subsection{Translation from $\loopCPDL$ to $\CPDLg{\Graphloop}$}\label{sec:from_loopCPDL}

We define the following translation $\tr_1$ from $\loopCPDL$-"formulas" to $\CPDLg{\Graphloop}$-"formulas" and from 
$\loopCPDL$-"programs" to $\CPDLg{\Graphloop}$-"programs":
\begin{align}
\tr_1(\star)&=\star&\star\in\Prop\cup\{\epsilon\}\nonumber\\
\tr_1(\varphi\land\psi)&=\tr_1(\varphi)\land\tr_1(\psi)\nonumber\\
\tr_1(\lnot\varphi)&=\lnot\tr_1(\varphi)\nonumber\\
\tr_1(\langle\pi\rangle)&=\langle\tr_1(\pi)\rangle\nonumber\\
\tr_1(\varphi?)&=\tr_1(\varphi)?&\nonumber\\
\tr_1(\pi_1\star\pi_2)&=\tr_1(\pi_1)\star\tr_1(\pi_2)&\star\in\{\cup,\circ\}\nonumber\\
\tr_1(\pi^*)&=\tr_1(\pi)^*\nonumber\\
\tr_1(\lo(\pi))&=\{\tr_1(\pi)(x,x)\}[x,x]\label{eqn:translation_loop}
\end{align}

\begin{proposition}\label{prop:translation_from_loopCPDL}
$\loopCPDL\lleq\CPDLg{\Graphloop}$ via $\tr_1$.
\end{proposition}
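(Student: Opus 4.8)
The plan is to prove the statement by a simultaneous structural induction on $\loopCPDL$-"expressions", showing that for every "Kripke structure" $K$ one has $\dbracket{\tr_1(\varphi)}_K = \dbracket{\varphi}_K$ for each "formula" $\varphi$ and $\dbracket{\tr_1(\pi)}_K = \dbracket{\pi}_K$ for each "program" $\pi$. Since this holds for \emph{every} $K$, each "expression" is sent to a "semantically equivalent" one, which is exactly what a translation witnessing $\loopCPDL \lleq \CPDLg{\Graphloop}$ via $\tr_1$ must do. Before running the induction I would check that $\tr_1$ is well typed as a map \emph{into} $\CPDLg{\Graphloop}$: the only "conjunctive program" it ever outputs is the one from \eqref{eqn:translation_loop}, namely $\set{\tr_1(\pi)(x,x)}[x,x]$, whose "underlying graph" has vertex set $\set{x}$ and edge set $\set{\set{x}}$ (both the single "atom" and the source/target pair $\set{x_s,x_t}=\set{x}$ contribute only the self-loop $\set{x}$), i.e.\ precisely the unique graph of $\Graphloop$. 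Hence every "expression" in the image lies in $\CPDLg{\Graphloop}$, and since $\tr_1$ is syntax-directed it is moreover computable in linear time.

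The base cases ($p\in\Prop$, $\epsilon$, and the "atomic programs" $a,\bar a$) are immediate, as $\tr_1$ is the identity on them. Every constructor other than $\lo$ is handled compositionally: for $\land$, $\lnot$, $\tup{\cdot}$, $\cdot?$, $\cup$, $\circ$ and $(\cdot)^*$ the translation commutes with the constructor and the semantics of the constructor depends only on the semantics of its immediate subexpressions, so these cases close directly under the induction hypothesis.

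The only clause requiring computation is $\lo(\pi)$. Here I would unfold the semantics of the self-loop "conjunctive program" of \eqref{eqn:translation_loop}, read (as its type demands, since $\lo(\pi)$ is a "formula") inside a diamond $\tup{\set{\tr_1(\pi)(x,x)}[x,x]}$. Because $\vars(\set{\tr_1(\pi)(x,x)})=\set{x}$ and $x_s=x_t=x$, a "$C$-satisfying assignment" is determined by the single "world" $w=f(x)$, and it satisfies the one "atom" exactly when $(w,w)\in\dbracket{\tr_1(\pi)}_K$; thus $\dbracket{\set{\tr_1(\pi)(x,x)}[x,x]}_K=\set{(w,w)\mid (w,w)\in\dbracket{\tr_1(\pi)}_K}$, and taking its diamond yields $\set{w\mid (w,w)\in\dbracket{\tr_1(\pi)}_K}$. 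By the induction hypothesis $\dbracket{\tr_1(\pi)}_K=\dbracket{\pi}_K$, so this set equals $\set{w\mid (w,w)\in\dbracket{\pi}_K}=\dbracket{\lo(\pi)}_K$, closing the case.

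I do not expect a genuine obstacle in this direction: the argument is entirely syntax-directed and the single new clause concerns a one-variable "conjunctive program" whose semantics collapses, by definition, to the loop set of $\tr_1(\pi)$. The only points meriting care are (i) the type-matching noted above, ensuring the "formula" $\lo(\pi)$ is sent to a "formula" rather than to a bare "conjunctive program", and (ii) the verification that the produced "underlying graph" is exactly the self-loop of $\Graphloop$, so that the image never leaves the intended fragment.
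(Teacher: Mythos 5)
Your proof is correct and follows essentially the same route the paper takes (the paper states the proposition as an immediate consequence of the definition of $\tr_1$, leaving the routine structural induction implicit). Your explicit check that $\uGraph{\set{\tr_1(\pi)(x,x)}[x,x]}$ is exactly the self-loop graph of $\Graphloop$, and your insertion of the diamond $\tup{\cdot}$ to repair the formula-versus-program type mismatch in clause \eqref{eqn:translation_loop}, are both sound and in the spirit of the intended argument.
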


\subsection{Translation from $\CPDLg{\Graphloop}$ to $\loopCPDL$}\label{sec:to_loopCPDL}

In $\CPDLg{\Graphloop}$ the only allowed "conjunctive programs" $C[x,y]$ are such that $x=y$ and all "atoms" in $C$ are of the form $\pi(x,x)$, for $\pi$ a $\CPDLg{\Graphloop}$-"program".

Consider the following translation $\tr_2$ from $\CPDLg{\Graphloop}$-"formulas" to $\loopCPDL$-"formulas" and from $\CPDLg{\Graphloop}$-"programs" to $\loopCPDL$-"programs": take all clauses of $\tr_1$ from \S\ref{sec:from_loopCPDL} except \eqref{eqn:translation_loop} replacing $\tr_1$ by $\tr_2$ and add
\begin{align*}
\tr_2(C[x,x])&=\left(\bigwedge_{\pi(x,x)\in C}\lo({\tr_2(\pi)})\right)? 
\end{align*}

\begin{proposition}\label{prop:translation_from_loopCPDL}
$\CPDLg{\Graphloop}\lleq\loopCPDL$ via $\tr_2$.
\end{proposition}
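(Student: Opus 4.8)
The plan is to establish that the translation $\tr_2$ is semantics-preserving, from which $\CPDLg{\Graphloop}\lleq\loopCPDL$ is immediate. Concretely, I would prove by simultaneous structural induction on $\CPDLg{\Graphloop}$-"expressions" that, for every "Kripke structure" $K$, one has $\dbracket{\tr_2(\phi)}_K = \dbracket{\phi}_K$ for every "formula" $\phi$ and $\dbracket{\tr_2(\pi)}_K = \dbracket{\pi}_K$ for every "program" $\pi$; equivalently $\tr_2(e)\semequiv e$ for every expression $e$. Since $\tr_2$ sends "formulas" to $\loopCPDL$-"formulas" and "programs" to $\loopCPDL$-"programs", this is exactly the equivalence-preservation that the definition of $\lleq$ requires.

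Before the induction I would record the structural fact already noted in the text: in $\CPDLg{\Graphloop}$ every "conjunctive program" $C[x_s,x_t]$ satisfies $\uGraph{C[x_s,x_t]}\in\Graphloop$, which forces $\vars(C)$ to consist of a single "variable" $x$, forces $x_s=x_t=x$, and forces every "atom" of $C$ to have the shape $\pi(x,x)$. Hence every "conjunctive program" of $\CPDLg{\Graphloop}$ is of the form $C[x,x]$ with $C=\{\pi_1(x,x),\dots,\pi_n(x,x)\}$, which is precisely the input shape on which the clause for $\tr_2(C[x,x])$ is defined; in particular $\tr_2$ is total on $\CPDLg{\Graphloop}$.

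For the induction, all $\CPDL$ constructors are handled uniformly: $\tr_2$ commutes with each constructor and the $\CPDL$ semantics is compositional, so the desired equality follows immediately from the induction hypotheses (for instance $\dbracket{\tr_2(\pi_1\circ\pi_2)}_K = \dbracket{\tr_2(\pi_1)}_K\circ\dbracket{\tr_2(\pi_2)}_K = \dbracket{\pi_1}_K\circ\dbracket{\pi_2}_K = \dbracket{\pi_1\circ\pi_2}_K$, and symmetrically for $\lnot,\land,\tup{\cdot},\cup,(\cdot)^*,(\cdot)?$ and the atomic cases). The only case requiring a computation is the "conjunctive program" $C[x,x]$. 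Unfolding the semantics and using $x_s=x_t=x$, any $C$-satisfying assignment $f$ forces $w_s=f(x)=w_t$, so $\dbracket{C[x,x]}_K = \{(w,w)\mid (w,w)\in\dbracket{\pi_i}_K \text{ for all } i\}$. On the other side $\tr_2(C[x,x]) = (\bigwedge_{i}\lo(\tr_2(\pi_i)))?$, and combining the induction hypothesis $\dbracket{\tr_2(\pi_i)}_K=\dbracket{\pi_i}_K$ with the semantics of $\lo$ and of a test yields $\dbracket{\tr_2(C[x,x])}_K = \{(w,w)\mid (w,w)\in\dbracket{\pi_i}_K \text{ for all } i\}$, matching the left-hand side.

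I do not anticipate a genuine obstacle: the content is routine, and the whole argument hinges on the single observation that the $\Graphloop$-restriction collapses a "conjunctive program" into a conjunction of self-loops, which is exactly the behaviour $\lo$ captures. The only points demanding mild care are that the induction must run on "formulas" and "programs" simultaneously, since they are mutually recursive, and that the recursive calls $\tr_2(\pi_i)$ occurring inside $\lo(\cdot)$ are applied to proper "subexpressions" of $C[x,x]$, so the induction hypothesis is indeed available for them.
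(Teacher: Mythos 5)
Your proof is correct and follows exactly the route the paper intends: the paper states the proposition without a detailed proof, relying on the same two observations you make — that the $\Graphloop$ restriction collapses every "conjunctive program" to the form $\set{\pi_1(x,x),\dotsc,\pi_n(x,x)}[x,x]$, and that such a program is equivalent to $(\lo(\pi_1)\land\dotsb\land\lo(\pi_n))?$ — combined with a routine structural induction through the remaining $\CPDL$ constructors. Nothing in your argument diverges from or adds to what the paper's sketch requires.
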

\end{toappendix}

\subsection{$\ICPDL$, $\CPDLg{\Tw[1]}$, and $\CPDLg{\Tw[2]}$ are "equi-expressive"}

\paragraph{Translation from $\ICPDL$ to $\CPDLg{\Graphcap}$}
The translation consists of recursively replacing every appearance of $\pi_1\cap\pi_2$ such that $\pi_1,\pi_2$ do not contain $\cap$-"subexpressions", with the equivalent expression $\set{\pi_1(x,y),\pi_2(x,y)}[x,y]$. \ifarxiv Details are in \Cref{sec:from_ICPDL}.\fi

\begin{toappendix}
\subsection{Translation from $\ICPDL$ to $\CPDLg{\Graphcap}$}\label{sec:from_ICPDL}

Consider the following translation $\tr_3$ from $\ICPDL$-"formulas" to $\CPDLg{\Graphcap}$-"formulas" and from $\ICPDL$-"programs" to $\CPDLg{\Graphcap}$-"programs": take all clauses of $\tr_1$ from \S\ref{sec:from_loopCPDL} replacing $\tr_1$ by $\tr_3$ except \eqref{eqn:translation_loop} and add
%
\begin{align*}
\tr_3(\pi_1\cap\pi_2)&=\{\tr_3(\pi_1)(x,y),\tr_3(\pi_2)(x,y)\}[x,y].
\end{align*}
Observe that the "underlying graph" of any  $C[x,y]\in\subexpr(\tr_3(\pi))$ is in $\Graphcap$, and hence $\tr_3(\pi)$ is in $\CPDLg{\Graphcap}$.
\end{toappendix}
\begin{propositionrep}\label{prop:translation_from_ICPDL}
$\ICPDL\lleq\CPDLg{\Graphcap}$.
\end{propositionrep}

\paragraph{Translation from $\twtcpdl$ to $\ICPDL$}\label{sec:to_ICPDL}
\AP
We start with a technical result. For a graph $G$ and $B\subseteq V(G)$, we say that $B$ ""is a clique in"" $G$ if 
$\set{\set{x,y}\mid x,y\in B,x\neq y}\subseteq E(G)$.
%
The following lemma --whose proof can be found in the appendix-- states that when dealing with $C'[x,y]$ in $\CPDLg{\Tw[n]}$, one may suppose that $\uGraph{C[x_1,x_2]}$ has a "tree decomposition" of "tree-width" $\leq n$ with the additional property that each bag "is a clique in" $\uGraph{C[x_1,x_2]}$. 
\begin{lemmarep}\label{lem:bags_are_cliques}
Given a $\CPDLg{\Tw[n]}$-"conjunctive program" $C'[x,y]$ one can compute in polynomial time a $\CPDLg{\Tw[n]}$-"conjunctive program" $C[x,y]\semequiv C'[x,y]$ with $\vars(C)=\vars(C')$, and a "tree decomposition" $(T, \bagmap)$ of "tree-width" $\leq n$ of $\uGraph{C[x,y]}$ such that for any $b\in V(T)$, $\bagmap(b)$ "is a clique in" $\uGraph{C[x,y]}$.
\end{lemmarep}
\begin{proof}
Let $(T, \bagmap)$ be a "tree decomposition" of "tree-width" at most $n$ of $\uGraph{C'[x_1,x_2]}$. Let $A = \Prog\cap\subexpr(C'[x,y])$ be the set of "atomic programs" in $C'$ or it "subexpressions". Define $C$ as $C'\cup D$, where 
\[
D=\{((\cup_{a \in A} a) \cup (\cup_{a \in A} \bar a))^*(z_1,z_2)\mid b\in V(T), z_1,z_2\in\bagmap(b),z_1\neq z_2\}.
\]
Notice that $(T, \bagmap)$ is a "tree decomposition" of $\uGraph{D}$ where every bag is a clique in $\uGraph{D}$. Since $\uGraphC{C'}$ is connected and since the semantics of $\CPDLp$ are compositional and hence indifferent for "atoms" not occurring in the expression, then $C[x,y]\semequiv C'[x,y]$ and it is clear that $\vars(D)\subseteq\vars(C')$. Furthermore, $(T, \bagmap)$ is also a "tree decomposition" of $\uGraph{C[x,y]}$. Since $D$ has at most $|\vars(C')|^2$ "atoms" and computing a "tree decomposition" of "tree-width" at most $n$ can be done in linear time \cite{bodlaender1993linear},\sideedwin{por alguna razón esta cita no siempre compila} then the whole construction remains in polynomial time.
\end{proof}

\begin{toappendix}
For an $\ICPDL$-"program" $\pi$ we define $\pi^{-1}$ as follows:
\begin{align*}
\star^{-1}&=\star&\star\in\{\epsilon,\phi?\}\\
a^{-1}& = \overline a &a\in\A\\
\overline a^{-1}& =a  &a\in\A\\
(\pi_1\star\pi_2)^{-1}&=\pi_2^{-1}\star\pi_1^{-1}&\star\in\{\cap,\cup,\circ\}\\
{\pi^*}^{-1}&={\pi^{-1}}^*
\end{align*}
\begin{proposition}
$\pi$ is an $\ICPDL$-"program" iff $\pi^{-1}$ is an $\ICPDL$-"program", and 
$(u,v)\in\dbracket{\pi}_K$ iff $(v,u)\in\dbracket{\pi^{-1}}_K$.
\end{proposition}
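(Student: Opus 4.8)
The plan is a structural induction on the "program" $\pi$, treating the syntactic and the semantic claim in turn.

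For the syntactic claim I would first show by induction on $\pi$ that if $\pi$ is an $\ICPDL$-"program" then so is $\pi^{-1}$: each defining clause of $\cdot^{-1}$ sends a program constructor to a program constructor of the same kind (with $a$ and $\bar a$ swapped and the arguments of $\circ$, $\cup$, $\cap$ transposed), while the test clause $(\phi?)^{-1}=\phi?$ leaves the "formula" untouched, so no new syntactic material is introduced. The reverse implication follows from the observation that $\cdot^{-1}$ is an involution, i.e. $(\pi^{-1})^{-1}=\pi$ for every $\pi$ (an immediate induction, clause by clause): if $\pi^{-1}$ is an $\ICPDL$-"program", applying the forward direction to it shows that $(\pi^{-1})^{-1}=\pi$ is one too. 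This gives the full "iff".

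For the semantic claim I would fix an arbitrary "Kripke structure" $K$ and prove $(u,v)\in\dbracket{\pi}_K$ iff $(v,u)\in\dbracket{\pi^{-1}}_K$ by induction on $\pi$. The base cases are direct from the semantics: $\dbracket{\epsilon}_K$ and $\dbracket{\phi?}_K$ are subsets of the diagonal and hence symmetric, matching $\epsilon^{-1}=\epsilon$ and $(\phi?)^{-1}=\phi?$; and $\dbracket{\bar a}_K$ is by definition the converse of $\dbracket{a}_K$, matching $a^{-1}=\bar a$ and $\bar a^{-1}=a$. For the inductive step, the union and intersection cases are immediate, since converse distributes over $\cup$ and $\cap$ and the induction hypothesis applies to each component. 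The composition case is where the order reversal is essential: if $(u,v)\in\dbracket{\pi_1\circ\pi_2}_K$ through a witness $w$ with $(u,w)\in\dbracket{\pi_1}_K$ and $(w,v)\in\dbracket{\pi_2}_K$, then by the induction hypothesis $(w,u)\in\dbracket{\pi_1^{-1}}_K$ and $(v,w)\in\dbracket{\pi_2^{-1}}_K$, so $w$ witnesses $(v,u)\in\dbracket{\pi_2^{-1}\circ\pi_1^{-1}}_K=\dbracket{(\pi_1\circ\pi_2)^{-1}}_K$, and the argument is symmetric in the converse direction.

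The main (if mild) obstacle is the Kleene-star case, where I must show that converse commutes with reflexive--transitive closure, so that $\dbracket{(\pi^*)^{-1}}_K=\dbracket{(\pi^{-1})^*}_K$ is the converse of $\dbracket{\pi^*}_K$. I would prove the purely relational fact that, for any binary relation $R$, the converse of the reflexive--transitive closure of $R$ equals the reflexive--transitive closure of the converse of $R$: reflexivity is preserved, and any $R$-path $u=w_0\,R\,w_1\,R\,\cdots\,R\,w_n=v$ reverses into a path $v=w_n\,R^{-1}\,w_{n-1}\,R^{-1}\,\cdots\,R^{-1}\,w_0=u$ of the same length in $R^{-1}$, the two transformations being mutually inverse. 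Instantiating $R=\dbracket{\pi}_K$ and combining with the induction hypothesis (which gives $\dbracket{\pi^{-1}}_K=\set{(y,x)\mid (x,y)\in\dbracket{\pi}_K}$) closes this case and completes the induction.
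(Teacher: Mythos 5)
Your proof is correct, and the paper in fact states this proposition without any proof, treating it as routine; your structural induction (the involution argument for the syntactic direction, and the fact that relational converse distributes over $\cup$, $\cap$, reverses $\circ$, and commutes with reflexive--transitive closure for the semantic direction) is exactly the standard argument the authors implicitly rely on. Nothing is missing.
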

\end{toappendix}

The next lemma is the key ingredient of the translation from $\twtcpdl$ to $\ICPDL$: 
\begin{lemmarep}\label{lemita}
Let $C$ be a finite set of "atoms" of the form $\pi(z_1,z_2)$, 
where $\pi$ is a \ICPDL-"program" and $z_1,z_2 \in \Vars$, 
$\uGraphC{C}$ is a clique, and $|\vars(C)|\leq 3$. Then:
\begin{enumerate}
\item For any $x,y\in\vars(C)$, $x\neq y$, there is an $\ICPDL$-"program" $\pi_{C[x,y]}$ such that $\pi_{C[x,y]}\semequiv C[x,y]$.

\item For any $x\in\vars(C)$ there is an $\ICPDL$-"program" $\pi_{C[x,x]}$ such that $\pi_{C[x,x]}\semequiv C[x,x]$.
\end{enumerate}
Further, these translations are in polynomial time.
\end{lemmarep}
\begin{proof}
    For $z_1,z_2\in\vars(C)$, $z_1\neq z_2$, let
    \begin{align*}
    \Pi_{z_1z_2}&=\left(\bigcap_{\pi(z_1,z_2)\in C}\pi\right) \cap \left(\bigcap_{\pi(z_2,z_1)\in C} \pi^{-1}\right)\\
    \Pi_{z_1}&=\left(\bigwedge_{\pi(z_1,z_1)\in C} \tup{\pi\cap\epsilon}\right)?
    \end{align*}
    where a conjunction with empty range is defined by $\tup{\epsilon}$.
    
    For $x\neq y$ define the $\ICPDL$-"programs" $\pi_{C[x,y]}$ and $\pi_{C[x,x]}$ as follows:
    \begin{itemize}
    \item If $\vars(C)=\{x,y,z\}$ with $x\neq z \neq y$ then
    \begin{align*}
    \pi_{C[x,y]}&=\Pi_x\circ( \Pi_{xy}\cap(\Pi_{xz}\circ\Pi_z\circ\Pi_{zy}))\circ\Pi_y\\
    \pi_{C[x,x]}&=\tup{\pi_{C[x,y]}}?
    \end{align*}
    
    \item If $\vars(C)=\{x,y\}$ then
    \begin{align*}
    \pi_{C[x,y]}&=\Pi_x\circ \Pi_{xy} \circ\Pi_y\\
    \pi_{C[x,x]}&=\tup{\pi_{C[x,y]}}?
    \end{align*}
    
    \item If $\vars(C)=\{x\}$ then
    \begin{align*}
    \pi_{C[x,x]}&=\Pi_x
    \end{align*}
    \end{itemize}
    It can be shown that $\pi_{C[x,y]}\semequiv C[x,y]$ and $\pi_{C[x,x]}\semequiv C[x,x]$.
    \end{proof}

Figure \ref{fig:lemita} illustrates an example of $\pi_{C[x,y]}$ and of $\pi_{C[x,x]}$ of Lemma \ref{lemita}. The general case is more involved, as $C$ may contain --as illustrated in Figure \ref{fig:example_ugraph}.A-- several "atoms" $\pi_1(z_1,z_2),\dots\pi_n(z_1,z_2)$ for the same variables $z_1,z_2$ and also contain "atoms" of the form $\pi(z_1,z_1)$ not shown in the example. Furthermore, one would need sometimes to `reverse' the direction of the "programs" to obtain the desired $\ICPDL$-"program". The proof of \Cref{lemita} is deferred to the appendix.

Using \Cref{lemita} one can show the translation which, for technical reasons as it simplifies the proof, translates $\twticpdl$ "expressions" (instead of $\twtcpdl$) to $\ICPDL$.
\begin{figure}
\ifarxiv
    \includegraphics[width=.7\textwidth]{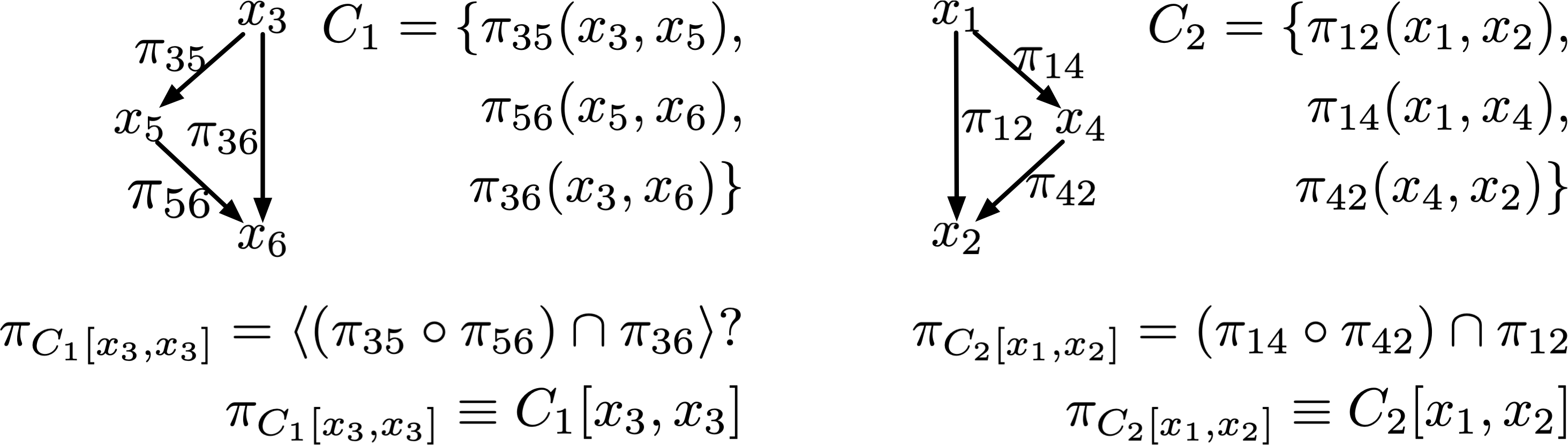}
\else
    \includegraphics[width=.47\textwidth]{img/lemita.png}
\fi
\caption{An example of translation of a "conjunctive program" made up of only $\ICPDL$ "atoms" and 3 variables to \ICPDL. The name of the $\ICPDL$-"programs" $\pi_{C[x_3,x_3]}$ and $\pi_{C[x_1,x_2]}$ are the one used in Lemma \ref{lemita}.}
\label{fig:lemita}
\end{figure}

\begin{figure}
\ifarxiv
    \includegraphics[scale=0.25]{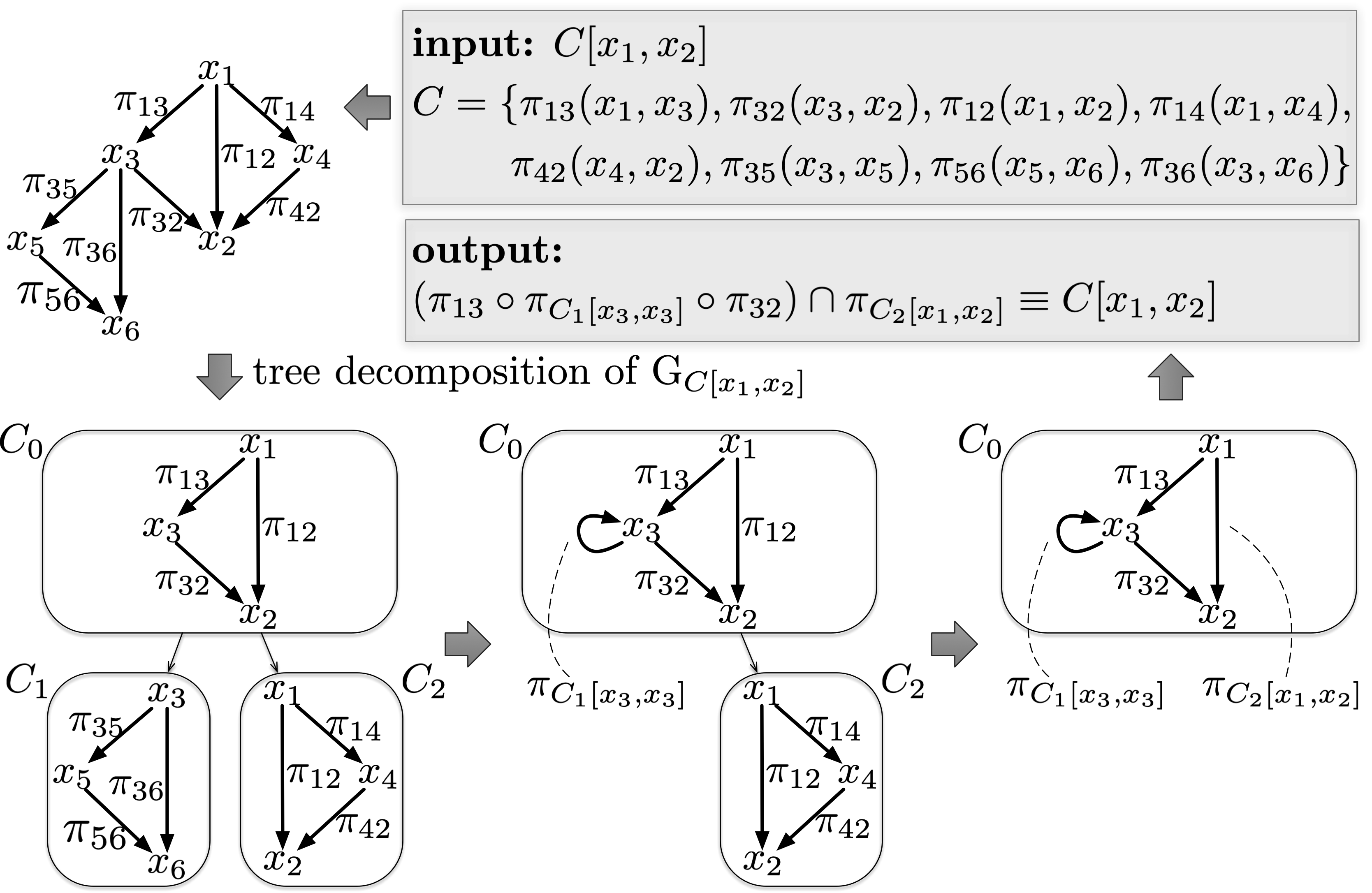}
\else
    \includegraphics[scale=0.20]{img/example_treedecomp.png}
\fi
\caption{A "conjunctive program" $C[x_1,x_2]$ made up of only $\ICPDL$ "programs" and the steps of the algorithm to construct a semantically equivalent $\ICPDL$ formula from a "tree decomposition" of "tree-width" 2 of $\uGraph{C[x_1,x_2]}$. Rounded boxes represent the bags of such "tree decomposition" where variables $x_1,\dots,x_6$ are grouped (additional information such as edges and labels are not part of the "tree decomposition"). See Figure~\ref{fig:lemita} for the definition of $\pi_{C[x_3,x_3]}$ and $\pi_{C[x_1,x_2]}$.}
\label{fig:example_treedecomp}
\end{figure}

\begin{propositionrep}\label{prop:translation_from_TW2}
$\twticpdl\lleq\ICPDL$ via a polynomial time translation.
\end{propositionrep}
\begin{proof}
    Let $c_1(e)=\bigcup_{C[x,y]\in\subexpr(e)}\vars(C)$, $c_2(e)$ be the set of expressions in $\subexpr(e)$ which are not $\ICPDL$-expressions, and $c_3(e)$ be the syntactic complexity of $e$. Finally, let $c(e)=(|c_1(e)|+|c_2(e)|,c_3(e))$. We define a translation $\tr(e)$ from $\twticpdl$-"formulas" to $\ICPDL$-"formulas" and from $\twticpdl$-"programs" to $\ICPDL$-"programs" by recursion in the lexicographic order of $c$.

    For the cases other than "conjunctive programs", $\tr$ is defined as $\tr_1$ from \S\ref{sec:from_loopCPDL} replacing $\tr_1$ by $\tr$, and $\star\in\{\cup,\circ\}$ by $\star\in\{\cup,\circ,\cap\}$. The second component of $c$  ensures that if $\tr(e)=\dots\tr(e')\dots$ according to one of these rules, then $c(e')<c(e)$. 
    
    For the case of a "conjunctive program" $C[x,y]$, suppose
    \begin{align*}
    C&=\set{\pi_1[x_1,y_1],\dots,\pi_n[x_n,y_n]}.
    \end{align*}
    be such that $\uGraph{C[x,y]}\in\Tw[2]$.
    If all $\pi_i$ are $\ICPDL$-"programs" and $|\vars(C)|\leq3$, then $C$ satisfies the hypothesis of Lemma \ref{lemita}. In case $x=y$, there is an $\ICPDL$-"program" $\pi_{C[x,x]}\semequiv C[x,x]$ and we define $\tr(C[x,y])=\pi_{C[x,x]}$. In case $x\neq y$, there is an $\ICPDL$-"program" $\pi_{C[x,y]}\semequiv C[x,y]$ and we define $\tr(C[x,y])=\pi_{C[x,y]}$.
    
    Else there is some $\pi_i$ that is not a $\ICPDL$-"program" or $|\vars(C)|>3$. We define $\tr(C[x,y])=\tr(D[x,y])$ for a set of "atoms" $D$ to be defined next.
    
    \begin{enumerate}
    \item\label{trad:case:1} If there is $i$ such that $\pi_i$ is not a $\ICPDL$-"program", then $D=C\setminus\{\pi_i[x_i,y_i]\}\cup\{\tr(\pi_i)[x_i,y_i]\}$. It is straightforward that $\tr(C[x,y])\semequiv\tr(D[x,y])$. Observe that in this case 
    (1) $c_1(D[x,y])\subseteq c_1(C[x,y])$, since by construction we do not add new variables to $D$; and
    (2) $c_2(D[x,y])\subsetneq c_2(C[x,y])$ since in $D$ we removed at least one expression which was not an $\ICPDL$-"program" and we do not add new ones.
    Hence $c(D[x,y])<c(C[x,y])$.

    \item\label{trad:case:2} Else, all $\pi_i$ are already $\ICPDL$-"programs" and $|\vars(C)|>3$. Let $(T, \bagmap)$ be a "tree decomposition" of the "underlying graph" of $C$ of root containing $x$ and $y$ (observe that this is possible by the fact that $\uGraph{C[x,y]}$ contains the "edge" $\{x,y\}$), and of width $\leq 2$ such that any leaf of $T$ is a bag of size at least 2 (and at most 3), and is not included in its parent bag. If $T$ is a single bag $B$, then we are done: in case $x=y$, define $\tr(C[x,y])=\pi_{C[x,x]}$ and otherwise define $\tr(C[x,y])=\pi_{C[x,y]}$.
    If $T$ has at least two bags, pick any leaf $B$ from $T$ and let $B'$ be the parent of $B$ in $T$. Let $C_B$ be the set of all "atoms" $A\in C$ such that $\vars(A)\subseteq B$. Observe that $C_B$ satisfies the hypothesis of Lemma \ref{lemita}.
    \begin{itemize}
        \item  Suppose that $B'\cap B=\{z_1\}$ and $B\supseteq\{z_1,z_2\}$ ($z_1\neq z_2$). We define $D$ as the set of all "atoms" in $C$ except those 
        in $C_B$
        plus the \ICPDL-"program" $\pi_{C_B[z_1,z_1]}\semequiv C_B[z_1,z_1]$ from Lemma \ref{lemita}. Notice that $z_2\in \vars(C)\setminus\vars(D)$.
    
        \item  Suppose that $B'\cap B=\{z_1,z_2\}$ ($z_1\neq z_2$) and $B=\{z_1,z_2,z_3\}$ ($z_1\neq z_3\neq z_2$). We define $D$ as the set of all "atoms" in $C$ except those 
        in $C_B$
        plus the \ICPDL-"program" $\pi_{C_B[z_1,z_2]} \semequiv C_B[z_1,z_2]$ from Lemma \ref{lemita}. Notice that $z_3\in \vars(C)\setminus\vars(D)$.
    \end{itemize}
    It is straightforward that $\tr(C[x,y])\semequiv\tr(D[x,y])$. Furthermore, observe that 
    $c_1(D[x,y])\subsetneq c_1(C[x,y])$, and
    $c_2(D[x,y])\subseteq c_2(C[x,y])$.
    Hence $c(D[x,y])<c(C[x,y])$.
    \end{enumerate}
    This concludes the proof. We are only left with the task of showing that the translation is in polynomial time, which we defer to \Cref{prop:poly_translation} below.
\end{proof}

Figure \ref{fig:example_treedecomp} illustrates an example of a "conjunctive program" $C[x_1,x_2]$ 
where $C$ contains only "atoms" of the form $\pi(z_1,z_2)$ for $\pi$ an $\ICPDL$-"program". 
Given a "tree decomposition" $(T, \bagmap)$ of $\uGraph{C[x_1,x_2]}$ of "tree-width" at most 2 whose all bags are cliques in $\uGraph{C[x_1,x_2]}$ (this can be assumed without loss of generality by Lemma \ref{lem:bags_are_cliques}), one can successively remove its leaves until a single bag is obtained. Each time a leaf $B$ is removed, all "atoms" in $C$ using (only) variables from $B$ are removed, but --thanks to Lemma \ref{lemita}-- an $\ICPDL$-"program" with the information of $B$ is added instead. The general scenario is again more complex, since $C$ may contain "programs" $\pi$ which are not $\ICPDL$-"programs" (namely, "conjunctive programs"). In this case, we need to recursively eliminate such $\pi$s starting with those containing only $\ICPDL$-"programs". The proof of \Cref{prop:translation_from_TW2} is deferred to the appendix.

As a consequence of \Cref{prop:intersecion_does_not_add_expressive_power,prop:translation_from_ICPDL,prop:translation_from_TW2} and the fact that $\Graphcap\subseteq\Tw[1]$, we obtain the following corollary:
\begin{theorem}\label{thm:ICPDL_equals_TW1_equals_TW2}
$\ICPDL$, $\CPDLg{\Tw[1]}$ and $\CPDLg{\Tw[2]}$ are "equi-expressive", via polynomial time translations.
\end{theorem}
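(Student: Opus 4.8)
The plan is to obtain the theorem as a corollary by arranging the already-established translations into a cycle of $\lleq$ relations. First I would record the two purely syntactic inclusions coming from $\Graphcap \subseteq \Tw[1] \subseteq \Tw[2]$: a "conjunctive program" whose "underlying graph" lies in a smaller class of graphs is \emph{a fortiori} an admissible program for any larger class, so the identity map witnesses $\CPDLg{\Graphcap} \lleq \CPDLg{\Tw[1]}$ and $\CPDLg{\Tw[1]} \lleq \CPDLg{\Tw[2]}$, both trivially computable in polynomial time.

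Next I would splice in the three nontrivial ingredients. \Cref{prop:translation_from_ICPDL} gives $\ICPDL \lleq \CPDLg{\Graphcap}$. \Cref{prop:intersecion_does_not_add_expressive_power}, applied to the class $\Tw[2]$ (which contains the single-edge graph, so its hypothesis is met), gives $\CPDLg{\Tw[2]} \lleq \twticpdl$, i.e.\ adding "program intersection" on top of $\CPDLg{\Tw[2]}$ does not increase expressive power. Finally \Cref{prop:translation_from_TW2} gives $\twticpdl \lleq \ICPDL$. Chaining all five arrows produces the cycle
\begin{align*}
\CPDLg{\Tw[2]} \lleq \twticpdl \lleq \ICPDL \lleq \CPDLg{\Graphcap} \lleq \CPDLg{\Tw[1]} \lleq \CPDLg{\Tw[2]},
\end{align*}
and since this closes back on $\CPDLg{\Tw[2]}$, every logic occurring in it is "equi-expressive" with every other. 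In particular $\ICPDL \langsemequiv \CPDLg{\Tw[1]} \langsemequiv \CPDLg{\Tw[2]}$, which is exactly the statement.

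For the complexity claim I would note that each of the five arrows is realized by a polynomial-time translation: the two inclusion arrows are the identity; the intersection-elimination of \Cref{prop:intersecion_does_not_add_expressive_power} replaces each $\cap$ locally by a fixed two-variable "conjunctive program"; and \Cref{prop:translation_from_ICPDL,prop:translation_from_TW2} are polynomial by hypothesis. Since the cycle has constant length, composing these maps keeps the overall translation polynomial. There is essentially no obstacle left, as all the mathematical content has been discharged by the preceding propositions; the only point requiring care is the bookkeeping — verifying that the directions of the five $\lleq$'s genuinely line up into one closed cycle, and that composing a constant number of polynomial-time translations does not blow up the expression size super-polynomially (which it does not).
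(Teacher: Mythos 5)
Your proposal is correct and follows essentially the same route as the paper, which derives the theorem as a corollary of \Cref{prop:intersecion_does_not_add_expressive_power,prop:translation_from_ICPDL,prop:translation_from_TW2} together with the inclusions $\Graphcap \subseteq \Tw[1] \subseteq \Tw[2]$. The only quibble is that the arrow $\CPDLg{\Tw[2]} \lleq \twticpdl$ is just the identity (the syntax of $\ICPDLg{\Tw[2]}$ extends that of $\CPDLg{\Tw[2]}$), so citing \Cref{prop:intersecion_does_not_add_expressive_power} there is unnecessary, though harmless.
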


\begin{toappendix}
    
\begin{proposition}\label{prop:poly_translation}
The translation from $\twticpdl$ to $\ICPDL$ can be computed in polynomial time.
\end{proposition}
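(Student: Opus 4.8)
The plan is to show that the recursive definition of $\tr$ from the proof of \Cref{prop:translation_from_TW2} runs in polynomial time, which reduces to bounding three quantities: the size of every intermediate expression, the number of recursive invocations of $\tr$, and the work done per invocation. The only substantial work per invocation is computing a width-$\leq 2$ "tree decomposition" (with clique bags, by \Cref{lem:bags_are_cliques}), which is in polynomial time \cite{bodlaender1993linear}, together with one application of \Cref{lemita}, which is polynomial. Hence the real content is ruling out a size blow-up across the recursion.

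First I would record the behaviour of the lexicographic measure $c=(|c_1|+|c_2|,c_3)$. Every recursive call is on an argument of strictly smaller $c$, and moreover the first component $|c_1|+|c_2|$ never increases: the "programs" produced by \Cref{lemita} are pure $\ICPDL$-"programs", so they contain no "conjunctive programs", add no "variables" to $c_1$ and nothing to $c_2$. Thus $|c_1|+|c_2|$ stays bounded by its initial value $N=O(|e|^2)$ throughout. In case~(2) each peeling step strictly decreases $|c_1|$ (it eliminates the "variable" occurring only in the peeled leaf bag), so a chain of case~(2) steps has length $O(|\vars(C)|)$; in case~(1) each step strictly decreases $|c_2|$ (it removes one non-$\ICPDL$ "atom"), so a chain of case~(1) steps has length at most the number of "atoms". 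Combined with the structural rules, whose recursion is controlled by $c_3$, this bounds the number of invocations of $\tr$ by a polynomial in $|e|$.

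The key step, and the main obstacle, is proving that sizes do not compound. The danger is that case~(1) substitutes the larger $\tr(\pi_i)$ back into $C$ and retranslates, and that case~(2) re-wraps previously produced $\ICPDL$-"programs" at every bag along a root-to-leaf path; a naive induction with a single polynomial bound $P$ would estimate $P(n+P(n))$ and give an exponential blow-up. The resolution is that \Cref{lemita} produces its output with \emph{unit coefficient} on the sizes of the input "atoms": each "atom" $\pi(z_1,z_2)$ enters exactly one intersection $\Pi_{z_1z_2}$ (or $\Pi_{z_1}$ for a loop) with no duplication, and the inversion $\pi\mapsto\pi^{-1}$ preserves size, so one application to a bag $B$ yields a "program" of size $\sum_{A\in C_B}|A|+O(|C_B|)$ with no multiplicative factor. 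I would then use an additive accounting: writing $\mathrm{out}(B)$ for the size of the $\ICPDL$-"program" obtained after fully processing the subtree rooted at $B$, this gives $\mathrm{out}(B)\leq\sum_{B'}\mathrm{out}(B')+\sum_{A\in\mathrm{orig}(B)}|A|+O(|C_B|)$, summing over children $B'$ and over "atoms" $\mathrm{orig}(B)$ first consumed at $B$. Since each original "atom" lies in exactly one $\mathrm{orig}(B)$ and each child contributes to exactly one parent, unrolling telescopes to $\mathrm{out}(\text{root})=O(|C|)$, i.e.\ linear output per all-$\ICPDL$ "conjunctive program".

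Finally I would lift this to all of $e$. Because case~(1) always hands the peeling an already translated $\ICPDL$-"program" $\tr(\pi_i)$ that is never reprocessed as a "conjunctive program", the nested translations enter the outer bound additively and with unit coefficient as well; summing the per-"conjunctive-program" overheads over the at most $|e|$ "conjunctive programs" occurring in the nesting hierarchy of $e$ yields $|\tr(e)|=\mathrm{poly}(|e|)$. Putting together the polynomial bound on intermediate sizes, the polynomial bound on the number of invocations, and the polynomial cost per invocation gives the claimed polynomial-time bound.
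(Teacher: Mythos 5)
Your proposal is correct and follows essentially the same route as the paper: both arguments hinge on the observation that \Cref{lemita} emits each input "atom" exactly once (so there is a constant $k$ with $c(\pi_{C_B[z_1,z_2]})\leq k|C_B|+c(C_B)$), which makes the per-bag accounting additive and telescopes to a linear-size output per "conjunctive program", after which innermost-first substitution over the polynomially many "conjunctive programs" and the polynomial-time clique-bag "tree decomposition" of \Cref{lem:bags_are_cliques} give the overall polynomial bound. The paper merely packages the peeling as an explicit iterative algorithm rather than tracking the lexicographic measure of $\tr$, but the substance is identical.
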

\begin{proof}
We first analyse the following algorithm \ref{algo_particular} emerging from case \ref{trad:case:2} of the proof of \Cref{prop:translation_from_TW2} that translates a $\twtcpdl$-"conjunctive program" of the form $C[x,y]$, where $C$ is a set of "atoms" of the form $\pi(z_1,z_2)$, for $\pi$ a \ICPDL-"program" and $z_1,z_2 \in \Vars$, into a semantically equivalent \ICPDL-"program".

\begin{algorithm}
\SetKwData{Left}{left}\SetKwData{This}{this}\SetKwData{Up}{up}
\SetKwFunction{Union}{Union}\SetKwFunction{FindCompress}{FindCompress}
\SetKwInOut{Input}{input}\SetKwInOut{Output}{output}
\Input{A $\twtcpdl$-"program" $C'[x,y]$, where $C'$ is a 
set of "atoms" of the form $\pi(z_1,z_2)$, 
for $\pi$ a \ICPDL-"program" and $z_1,z_2 \in \Vars$}
\Output{An \ICPDL-"program" $\pi$ such that $\pi\semequiv C'[x,y]$}
\BlankLine

Let $C[x,y]\semequiv C'[x,y]$ with $\vars(C)=\vars(C')$ and let $(T, \bagmap)$ be a "tree decomposition" of "tree-width" at most $2$ of $\uGraph{C[x,y]}$ such that any $b\in V(T)$, $\bagmap(b)$ "is a clique in" $\uGraph{C[x,y]}$ (Lemma \ref{lem:bags_are_cliques}).


\While{$T$ is not a single bag}
    {
    let $B$ be a leaf of $T$\; 

    let $B'$ be the parent of $B$ in $T$\;

    $C_B:=\{\pi(z_1,z_2)\in C\mid \{z_1,z_2\}\subseteq B\}$\;

    \If{$B'\cap B=\{z_1\}$ and $B\setminus B'\neq\emptyset$}
    {


    $C:=C\setminus C_B \cup\{\pi_{C_B[z_1,z_1]}\}$\;

    }
    \If{$B'\cap B=\{z_1,z_2\}$ ($z_1\neq z_2$) and $B=\{z_1,z_2,z_3\}$ ($z_1\neq z_3\neq z_2$)}
    {

    $C:=C\setminus C_B\cup\{\pi_{C_B[z_1,z_2]}\}$\;
    }
    Remove $B$ from $T$
    }
    {

\eIf{$x\neq y$}
    {\Return $\pi_{C[x,y]}$}
    {\Return $\pi_{C[x,x]}$}
}

\caption{Computing the translation from $\twticpdl$ to $\ICPDL$ in a special case}\label{algo_particular}
\end{algorithm}

Observe that the first line can be done in polynomial time by Lemma \ref{lem:bags_are_cliques}, and that $|C|=|C'|+O(|\vars(C')^2|)$.

For $\pi$ an \ICPDL-"program", let $c(\pi)$ be the syntactic complexity of $\pi$ and for a set $D$ of "atoms" of the form $\pi(z,z')$, where $\pi$ is a \ICPDL-"program", let $c(D)=\sum_{\pi(z,z')\in D}c(\pi)$.

The construction of the \ICPDL-"programs" $\pi_{C_B[z_1,z_1]}$ and $\pi_{C_B[z_1,z_2]}$ from Lemma \ref{lemita} can be done in time $O(c(C_B))$. Observe that the above algorithm removes from $C$ the set of "atoms" in $C_B$ but adds a new \ICPDL-"program". We check that the syntactic complexity of this "program" does not grow too much along the iterations of the cycle.
By inspecting the definition of $\pi_{C_B[z_1,z_1]}$ and $\pi_{C_B[z_1,z_2]}$ from Lemma \ref{lemita}, one can see that there is a constant $k$ such that for any $B$ and $C_B$ in any iteration of the cycle we have 
$
c(\pi_{C_B[z_1,z_1]}),c(\pi_{C_B[z_1,z_2]})\leq k|C_B|+c(C_B).
$
Hence the calculation of $\pi_{C_B[z_1,z_1]}$ and $\pi_{C_B[z_1,z_2]}$ can be done in time $O(k|C|+c(C))$; also the output $\pi$ of algorithm \ref{algo_particular} satisfies $c(\pi)\leq k|C|+c(C)$.

As mentioned in the proof of Lemma \ref{lem:bags_are_cliques}, the "tree decomposition" $T$ of "tree-width" at most $2$ can be computed in polynomial time and hence there are polynomially many bags in $T$. Since at each step of the cycle one bag of $T$ is removed, there are polynomially many iterations. Hence algorithm \ref{algo_particular} runs in polynomial time.

We next see how to use algorithm \ref{algo_particular} to compute the translation of any $\twtcpdl$-"program" into a semantically equivalent \ICPDL-"program". Given a $\twtcpdl$-"program" $\pi$ we proceed as follows. If there is no $C[x,y]\in\subexpr(\pi)$, then $\pi$ is already an \ICPDL-"program" and we are done. Else, pick $C[x,y]\in\subexpr(\pi)$ where $C$ is a set of \ICPDL-"programs". Apply algorithm \ref{algo_particular} to obtain a \ICPDL-"program" $\pi$ that is semantically equivalent to $C[x,y]$. Replace $C[x,y]$ by $\pi'$ in $\pi$ and repeat the procedure. This algorithm is clearly polynomial and computes the desired \ICPDL-"program".
\end{proof}

\end{toappendix}

\section{Indistinguishability}\label{sec:caracterization}
In this section we characterize the expressive power of $\CPDLg{\Tw}$ via a restricted form of (bi)simulation using pebbles, in what resembles the $k$-pebble game for characterizing finite variable first-order fragments.

\subsection{Simulation relation}
\AP
We will define the notion of "$k$-simulation" between pairs $(K,K')$ of "Kripke structures" via a two-player zero-sum graph game $\intro*\kSimGame$. 
\AP
The arena of the game has a set of positions $S \cup D$, 
where
\begin{align*}
    S &= \set{s} \times \kHoms(K,K')\\
    D &= \set{d_1,\dotsc, d_k} \times (\worlds{K}^k \times \worlds{K'}^k)
\end{align*}
where Spoiler owns all positions from $S$ and Duplicator all positions from $D$.
The set of moves of $\kSimGame$ is the smallest set satisfying the following:\footnote{For ease of notation we write $(s,\bar u,\bar v)$ instead of $(s,(\bar u,\bar v))$ and the same for $(d_i,\bar u,\bar v)$.}
\begin{enumerate}
    \item There is a move from $(s,\bar u, \bar v)$ to $(d_i,\bar u',\bar v)$ 
    if $\bar u' = \bar u[i \mapcoord w]$, where $w$ is a "world" from $K$ at "distance" $\leq 1$ from $\bar u[j]$, for some $1 \leq j \leq k$ with $i\neq j$; and 
    \item There is a move from $(d_i,\bar u',\bar v)$ to $(s,\bar u',\bar v')$ 
    if $\bar v' = \bar v[i \mapcoord w]$, where $w$ is a "world" from $K'$ at "distance" $\leq 1$ from $\bar v[j]$, for some $1 \leq j \leq k$ with $i\neq j$.
\end{enumerate}
The winning condition for Duplicator is just any infinite play, which is a form of ``Safety condition'', which implies (positional) determinacy of the game.\footnote{These games are often presented as having infinite duration. Note that Spoiler is never `stuck' as he can always play by putting in component 1 the "world" contained in component 2, for example. However, Duplicator can get stuck.
\newcommand{\spoilerWin}{(s,\textit{Win})}%
The infinite duration variant would be adding a new position `$\spoilerWin$' owned by Spoiler, and having moves from $\spoilerWin$ to $\spoilerWin$, and from every position of $D$ to $\spoilerWin$. In this way, the winning condition of Duplicator are all plays which avoid going through position $\spoilerWin$. Since it is a Safety (hence parity) condition, it follows that the game is positionally determined \cite{DBLP:conf/focs/EmersonJ91}.}
For more information on this kind of games we refer the reader to~\cite[Chapter~2]{gradel2003automata}.

\AP
Given two "Kripke structures" $K, K'$, and tuples $\bar v\in W(K)^k$ and $\bar v'\in W(K')^k$, we say that $K',\bar v'$ ""$k$-simulates"" $K,\bar v$, notated $K,\bar v \intro*\pebblesim k K',\bar v'$, \phantomintro{\notpebblesim} if 1) $(s,\bar v, \bar v')$ is a valid position of $\kSimGame$ on $(K,K')$ (\ie, $\bar v, \bar v'$ induce a "partial homomorphism"), 2) all the "worlds" in $\bar v$ are in the same connected component of $K$, and 3) Duplicator has a winning strategy from $(s,\bar v, \bar v')$. See \Cref{fig:game} for an example. 
\begin{remark}\label{rem:connected}
Observe that any move of $\kSimGame$ preserves property 2) above.
\end{remark}

\begin{figure*}
\ifarxiv
    \includegraphics[width=1\textwidth]{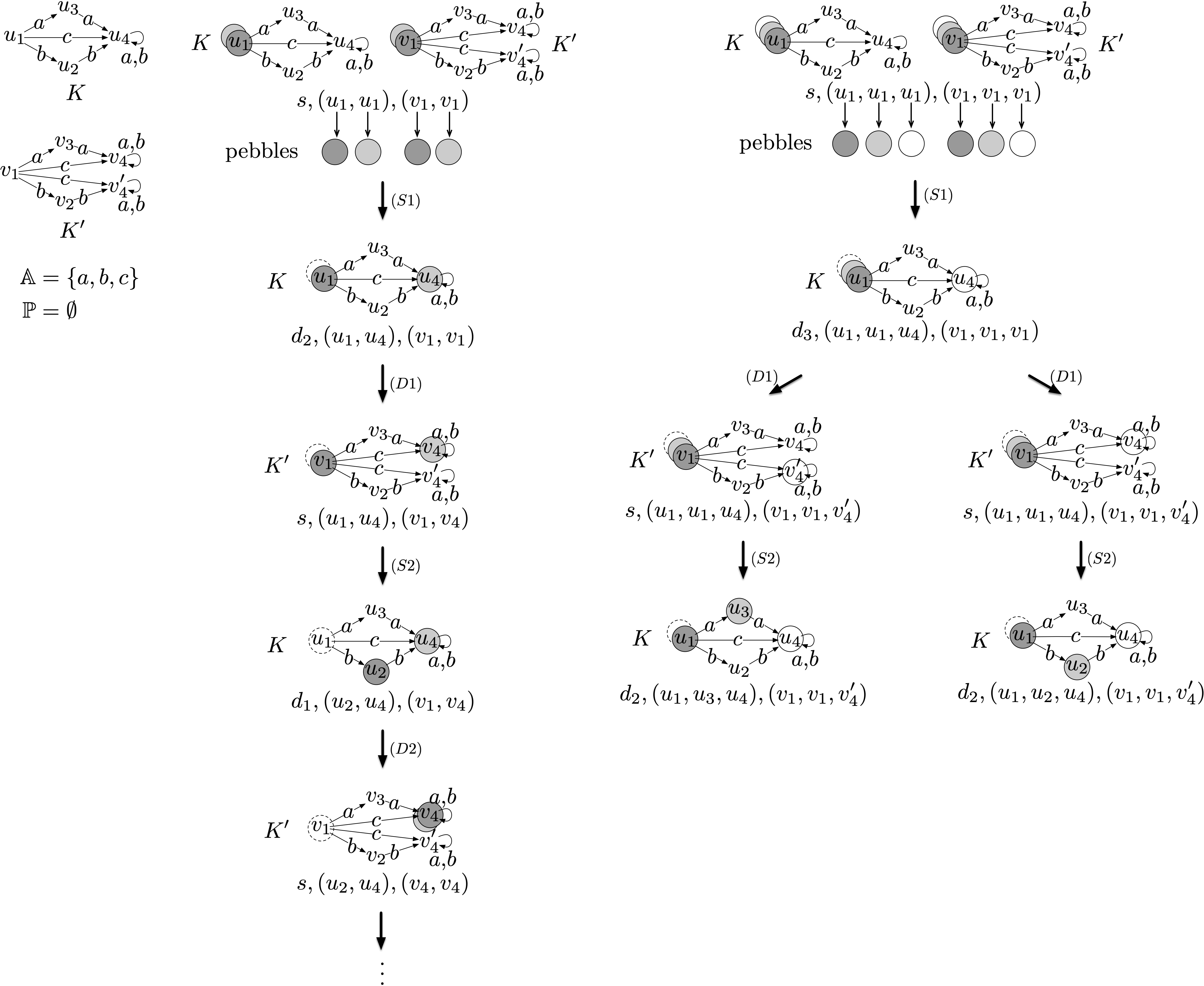}
\else
    \includegraphics[width=1\textwidth]{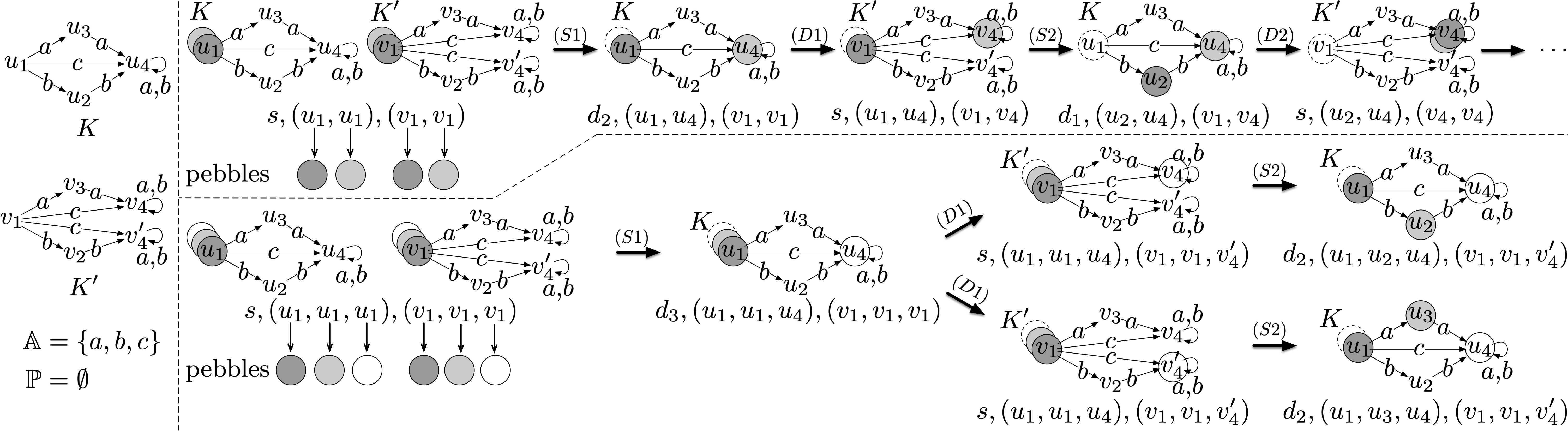}
\fi

    \caption{
    Left: Kripke structures $K$ and $K'$. 
    \ifarxiv
        Center: 
    \else
        Top:
    \fi
    an example of a game $\kSimGame[2]$ over $K$ and $K'$. The order of moves are: (S1), (D1), (S2), (D2). In fact, Duplicator can always answer to any Spoiler's move in $\kSimGame[2]$, and so $K,u_1 \pebblesim{2} K',v_1$. 
    \ifarxiv
        Right: 
    \else
        Bottom:
    \fi
    a tree representation of a winning strategy for Spoiler in $\kSimGame[3]$ over $K$ and $K'$. All possible answers by Duplicator are shown. Duplicator cannot answer after (S2) and so $K,u_1 \not\pebblesim{3} K',v_1$. Observe that $K,u_1,u_1\models\pi$ and $K,v_1,v_1\not\models\pi$ for $\pi=\{c(x_s,x_4),a(x_s,x_3),a(x_3,x_4),b(x_s,x_2),b(x_2,x_4)\}[x_s,x_s]$. Graphically the vectors in the positions can be thought as moving color pebbles on the "Kripke structures". Only "Kripke structures" that change pebble positions with respect to the previous move are shown.}

\label{fig:game}
\end{figure*}

\begin{theorem}\label{thm:sim-char}
    Let $k\geq 2$.
    Given "Kripke structures" $K,K'$ where $K'$ is of "finite degree" and "worlds" $u,v\in\worlds{K}$ and $u',v'\in\worlds{K'}$, the following are equivalent
    \begin{enumerate}
        \item for every "positive" $\CPDLg{\Tw[k]}$-"formula" $\phi$, we have $K,v \models \phi$ implies $K',v' \models \phi$; and
        \item $K,v \pebblesim{k+1} K',v'$;
    \end{enumerate}
    and the following are equivalent
    \begin{enumerate}
        \item for every "positive" $\CPDLg{\Tw[k]}$-"program" $\pi$, we have $K,u,v \models \pi$ implies $K',u',v' \models \pi$; and
        \item $K,u,v \pebblesim{k+1} K',u',v'$.
    \end{enumerate}
Furthermore, the hypothesis of "finite degree" is only needed for the $1$-to-$2$ implications.
\end{theorem}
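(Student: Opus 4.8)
The plan is to prove both equivalences by a single induction on the structure of "positive" "expressions", since "formulas" and "programs" are intertwined through the $\tup{\pi}$ and $\phi?$ constructs. Throughout I read $K,\bar u \pebblesim{k+1} K',\bar v$ as the assertion that Duplicator wins $\kSimGame[k+1]$ from $(s,\bar u,\bar v)$, and I exploit that membership in $\kHoms(K,K')$ forces the witnessing map to be a genuine "partial homomorphism"; in particular $\bar u[i]=\bar u[j]$ entails $\bar v[i]=\bar v[j]$, which is exactly what makes $\epsilon$ and tests behave correctly.

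For $2\Rightarrow1$—which, as the statement says, needs no "finite degree" assumption—I argue by induction. The base cases (\,"atomic propositions" $p$, Boolean connectives, "atomic programs" $a,\bar a$, $\epsilon$, and tests $\phi?$\,) follow immediately from the "partial homomorphism" property of the current position. The regular constructs $\cup,\circ,{}^*,\tup{\cdot}$ are handled by the usual unfolding/tracing argument: an intermediate "world" is reached through Duplicator's winning strategy and the inductive hypothesis on the sub-"programs" transfers the relation to $K'$. The decisive case is a "conjunctive program" $C[x_s,x_t]$ with $\uGraph{C[x_s,x_t]}\in\Tw[k]$: by \Cref{lem:bags_are_cliques} I may fix a "tree decomposition" of "width" $\leq k$ whose "bags" are cliques, and slide the $k+1$ pebbles along it bag by bag—the bag size $\leq k+1$ matching the pebble budget exactly—placing them on the images of a fixed "$C$-satisfying assignment" $f$ in $K$ while Duplicator keeps matching pebbles in $K'$. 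Applying the inductive hypothesis to each "atom" of a fully pebbled bag shows that the induced map is a "$C$-satisfying assignment" in $K'$, whence $(\bar v[i],\bar v[j])\in\dbracket{C[x_s,x_t]}_{K'}$. The recurring delicate point is keeping every pebble move legal as a bag is populated, which is precisely what the clique structure of the "bags" buys us.

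The harder direction is $1\Rightarrow2$, and it is here that "finite degree" of $K'$ enters. I build a winning strategy for Duplicator by showing that
\[
R=\set{(\bar u,\bar v)\in\kHoms(K,K')\mid\text{every positive }\CPDLg{\Tw[k]}\text{-program is preserved from }\bar u\text{ to }\bar v}
\]
is a sound invariant, i.e.\ closed under the game moves. The heart is the existence of a response: after Spoiler places pebble $i$ on a "world" $w$ at "distance" $\leq 1$ from $\bar u[j]$, Duplicator must find $w'$ at "distance" $\leq 1$ from $\bar v[j]$ keeping the configuration in $R$. Since $K'$ is of "finite degree" there are only finitely many candidates $w'_1,\dots,w'_m$; if each failed, then each $w'_l$ would be separated from $w$ by some "positive" "program" relating it poorly to the pebbles. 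I then bundle these finitely many witnesses, together with the atomic edge from $\bar u[j]$ to $w$, into a single "positive" "conjunctive program" on at most $k+1$ "variables"—so with "underlying graph" in $\Tw[k]$—that holds in $K$ but fails in $K'$, contradicting hypothesis $1$.

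I expect this bundling step to be the main obstacle: one must package finitely many distinguishing "positive" "programs" into one "conjunctive program" whose "underlying graph" genuinely has "tree-width" $\leq k$—so that the argument stays inside $\CPDLg{\Tw[k]}$—and this is exactly where both the finiteness supplied by "finite degree" and the hypothesis $k\geq 2$ are used. Finally, the "formula" statements follow from the "program" statements through the $\tup{\pi}$ and $\phi?$ constructs, and the clause that "finite degree" is needed only for the $1\Rightarrow2$ implications is then immediate, since the $2\Rightarrow1$ induction never appeals to it.
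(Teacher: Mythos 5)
Your overall architecture (mutual structural induction for $2\Rightarrow 1$, a back-and-forth argument for $1\Rightarrow 2$ using finite degree) is reasonable, and your $1\Rightarrow 2$ is in fact a genuinely different route from the paper's, which argues the contrapositive: by positional determinacy and finite degree, Spoiler's winning strategy is a \emph{finite} tree, and that entire tree is converted into one distinguishing conjunctive program whose tree decomposition is the strategy tree itself (with an extra step extracting a finite sub-program, since two worlds may be joined by infinitely many atomic programs). However, your local bundling step has a real gap. Your invariant $R$ quantifies over programs, which have only \emph{two} output variables, whereas the distinguishing witnesses $\pi_l$ for the finitely many candidate responses $w'_1,\dots,w'_m$ will in general anchor at \emph{different} pebbles $j_l$. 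A single program $C[x_s,x_t]$ can pin at most two of the $k+1$ current pebble positions, so the bundled program may well be satisfied in $K'$ by an assignment that relocates the unpinned pebbles, and no contradiction follows. To close the induction you would need to strengthen $R$ to preservation of multi-output conjunctive programs $C[\bar z]$ with $|\bar z|\le k+1$ (reducing to two-output programs only at the initial position, where the pebbles sit on at most two worlds), and the bundled program then has many more than $k+1$ variables --- each witness $\pi_l$ contributes its own --- even though its underlying graph can be kept in $\Tw[k]$. None of this is in your sketch, and it is precisely the difficulty the paper's global strategy-tree construction is built to avoid.

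There is a second gap in your $2\Rightarrow 1$ treatment of $C[x_s,x_t]$: ``sliding the pebbles bag by bag onto the images of a satisfying assignment $f$'' is not in general legal in $\kSimGame[k+1]$. A pebble may only be placed at distance $\le 1$ from another pebble, but an atom $\pi'(x,x')$ of $C$ can be long-range (e.g.\ $a^*$), so $f(x)$ and $f(x')$ need not be close; the clique-ification of \Cref{lem:bags_are_cliques} only adds reachability atoms and shrinks no distances, and with a full bag of $k+1$ variables there is no spare pebble to walk with. The paper's proof never pebbles the images of internal variables: it inducts on $|C|$ along the tree decomposition, only ever restricting the simulation to \emph{subtuples of the originally given position} (which costs nothing), verifies the root-bag atoms pairwise via the structural induction on the atoms' programs, and confines the ``walk with a spare pebble'' argument to the $\tup{\pi}$ case, where only two pebbles are in use and $k+1\ge 3$ leaves one free. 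You should restructure your conjunctive-program case along these lines rather than via in-game pebble movement.
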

\begin{proof}
    

\proofcase{1-to-2}
We first show the 1-to-2 implication.
    We remark first that the $k$-simulation relation is closely related to the  ``existential $k$-pebble game'' \cite{DBLP:journals/jcss/KolaitisV95,DBLP:journals/jcss/KolaitisV00} on first-order structures with binary and unary signatures. It is known \cite[Theorem~4]{DBLP:conf/cp/AtseriasKV04} that for every pair $K,K'$ of \emph{finite} "structures@@kripke", if for every Boolean "conjunctive query" $q$ of "tree-width" $<k$ we have that $K \models q$ implies that $K' \models q$, then Duplicator wins the existential $k$-pebble game on $K,K'$. Since "conjunctive programs" are basically "conjunctive queries", it is not surprising that one can adapt this result to our setting.

    By contrapositive, assume $K,u,v \notpebblesim{k+1} K',u',v'$. By ease of notation we write $(w_1,w_2)$ instead of the $k$-tuple $(w_1,w_2,\dots,w_2)$. 
    The cases where $((u,v), (u',v')) \not\in \kHoms(K,K')$ or where $u$ and $v$ belong to different connected components of $K$ are trivial, so let us assume the contrary.
    This means that Spoiler has a strategy to win in a bounded number of rounds in $\kSimGame[k+1]$ starting from $(s,(u,v),(u',v'))$. 
    Consider the winning strategy of Spoiler, described by a "tree" of finite height whose vertices are labelled with positions from $\kSimGame[k+1]$. In particular: (i) the root must be labeled with $(s,(u,v),(u',v'))$, (ii) any vertex labeled $(d_i,\bar u, \bar v)$ has a child labeled $(s,\bar u, \bar v')$ for every possible move of $\kSimGame[k+1]$, (iii) any vertex labeled $(s,\bar u, \bar v)$ has exactly one child 
    $(s,\bar u', \bar v)$,
    whose label is consistent with a move of $\kSimGame[k+1]$,
    (iv) for every leaf (labeled $(d_i,\bar u, \bar v)$) there is no possible move from that position in $\kSimGame[k+1]$.
    Observe that, since $K'$ is of "finite degree", there is a finite number of moves departing from any position of $\kSimGame[k+1]$ owned by Duplicator, and thus the branching of the "tree" is finite. Therefore, the strategy "tree" is finite.

    From such a winning strategy for Spoiler, consider the "tree" $T$ and labeling $\lambda: V(T) \to \worlds{K}^{k+1}$ resulting from: 
     (1) removing all vertices $y$ labeled by Spoiler positions except the root (\ie, the initial configuration, owned by Spoiler), and adding an edge between the parent of $y$ and the (sole) child of $y$, 
    and (2) projecting the labelling of each vertex onto its $\worlds{K}^{k+1}$ component (that is, $\lambda(x)=\bar u$ if the label of $x$ in the strategy "tree" was $(d_i,\bar u, \bar v)$). 
    From this "tree" one can construct a "conjunctive program" $C[x_s,x_t]$ of "tree-width" $k$ containing all the homomorphism types of the tuples.
    Let $Q$ be the set of all pairs $(y,i)$ where $y$ is a vertex from $T$ and $1 \leq i \leq k+1$, and let $\approx$ be the finest equivalence relation on $Q$ such that: (a) $(y,i) \approx (y,i')$ if $\lambda(y)[i] = \lambda(y)[i']$, and (b) $(y,i) \approx (y',i)$ if $y'$ is a child of $y$ and $\lambda(y)[i] = \lambda(y')[i]$. The "conjunctive program" $C$ will use a variable for each equivalence class of $\approx$, which we denote by $[y,i]_\approx$, and it contains: 
    (i) an "atom" $a([y,i]_\approx,[y',j]_\approx)$ if $y=y'$ and $(\lambda(y)[i],\lambda(y)[j]) \in \dbracket{a}_K$, and
    (ii) an "atom" $p?([y,i]_\approx,[y,i]_\approx)$ if $\lambda(y)[i] \in \dbracket{p}_K$.
    The variable $x_s$ \resp{$x_t$} is the class $[y,i]_\approx$ where $y$ is the root of $T$ and $\lambda(y)[i] =u$ \resp{$\lambda(y)[i] = v$}. See Figure \ref{fig:1-to-2} for an illustration.

    \begin{claim}
        $C[x_s,x_t] \in \CPDLg{\Tw}$.
    \end{claim}
    \begin{proof}
        Due to \Cref{rem:connected} and the fact that $u$ and $v$ are in the same component of $K$, $\uGraph{C}$ is connected. On the other hand, $\uGraph{C[x_s,x_t]}$ has "tree-width" $k$ because $(T,\lambda)$ induces a "tree decomposition" $(T,\lambda')$ of "width" $k$ of $\uGraph{C[x_s,x_t]}$, where $\lambda'(y) = \set{[y,i]_\approx : 1 \leq i \leq k+1}$ for every $y \in V(T)$.
    \end{proof}
    \begin{claim}
        $K,u,v \models C[x_s,x_t]$ and $K',u',v' \not\models C[x_s,x_t]$.
    \end{claim}
    \begin{proof}
    The fact that $K,u,v \models C[x_s,x_t]$ is a consequence of  $(y,i) \approx (y',i')$ implying $\lambda(y)[i] = \lambda(y')[i']$ by definition. Indeed, $f=\set{[y,i]_\approx \mapsto \lambda(y)[i] : y \in V(T), 1 \leq i \leq k+1}$ is a "$C$-satisfying assignment".

    We show $K',u',v' \not\models C[x_s,x_t]$ by means of contradiction. If $K',u',v' \models C[x_s,x_t]$, consider the corresponding "$C$-satisfying assignment" $f': \vars(C) \to \worlds{K'}$. Using $f'$ we can select a branch of Spoiler's strategy as follows: for every vertex $y$ labeled $(d_i, \bar u, \bar v)$ of the strategy tree, pick $(s, \bar u, \bar v')$ where $\bar v'[j] = f'([y',j]_\approx)$, where $y' \in V(T)$ is the (only) child of $y$ in the strategy tree. 
    It remains to show that there is a move from $(d_i, \bar u, \bar v)$ to $(s, \bar u, \bar v')$ in $\kSimGame[k+1]$.
    By definition of $C$, it contains all possible "atoms" derived from the substructure of $K$ induced by $\bar u$ using the variables $[y',1]_\approx, \dotsc, [y',k+1]_\approx$. Thus, $\set{\bar u[t] \mapsto f'([y',t]_\approx)}_t = \set{\bar u[t] \mapsto \bar v'[t]}_t$ witnesses a "partial homomorphism" from $K$ to $K'$, and hence $(d_i, \bar u, \bar v) \rightarrow (s, \bar u, \bar v')$ is a valid move of $\kSimGame[k+1]$. 
    This means that Spoiler's strategy tree cannot be finite since by descending the tree in this way we end up in a leaf $(d_i, \bar u, \bar v)$ of the strategy tree which has an outgoing move in $\kSimGame[k+1]$, contradicting the fact that the strategy was winning for Spoiler (more precisely point (iv)).
    \end{proof}
    Observe that, strictly speaking, $C$ may have an \emph{infinite} number of "atoms" (over the \emph{finite} set $\vars(C)$ of "variables"). This is because between two "worlds" of a "finite degree" "Kripke structure", there may be infinitely many transitions.
    However, from the fact that $K',u',v' \not\models C[x_s,x_t]$ and that $K'$ has "finite degree", one can extract a finite set of "atoms" $C' \subseteq C$ such that $K',u',v' \not\models C'[x_s,x_t]$ as we describe next.
    
    Consider the set $F$ of mappings $f: \vars(C) \to \worlds{K'}$ such that 
        (i) $f(x_s,x_t) = (u',v')$, and 
        (ii) $f(x)$ is at "distance" at most $|\vars(C)|$ from $u'$ or $v'$.
    Notice that, since $K'$ has "finite degree", $F$ is finite.
    For every $f \in F$, choose an "atom" $a_f(x_f,y_f)$ of $C$ which does not map to $K'$, \ie, such that $(f(x_f),f(y_f)) \not\in {\rightarrow_{a_f}}$ for the transition relation $\rightarrow_{a_f}$ of $K'$. Let $C'$ be the set of all such "atoms", plus some (finite) set of "atoms" from $C$ to assure (a) $\vars(C') = \vars(C)$ and (b) that $\uGraphC{C'}$ is connected. Notice that $C'$ is now finite.
    We claim that $K',u',v' \not\models C'[x_s,x_t]$. Indeed, if there was a "$C'$-satisfying assignment" $f: \vars(C') \to \worlds{K'}$, then $f \in F$ and $C'$ should include the "atom" $a_f(x_f,y_f)$ which is not mapped by $f$ to $K'$. Hence, there is no "$C'$-satisfying assignment" and $K',u',v' \not\models C'[x_s,x_t]$.
\begin{figure}\centering
\ifarxiv
    \includegraphics[width=.7\textwidth]{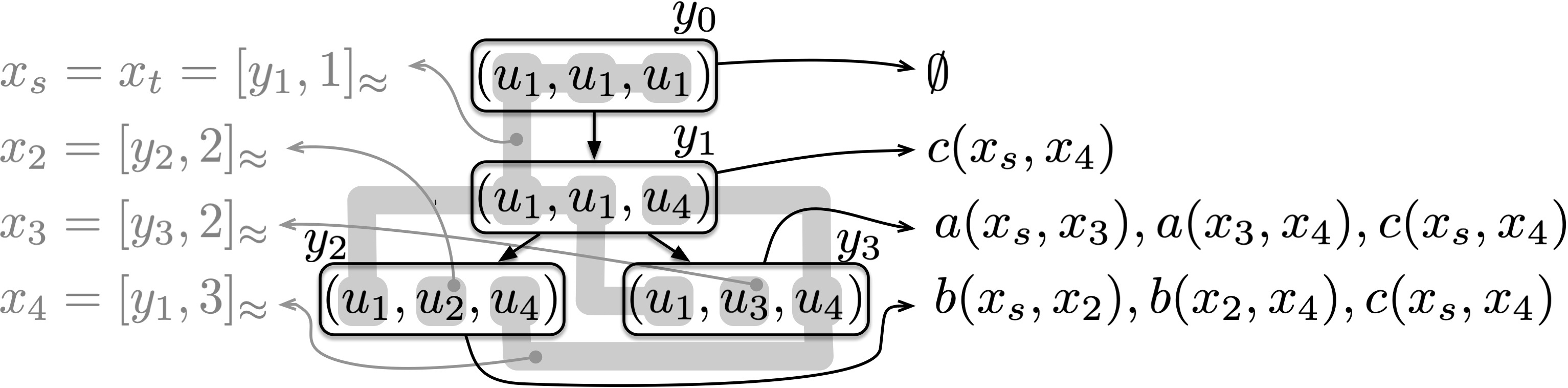}
\else
    \includegraphics[width=.45\textwidth]{img/1-to-2.png}
\fi
\caption{An example of the tree $T$ with nodes $y_0,y_1,y_2,y_3$ in the proof of Theorem \ref{thm:sim-char}\proofcase{1-to-2} built from Spoiler's strategy of Figure \ref{fig:game} (bottom). The variables involved and the "atoms" in $C$ such that $K,u_1,u_1\models\pi$ and $K',v_1,v_1\not\models\pi$ for $\pi=C[x_s,x_t]$. The grey area represents the partition induced by $\approx$.}
\label{fig:1-to-2}
\end{figure}


\smallskip
The case of $K,u \notpebblesim{k+1} K',u'$ is analogous. In this case, $C[x_s,x_t]$ is built just as before, but we set $x_t$ to be equal to $x_s$. The final "formula" is then $\tup{C[x_s,x_t]}$.

    \medskip


\proofcase{2-to-1}
We now show the 2-to-1 implication by structural induction on the expression.
For any $(\bar u,\bar v)\in\set{(v,v'),((u,v),(u',v'))}$, assuming $K, \bar v \pebblesim {k+1} K', \bar v'$ 
and $K,\bar v \models e$ 
    for a "positive" $\CPDLg{\+G_{\Tw[k]}}$ expression $e$, we show $K',\bar v' \models e$.

    \begin{toappendix}
        \subsection{Missing cases in the 2-to-1 proof of Theorem~\ref{thm:sim-char}}
        \label{apx:missing-cases-2-1-simchar}
    \end{toappendix}

    The only interesting case is the one where $e$ is a "conjunctive program", the remaining cases are easy and can be found in \ifarxiv \Cref{apx:missing-cases-2-1-simchar}.\else the "full version". \fi

    \begin{toappendix}

    \proofsubcase{$e=p$ for $p\in\Prop$} If $K,v \models p$ then it follows that since there is a "partial homomorphism" mapping $v$ to $v'$ we have $K',v' \models p$.

    \proofsubcase{$e=a$ for $a\in\Prog$} If $K,u,v \models a$ then it follows that since there is a "partial homomorphism" mapping $u\mapsto u'$ and $v\mapsto v'$ we have $K',u',v' \models a$.

    \proofsubcase{$e=\bar a$ for $a\in\Prog$} Analogous to the case $e=a$.

    \proofsubcase{$e=\phi_1\land\phi_2$} If $K,v \models \phi_1\land\phi_2$ then for $i=1,2$ we have $K,v \models \phi_i$ and by inductive hypothesis we have $K',v' \models \phi_i$ for $i=1,2$, which implies $K',v' \models \phi_1\land\phi_2$.

    \proofsubcase{$e=\pi_1\cup\pi_2$} Analogous to the previous case.

    \proofsubcase{$e=\phi?$} If $K,u,v \models \phi?$ then $u=v$ and $K,u \models \phi$. By inductive hypothesis $K',u' \models \phi$ and so $K',u',v' \models \phi$.

    \proofsubcase{$e=\tup{\pi}$} 
    If $K,v \models \tup{\pi}$ then there is some $\tilde v$ in $K$ such that $(v,\tilde v) \in \dbracket{\pi}_K$. Since we have: (1) $v$ and $\tilde v$ are part of the same connected component in $K$, (2) $(K, v) \pebblesim{k+1} (K', v')$, 
    \sidediego{saqué una H, acá y más abajo (comentado)} and (3) $k+1 \geq 3$, it follows that there is some $\tilde v'$ such that  
    $(K, v,\tilde v) \pebblesim{k+1} (K', v', \tilde v')$.
    This can be obtained by ``navigating'' from $v$ to $\tilde v$ in $K$ using the definition of $\pebblesim{k+1}$.
    We can now apply the inductive hypothesis obtaining that $K',v',\tilde v' \models \pi$, and thus that $K',v' \models \tup{\pi}$.

    \end{toappendix}

    \proofsubcase{$e = C[x_s,x_t]$} 
    We will actually work with the more general version of "programs" $C[\bar z]$ where $\bar z$ may have any number $1 \leq t \leq k+1$ of variables from $\vars(C)$, and the "underlying graph" $\uGraph{C[\bar z]}$ of $C[\bar z]$ is $\uGraph{C}$ plus the edges $\set{y,y'}$ for every pair of distinct variables $y,y'$ from $\bar z$. Hence,  $\uGraph{C[\bar z]} \in \Tw[k]$ means that all variables $\bar z$ appear in a bag of its "tree decomposition".
     The semantics $\dbracket{C[\bar z]}_{\tilde K}$ on a "Kripke structure" $\tilde K$ is as expected: the set of $t$-tuples $\bar u$ of "worlds" from $\tilde K$ such that $f(\bar z) = \bar u$ for some "$C$-satisfying assignment" $f$.

    Suppose we have $e = \hat C[x_s,x_t]$. We show the following:
    \begin{claim}
        For every $C \subseteq \hat C$ and every tuple $x_1, \dotsc, x_n$ of variables from $\vars(C)$ with $n \leq k+1$ such that $\uGraph{\hat C[x_1, \dotsc, x_n]} \in \Tw[k]$, for all $\bar v\in\worlds{K}^n$ and $\bar v'\in\worlds{K'}^n$: if $(K, \bar v) \pebblesim{k+1} (K', \bar v')$ and $\bar v  \in \dbracket{C[x_1, \dotsc, x_n]}_K$, then $\bar v'  \in \dbracket{C[x_1, \dotsc, x_n]}_{K'}$.
    \end{claim}
    First note that the claim above implies $K',u',v' \models \hat C[x_s,x_t]$. We now prove the claim by induction on the size $|C|$ of $C$. 
    %
    Let $(T,\bagmap)$ be a "tree decomposition" of $\uGraph{\pi}$ of "width" $k$ for $\pi = C[x_1, \dotsc, x_n]$. Without loss of generality, suppose that the root bag $b$ of $T$ is so that $\bagmap(b) = \set{x_1, \dotsc, x_n}$. 
    
    Suppose $(K, \bar v) \pebblesim{k+1} (K', \bar v')$ and $\bar v  \in \dbracket{\pi}_K$. 
    We will show $\bar v'  \in \dbracket{\pi}_{K'}$.
    Let $h$ be a "$C$-satisfying assignment" on $K$ such that $\bar v = (h(x_1), \dotsc, h(x_n))$. We need to exhibit a "$C$-satisfying assignment" $h'$ on $K'$ such that $\bar v' = (h'(x_1), \dotsc, h'(x_n))$.

    Suppose that the root $b$ of $T$ has $\ell$ children $b_1, \dotsc, b_\ell$, and let $T_1, \dotsc, T_\ell$ be the corresponding "subtrees@trees". Let $\bar y_i$ be any vector of the (pairwise distinct) variables from $\bagmap(b_i) \cap \bagmap(b)$, that is, the variables which are both in the root "bag" of $T$ and in the "bag" of its $i$-th child. We can assume that $\bar y_i$ is of dimension $\leq k$ (otherwise parent and child would have exactly the same "bag").
    Let $\bar v_0 = \bar v$, and $\bar v_i = h(\bar y_i)$ for every $1 \leq i \leq \ell$.
    Let $C_0$ be the set of all "atoms" of $C$ in the root bag (\ie, such that both variables are in $\bagmap(b)$). 
    Let $C_i \subseteq (C \setminus C_0)$ be the set of all "atoms" of $C \setminus C_0$ contained in "bags" of $T_i$ (\ie, such that both variables are in some "bag" of $T_i$). 
    
    It then follows that $K,\bar v \models C[x_1, \dotsc, x_n]$ if{f} $K,\bar v_i \models C_i[\bar y_i]$ for every $0 \leq i \leq \ell$.
    Now we have two possibilities, either $C_0 = C$ or $C_0 \subsetneq C$.

        Case $C_0 = C$:~~ If $C_0 = C$,  
        this means that all the "atoms" of $C$ use variables from $\set{x_1, \dotsc, x_n}$.
        Consider the function $h':\vars(C)\to K'$ such that $h'(x_i)=\bar u'[j]$ if $h(x_i)=\bar u[j]$. Let us show that $h'$ is a "$C$-satisfying assignment" on $K'$.
        For every "atom" $\pi'(x_i,x_j) \in C=C_0$, we have that $(h(x_i),h(x_j)) = (\bar v[i],\bar v[j])$ 
        is in $\dbracket{\pi'}_K$ by hypothesis on $h$.
        On the other hand we also have $K,\bar v[i], \bar v[j] \pebblesim{k+1} K',\bar v'[i],\bar v'[j]$. Hence, $(\bar v'[i], \bar v'[j]) \in \dbracket{\pi'}_{K'}$ by inductive hypothesis on $\pi'$. This means that $h'$ is a "$C$-satisfying assignment" on $K'$.

        Case $C_0 \subsetneq C$:~~
        If $C_0 \subsetneq C$, then $|C_i| < |C|$ for every $i$.\footnote{There is one degenerate case in which $C_0=\emptyset$, but since the "graph" $\uGraphC{C}$ of $C$ is connected, we can also assume that the "tree decomposition" is so that there is at least one "atom" in $C_0$.} We can then apply the inductive hypothesis on each $C_i$ since we have $K,\bar v_i \models C_i[\bar y_i]$ and $K,\bar v_i \pebblesim{k+1} K',\bar v'_i$, obtaining $K',\bar v'_i \models C_i[\bar y_i]$ for some "$C$-satisfying assignment" $h'_i$ on $K'$ sending $\bar y_i$ to $\bar v'_i$. This means that $h' \eqdef h'_0 \cup \dotsb \cup h'_\ell$ is a "$C$-satisfying assignment" on $K'$ and thus that $K',h'(x_1), \dotsc, h'(x_n) \models C[x_1, \dotsc, x_n]$.
    \begin{toappendix}

    \proofsubcase{$e = \pi^*$} Suppose $K, u,v \models \pi^*$ 
    and $K,u,v \pebblesim{k+1} K',u',v'$. We proceed by induction on the number of iterations. The base case $0$ corresponds to $K, u,v \models \pi^0$ which happens if and only if $u=v$, in which case $u'=v'$ by definition of $\pebblesim{k+1}$ and thus $K, u',v' \models \pi^0$.
    For the inductive case, $K, u,v \models \pi^n$. Note that $\pi^n$ is equivalent to $\tilde \pi = C[x_0,x_n]$ for $C = \set{\pi(x_i,x_{i+1}) \mid 0 \leq i < n}$. Since the "subexpressions" of $\tilde \pi$ are the same as those of $\pi^*$ (namely, $\pi$ and the "subexpressions" therein), we can replace $\pi^n$ with $\tilde \pi$ and invoke the previous case, obtaining  $K',u',v' \models \pi^n$.

    \proofsubcase{$\pi \circ \pi'$} This case is similar to the case $\pi^*$.
    \qed
    \end{toappendix}
\end{proof}


\subsection{Bisimulation relation}

We define here the notion of bisimulation on $(K,K')$, as before, using a two-player game
$\intro*\kBisimGame$, as was done for the simulation game $\kSimGame$. Assume, without loss of generality, that $K,K'$ have disjoint sets of "worlds".  This time the arena of $\kBisimGame$ has a set of positions $S \cup D $, where
\begin{align*}
    S &= \set{s} \times (\kHoms(K,K')\dcup \kHoms(K',K)),\\
    D &= \set{d_1,\dotsc, d_k} {\times}\big((\worlds{K}^k {\times} \worlds{K'}^k) \dcup (\worlds{K'}^k {\times} \worlds{K}^k)\big),
\end{align*}
Spoiler owns all positions from $S$, and Duplicator all positions from $D$. 
The set of moves of $\kBisimGame$ is the smallest set satisfying the following:
\begin{enumerate}
    \item There is a move from $(s,\bar u, \bar v)$ to $(d_i,\bar u',\bar v)$ 
    if $\bar u' = \bar u[i \mapcoord w]$, where $w$ is a "world" at "distance" $\leq 1$ from $\bar u[j]$, for some $1 \leq j \leq k$ with $i\neq j$; and 
    \item \label{it:kbisim:neg} There is a move from $(s,\bar u, \bar v)$ to $(s,\bar v,\bar u)$ if $\bar u[1] = \dotsb = \bar u[k]$ (and hence $\bar v[1] = \dotsb =\bar v[k]$).
    \item There is a move from $(d_i,\bar u',\bar v)$ to $(s,\bar u',\bar v')$ if $\bar v' = \bar v[i \mapcoord w]$, where $w$ is a "world" at "distance" $\leq 1$ from $\bar v[j]$, for some $1 \leq j \leq k$ with $i\neq j$.
\end{enumerate}
Again, the winning condition for Duplicator is just any infinite play.

We now define two related notions.
\AP
\newcommand{\quasibisim}{$k$-half-bisimulation}
The ""\quasibisim"" notion $\pebblequasibisim k$ is obtained as follows.
Given two "Kripke structures" $K, K'$ with disjoint set of "worlds" (this is without loss of generality), and given tuples $\bar v\in W(K)^k$ and $\bar v'\in W(K')^k$, we  say that there is a "\quasibisim" from $K,\bar v$ to $K',\bar v'$, notated $K,\bar v \intro*\pebblequasibisim k K',\bar v'$, 
$\phantomintro{\notpebblequasibisim}$%
if 1) $(s,\bar v, \bar v')$ is a valid position of $\kBisimGame$ on $(K,K')$  (\ie, they induce "partial homomorphisms"), 2) all the "worlds" in $\bar v$ are in the same connected component of $K$ and 3) Duplicator has a winning strategy from $(s,\bar v, \bar v')$. 
\AP
We say that there is a ""$k$-bisimulation"" between $K,\bar v$ and $K',\bar v'$, and we note it $K,\bar v \intro*\pebblebisim k K', \bar v'$ if $K,\bar v \pebblequasibisim k K',\bar v'$ and $K',\bar v' \pebblequasibisim k K,\bar v$.

\begin{remark}\label{rem:bisim-imples-sim}
    If $K_1,\bar v_1 \pebblequasibisim k K_2,\bar v_2$ then $K_1,\bar v_1 \pebblesim k K_2,\bar v_2$.
\end{remark}

A similar characterization result as \Cref{thm:sim-char} can be shown for "\quasibisim", that is: (1) $K, \bar v \pebblequasibisim{k+1} K', \bar v'$ if, and only if, (2) for every $\CPDLg{\Tw[k]}$ "expression" $e$ ("formula" or "program" accordingly), we have $K,\bar v \models e$ implies $K',\bar v' \models e$.
The proof follows the same general blueprint. In the case 1-to-2, the strategy of Spoiler can be more complex, as he may change structures. But it can still be transformed into a "formula" of $\CPDLg{\Tw[k-1]}$, using negation each time it switches structure. For the 2-to-1 case, the proof goes by structural induction on the "expression" exactly as in \Cref{thm:sim-char}. The base cases are just like in \Cref{thm:sim-char} due to \Cref{rem:bisim-imples-sim}. If $\phi = \lnot \psi$, then by \Cref{it:kbisim:neg} we have $K',v' \pebblequasibisim{k+1} K,v$ and thus by inductive hypothesis we have $K',v' \models \psi$ implies $K,v \models e$. The latter is the same as $K,v \models \phi$ implies $K',v' \models \phi$.  The formal statement and proof details can be found in \ifarxiv \Cref{subsec:char-quasibisim}. \else the "full version". \fi

\begin{toappendix}
\subsection{Characterization in terms of \quasibisim}
\label{subsec:char-quasibisim}
\begin{lemma}\label{lem:quasibisim-char}
    Let $k\geq 2$.
    Given "Kripke structures" $K,K'$ of "finite degree" and "worlds" $u,v\in\worlds{K}$ and $u',v'\in\worlds{K'}$, the following are equivalent
    \begin{enumerate}
        \item  for every $\CPDLg{\Tw[k]}$ "formula" $\phi$, we have $K,v \models \phi$ implies $K',v' \models \phi$; and
        \item $K,v \pebblequasibisim{k+1} K',v'$;
    \end{enumerate}
    and the following are equivalent
    \begin{enumerate}
        \item for every $\CPDLg{\+G_{\Tw[k]}}$ "program" $\pi$, we have $K,u,v \models \pi$ implies $K',u',v' \models \pi$; and
        \item $K,u,v \pebblequasibisim{k+1} K',u',v'$.
    \end{enumerate}
Furthermore, the hypothesis of "finite degree" is only needed for the $1$-to-$2$ implications.
\end{lemma}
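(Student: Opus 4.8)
The plan is to follow the blueprint of \Cref{thm:sim-char} for both directions, the only genuinely new ingredient being the treatment of the switching move~\ref{it:kbisim:neg} of $\kBisimGame$, which is exactly what accounts for negation.

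For the (2)$\Rightarrow$(1) direction I would argue by structural induction on the expression, exactly as in the 2-to-1 part of \Cref{thm:sim-char}. All the cases already treated there --- atomic propositions and programs, $\land$, $\cup$, $\tup{\cdot}$, tests, $\circ$, $\pi^*$, and most importantly the "conjunctive program" case --- go through verbatim, since the only property of $\pebblesim{k+1}$ they use is available for $\pebblequasibisim{k+1}$ via \Cref{rem:bisim-imples-sim} (indeed $K,\bar v\pebblequasibisim{k+1}K',\bar v'$ implies $K,\bar v\pebblesim{k+1}K',\bar v'$). The single new case is $\phi=\lnot\psi$. Here I use that, from $(s,\bar v,\bar v')$ with all pebbles collapsed onto $v$ (resp.\ $v'$), Spoiler may fire the switching move of item~\ref{it:kbisim:neg}, so that $K,v\pebblequasibisim{k+1}K',v'$ entails $K',v'\pebblequasibisim{k+1}K,v$. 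Applying the inductive hypothesis to $\psi$ in the switched direction gives ``$K',v'\models\psi$ implies $K,v\models\psi$'', whose contrapositive is precisely ``$K,v\models\lnot\psi$ implies $K',v'\models\lnot\psi$''.

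For the harder (1)$\Rightarrow$(2) direction I would again argue by contraposition, assuming $K,v\notpebblequasibisim{k+1}K',v'$ and building a distinguishing $\CPDLg{\Tw[k]}$ formula. As in \Cref{thm:sim-char}, finite degree makes Spoiler's winning strategy tree finite. The new feature is that its branches are now segmented by the switching moves: along any branch the pebble positions alternate between stretches that look exactly like plays of $\kSimGame[k+1]$ and the instants at which Spoiler switches structures, which by item~\ref{it:kbisim:neg} can only happen when all pebbles coincide, i.e.\ at a position that semantically denotes a single world. I would process the strategy tree recursively from the leaves upward, one switching-free layer at a time: each maximal switching-free sub-strategy is converted, exactly by the construction in the 1-to-2 part of \Cref{thm:sim-char}, into a "conjunctive program" $C[x_s,x_t]$ of "tree-width" $\leq k$ whose "atoms" record the homomorphism type of the visited tuples; whenever a node of that layer is a switch point, the atom database at that single world instead receives a test $\psi?$, where $\psi$ is the \emph{negation} of the $\CPDLg{\Tw[k]}$ formula recursively produced for the sub-strategy hanging below the switch. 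Collapsing all pebbles to one world at each switch guarantees that what gets negated is always a \emph{formula} (a unary test), never a binary "program", so the construction is well typed and the "negation depth" equals the number of switches. The finiteness-of-$C$ argument via finite degree from \Cref{thm:sim-char} is applied independently to each layer, and the program version is recovered as there by keeping $x_s,x_t$ distinct.

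The main obstacle I expect is bookkeeping the \emph{tree-width} across the nesting of negated subformulas. Each switching-free layer yields a "tree decomposition" of "width" $\leq k$ for its own "conjunctive program" (as in \Cref{thm:sim-char}), but one must check that inserting a negated sub-formula as a test $\psi?$ at the single switch world does not increase the width: this is exactly where collapsing all pebbles to one world is essential, since a test is attached at a single "variable" and therefore sits inside any "bag" containing that variable, leaving the width unchanged. A secondary point to verify is that the switched positions produced by item~\ref{it:kbisim:neg} remain valid starting positions of $\kBisimGame$ on the swapped pair $(K',K)$ --- they induce "partial homomorphisms" in the opposite direction and preserve connectedness --- so that the recursive calls are legitimate instances of the same statement with $K$ and $K'$ exchanged.
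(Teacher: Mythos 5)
Your proposal is correct and follows essentially the same route as the paper's proof: structural induction with the switching move handling $\lnot\psi$ via the reversed half-bisimulation for the (2)$\Rightarrow$(1) direction, and, for (1)$\Rightarrow$(2), a bottom-up processing of Spoiler's finite strategy tree decomposed along switching edges, where each switching-free layer yields a conjunctive program as in \Cref{thm:sim-char} and each switch point contributes a negated test atom $\lnot\tup{C_i[x,x]}?(z,z)$ at the collapsed variable. The subtleties you flag (switching only at positions with all pebbles coincident, so the negated subexpression is a unary test that does not affect tree-width, and the per-layer finite-degree argument) are exactly the points the paper's proof relies on.
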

\begin{proof}
    \proofcase{1-to-2}
    We proceed as in the proof of \Cref{thm:sim-char}, by contrapositive. Assume $K,u,v \notpebblequasibisim{k+1} K',u',v'$. 
    The case where $((u,v), (u',v')) \not\in \kHoms(K,K')$ is  trivial, so let us assume the contrary.
    This means that Spoiler has a strategy to win in a bounded number of rounds in $\kBisimGame[k+1]$ starting from $(s,((u,v),(u',v')))$.

    The winning strategy of Spoiler is again a finite "tree", whose vertices are labelled with positions from $\kBisimGame[k+1]$. In particular: (i) the root must be labeled with $(s,((u,v),(u',v')))$, (ii) any vertex labeled $(d_i,\bar u, \bar v)$ has a child labeled $(s,\bar u', \bar v)$ for every possible move of $\kBisimGame[k+1]$, (iii) any vertex labeled $(s,\bar u, \bar v)$ has exactly one child, whose label is consistent with a move of $\kBisimGame[k+1]$ (remember that Spoiler has two types of moves this time), (iv) for every leaf (labeled $(d_i,\bar u, \bar v)$) there is no possible move from that position in $\kBisimGame[k+1]$.
    Since both $K,K'$ are of "finite degree", there is a finite number of moves departing from any position of $\kBisimGame[k+1]$, and thus the branching of the "tree" is finite. Therefore, the strategy tree is finite.

    From such a winning strategy for Spoiler, consider the "tree" $T$ and labeling $\lambda: V(T) \to (\worlds{K}^{k+1} \dcup \worlds{K'}^{k+1})$ resulting from: (1) removing all vertices $v$ labeled by Duplicator positions, and adding edges between the parent of $v$ and all the children of $v$, and (2) projecting the labeling of each vertex onto the components of its first  "structure@@kripke" (that is, $\lambda(x)=\bar u$ if the label of $x$ in the strategy "tree" was $(s,\bar u, \bar v)$).
    \AP
    Let us call an edge ``""switching""'' if for some $\bar u, \bar v$, the parent is labeled $(s,\bar u, \bar v)$ and the child labeled $(s,\bar v, \bar u)$ (note that this implies $\bar u = (u, \dotsc, u)$ and $\bar v=(v, \dotsc, v)$ for some $u,v$ by definition of $\kBisimGame[k+1]$).
    Then, $T$ can be seen as a ``"tree" of "trees"'', each tree being a component of $T$ after removing "switching" edges. Consider all the "trees" $T_1, \dotsc, T_n$ resulting from removing the "switching" edges from $T$.
    We now proceed bottom-up: For each $T_i$ such that only its root may have a "switching" edge from (the leaf of) another "tree", we proceed as in the proof of \Cref{thm:sim-char}, and construct a "conjunctive program" $C_i[x,x]$. For the inductive case, suppose we have some tree $T_j$ whose leaves may have switching edges to other trees. Then we construct the "conjunctive program" for $T_j$ as before, but we add some extra "atoms": we add the "atom" $\lnot\tup{C_i[x,x]}? (z,z)$ for the variable $z$ of $C_j$ of the form $[v,\ell]_\approx$ where $v$ is the leaf of $T_j$ incident to the "switching" edge connecting $T_j$ with $T_i$.\footnote{Note that in this case $[v,\ell]_\approx = [v,\ell']_\approx$ for all $\ell,\ell'$, since the definition of $\kBisimGame[k+1]$ restricts these "switching" moves to be applied only to positions of the form $(s,(u,\dotsb, u), (v, \dotsb, v))$.} By a similar remark as in the proof of \Cref{thm:sim-char}, we can assume that each $C_i$ has a finite number of "atoms" (this time using "finite degree" of the `current' "structure@@kripke").

    Finally, one can show that the resulting "conjunctive program" $C[x_s,x_t]$ from processing all the $T_i$'s faithfully describes the winning strategy for Spoiler, and thus that it is not possible to have $K,u,v \models C[x_s,x_t]$.

    \proofcase{2-to-1} Assume $K,v \pebblequasibisim{k+1} K',v'$. The proof goes by structural induction on the "formula" like \Cref{thm:sim-char}. 
    %
    The base cases are just like in \Cref{thm:sim-char} due to \Cref{rem:bisim-imples-sim}.
    If $\phi = \lnot \psi$, then by \Cref{it:kbisim:neg} we have $K',v' \pebblequasibisim{k+1} K,v$ and thus by inductive hypothesis we have $K',v' \models \psi$ implies $K,v \models \psi$. The latter is the same as $K,v \models \phi$ implies $K',v' \models \phi$.
    %
    %
    The remaining cases are exactly as in the proof of \Cref{thm:sim-char}, with the only difference that we use the inductive hypothesis on "formulas" which may contain negation.
\end{proof} 
\end{toappendix}

Observe that, since "formulas" are closed under negation, from the previous characterization \ifarxiv (\Cref{lem:quasibisim-char}) \fi  we obtain that $K,v \pebblequasibisim{k+1} K',v'$ if, and only if, for every $\CPDLg{\Tw[k]}$ "formula" $\phi$, we have $K,v \models \phi$ if{f} $K',v' \models \phi$. In order to have a similar result for "programs" we just need the symmetrical closure of this notion, which is the "$k$-bisimulation":

\begin{theorem}
    \label{them:bisim-charac}
    Let $k\geq 2$.
    Given "Kripke structures" $K,K'$ of "finite degree" and "worlds" $u,v\in\worlds{K}$ and $u',v'\in\worlds{K'}$, the following are equivalent
    \begin{enumerate}
        \item for every $\CPDLg{\Tw[k]}$ "formula" $\phi$, we have $K,v \models \phi$ if{f} $K',v' \models \phi$; and
        \item $K,v \pebblebisim{k+1} K',v'$;
    \end{enumerate}
    and the following are equivalent
    \begin{enumerate}
        \item for every $\CPDLg{\+G_{\Tw[k]}}$ "program" $\pi$, we have $K,u,v \models \pi$ if{f} $K',u',v' \models \pi$; and
        \item $K,u,v \pebblebisim{k+1} K',u',v'$.
    \end{enumerate}
Furthermore, the hypothesis of "finite degree" is only needed for the $1$-to-$2$ implications.
\end{theorem}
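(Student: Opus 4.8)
The plan is to treat this theorem as a direct corollary of the half-bisimulation characterization \Cref{lem:quasibisim-char} together with the definition of "$k$-bisimulation" as the two-sided closure $K,\bar v \pebblebisim{k+1} K',\bar v'$ iff both $K,\bar v \pebblequasibisim{k+1} K',\bar v'$ and $K',\bar v' \pebblequasibisim{k+1} K,\bar v$ hold. The point is that \Cref{lem:quasibisim-char} characterizes each \quasibisim{} as a \emph{one-directional} implication over all expressions, so conjoining the two directions turns the implications into the biconditionals appearing in statement (1). No new combinatorics are needed; the real work is already done in the half-bisimulation lemma, so I expect no genuine obstacle here beyond bookkeeping.

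For the "formula" case I would argue as follows. Condition (1), namely $K,v\models\phi \iff K',v'\models\phi$ for all $\phi$, is symmetric in $(K,v)$ and $(K',v')$. By the "formula" part of \Cref{lem:quasibisim-char}, the one-sided statement $K,v \pebblequasibisim{k+1} K',v'$ is equivalent to $K,v\models\phi \Rightarrow K',v'\models\phi$ for all $\phi$, and symmetrically $K',v' \pebblequasibisim{k+1} K,v$ is equivalent to the reverse implication. Conjoining these two equivalences yields exactly (1) $\iff$ ($K,v \pebblequasibisim{k+1} K',v'$ and $K',v' \pebblequasibisim{k+1} K,v$), and the right-hand side is by definition $K,v \pebblebisim{k+1} K',v'$. (As the remark preceding the theorem observes, closure of "formulas" under negation already makes a single \quasibisim{} yield the biconditional, so for "formulas" this case is essentially immediate; the uniform conjunction argument is what will carry over cleanly to "programs".)

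For the "program" case the negation shortcut is unavailable, since "programs" are not closed under complement, and this is precisely the reason the symmetric closure \pebblebisim{k+1} is required rather than a single \quasibisim. I would apply the "program" part of \Cref{lem:quasibisim-char} in each direction: $K,u,v \pebblequasibisim{k+1} K',u',v'$ iff $K,u,v\models\pi \Rightarrow K',u',v'\models\pi$ for all $\pi$, and $K',u',v' \pebblequasibisim{k+1} K,u,v$ iff $K',u',v'\models\pi \Rightarrow K,u,v\models\pi$ for all $\pi$; conjoining gives $K,u,v \pebblebisim{k+1} K',u',v'$ iff $K,u,v\models\pi \iff K',u',v'\models\pi$ for all $\pi$, which is (1) $\iff$ (2). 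Finally, to justify the closing sentence about "finite degree": the $2$-to-$1$ direction of \Cref{lem:quasibisim-char} needs no degree hypothesis, so $(2)\Rightarrow(1)$ is unconditional; the $(1)\Rightarrow(2)$ direction invokes the $1$-to-$2$ direction of the lemma once in each orientation, and there the target structure must have "finite degree" — $K'$ for $K,\cdot \pebblequasibisim{k+1} K',\cdot$ and $K$ for $K',\cdot \pebblequasibisim{k+1} K,\cdot$ — so both degree assumptions are consumed exactly in the $1$-to-$2$ implication and nowhere else, matching the stated proviso.
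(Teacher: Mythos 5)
Your proposal is correct and follows essentially the same route as the paper, which likewise presents \Cref{them:bisim-charac} as an immediate consequence of \Cref{lem:quasibisim-char} by conjoining the two orientations of the "\quasibisim" (with the negation-closure remark covering the "formula" case). One negligible caveat: \Cref{lem:quasibisim-char} as stated assumes \emph{both} structures have "finite degree" for a single half-direction (Spoiler may switch structures in $\kBisimGame$), rather than only the target structure as you suggest, but since the theorem hypothesizes "finite degree" of both $K$ and $K'$ this does not affect the argument.
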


\subsection{Computing (bi)simulation}\label{sec:bisim-calc}
Observe that when $K,K'$ are finite "structures@@kripke", the games $\kSimGame$ and $\kBisimGame$ are finite as well, with $O(|\worlds{K}|^k \cdot |\worlds{K'}|^k)$ positions. The winning condition for Duplicator is in both cases a ``Safety condition''. Hence, the set of winning positions can then be computed in linear time in the number of moves and positions of the game, for example via the classical McNaughton-Zielonka computation of its attractor decomposition. 
This yields a polynomial algorithm, which runs in time $(|\worlds{K}|\cdot|\worlds{K'}|)^{O(k)}$, for computing the "$k$-simulation" and "$k$-bisimulation" relations $\pebblesim{k}$ and $\pebblebisim{k}$.

\begin{proposition}
    For every $k$, the relations $\pebblesim k$ and $\pebblebisim k$ on finite "structures@@kripke" $K,K'$ can be computed in polynomial time.
\end{proposition}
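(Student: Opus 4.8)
The plan is to reduce the computation of $\pebblesim k$ and $\pebblebisim k$ to solving a Safety game on an explicitly constructed, polynomially-sized arena. First I would bound the size of $\kSimGame$ (resp.\ $\kBisimGame$) on finite $K,K'$: its positions live inside $(\{s, d_1, \dotsc, d_k\}) \times \worlds{K}^k \times \worlds{K'}^k$, with the obvious disjoint-union bookkeeping in the bisimulation case, so there are $O((|\worlds{K}| \cdot |\worlds{K'}|)^k)$ of them. Each move rewrites a single pebble to a world at distance $\leq 1$ from another pebble; since finite structures have finite neighbourhoods, each position has at most $O(k^2 \cdot d)$ outgoing moves, where $d$ bounds the number of neighbours in $K,K'$. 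Hence both the number of positions and the number of moves are $(|\worlds{K}| \cdot |\worlds{K'}|)^{O(k)}$, and the arena can be built in that time.

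Next I would make precise that Duplicator's objective is a Safety condition. The only way a play terminates is for a player to reach a position with no successor; as observed earlier, Spoiler is never stuck, so Spoiler wins a play exactly when he forces the token into a Duplicator position with no outgoing move. Consequently Spoiler's winning region is the Spoiler-attractor of the set $X$ of dead-end Duplicator positions, and Duplicator's winning region is its complement --- a partition that is well-defined by the positional determinacy already noted for Safety (hence parity) games. I would then invoke the standard fact that an attractor, equivalently the winning region of a Safety game, is computable in time linear in the number of positions and moves, e.g.\ by the backward attractor fixpoint or the McNaughton--Zielonka decomposition specialised to Safety. This computes, in one pass, the entire set of Duplicator-winning positions in time $(|\worlds{K}| \cdot |\worlds{K'}|)^{O(k)}$.

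Finally I would read off the relations from the winning region. By definition $K,\bar v \pebblesim k K',\bar v'$ holds iff (i) $(s,\bar v,\bar v')$ is a valid initial position, i.e.\ $(\bar v, \bar v') \in \kHoms(K,K')$, which is a partial-homomorphism test done in polynomial time; (ii) all components of $\bar v$ lie in one connected component of $K$, checkable by a graph traversal; and (iii) $(s,\bar v,\bar v')$ lies in Duplicator's winning region, already computed. The bisimulation case is identical, run on $\kBisimGame$; its extra switching moves neither change the Safety nature of the objective nor the polynomial size bound.

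I do not expect a genuine obstacle here, since the paragraph preceding the statement already sketches the argument. The only points needing care are confirming that the two side conditions (validity and connectedness) and the per-position move count stay polynomial despite the distance-$\leq 1$ neighbourhoods, both of which are immediate for finite structures, and that the disjoint-union structure of the bisimulation arena and its switching moves do not break the Safety framing --- which they do not, as the winning condition remains ``avoid getting stuck''.
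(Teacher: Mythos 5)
Your proposal is correct and follows exactly the paper's argument: the game arena has $(|\worlds{K}|\cdot|\worlds{K'}|)^{O(k)}$ positions and moves, the objective is a Safety condition, and the winning region is computed in linear time in the size of the arena (via attractor/McNaughton--Zielonka), after which the relations are read off from the winning positions together with the partial-homomorphism and connectedness side conditions. The extra care you take with the move count and the switching moves in the bisimulation arena is fine but adds nothing beyond what the paper's one-paragraph proof already asserts.
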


It is also possible (and useful) to reduce the notion of simulation $\pebblesim {k}$ for $\CPDLg{\Tw[k-1]}$ between "structures@@kripke" $K,K'$ to the classical simulation relation of Modal Logics (ML) on a $k$-power "Kripke structures", $S_k(K)$ and $S_k(K')$, where worlds of $S_k(K)$ are $k$-tuples of worlds from $K$, and the "atomic programs" reflect the different ways of changing a component (via a witness at "distance" $\leq 1$ in $K$) as in $\kSimGame$. The same can be done with bisimulation $\pebblebisim {k}$ and (one-way) simulation in both senses between "Kripke structures", each with the information of $K$ and $K'$. This would yield an alternative proof for showing that the (bi)simulation relations are computable in polynomial time. Full details on this connection with ML simulation are given in \ifarxiv the appendix, Section~\ref{sec:MLbisim}. \else the "full version".\fi

\begin{toappendix}
\subsection{A connection with simulation of basic modal logic}\label{sec:MLbisim}
\AP Let
\begin{align}
    K&=(X,\set{\rightarrow_a \mid a \in \Prog}, \set{X_p \mid p \in \Prop})\label{eq:kripke_structure}\\
    K'&=(X',\set{\rightarrow'_a \mid a \in \Prog}, \set{X'_p \mid p \in \Prop})\nonumber
\end{align}
be "Kripke structures". 
\AP
For $v\in X$ and $v'\in X'$ we say that $K',v'$ ""ML-simulates"" $K,v$ (notated $K,v\intro*\mlsim K',v'$) if there is $Z\subseteq X\times X'$ such that $vZv'$ and for all $uZu'$ we have
\begin{enumerate}
\item\label{it:MLsim:harmony} $u\in X_p$ implies $u'\in X'_p$ for all $p\in\Prop$; and
\item\label{it:MLsim:zig} if $u\to_a w$ then there is $w'\in X'$ such that $u'\to'_a w'$ and $wZw'$.
\end{enumerate}

It is well known that if $K'$ is of "finite degree" then $K,v\mlsim K',v'$ if{f} any positive modal formula true of $v$ is also true of $v'$.

\AP
We next show that our "$k$-simulation" $\pebblesim{k}$ can be reduced to the ""ML-simulation"" $\mlsim$. For a "Kripke structure" $K$ as in \eqref{eq:kripke_structure}
and $k>0$ define the following "Kripke structure":
\begin{align*}
    S_k(K) = (S_{K,k},
    \set{\rightarrow_i^{K,k}\mid i\in\Prog_k},
    \set{S_r^{K,k}\mid r\in\Prop_{K,k}})
\end{align*}
where 
\begin{itemize}
\item $S_{K,k}=X^k$
\item $\Prog_k=\{1,\dots,k\}$
\item $\Prop_{K,k}=(\Prop\times\{1,\dots,k\}) \cup (\Prog\times \{1,\dots,k\}^2) \cup \{1,\dots,k\}^2$

\item $\bar u\to^{K,k}_i \bar v$ iff there is $1 \leq j \leq k$ with $i\neq j$, and $w\in X$ at "distance" $\leq 1$ from $\bar u[j]$ such that $\bar v = \bar u[i\mapcoord w]$

\item For $p\in\Prop$, $\bar v\in S_{(p,i)}^{K,k}$ iff $\bar v[i]\in X_p$; for $a\in\Prog$, $\bar v\in S_{(a,i,j)}^{K,k}$ iff $\bar v[i]\to_a \bar v[j]$; and $\bar v\in S_{(i,j)}^{K,k}$ iff $\bar v[i]=\bar v[j]$
\end{itemize}

\begin{theorem}\label{thm:reduce_sim_to_MLsim}
For "Kripke structures" $K,K'$, $\bar w\in W(K)^k$, and $\bar w'\in W(K')^k$:
$K,\bar w \pebblesim{k} K',\bar w'$ iff $S_k(K),\bar w\mlsim S_k(K'),\bar w'$.
\end{theorem}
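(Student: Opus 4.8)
The plan is to show that, after collapsing Duplicator's intermediate $d_i$-positions, the arena and moves of the $k$-simulation game $\kSimGame$ on $(K,K')$ are literally those of the classical ML-simulation game played on $S_k(K)$ and $S_k(K')$; the equivalence of $\pebblesim k$ and $\mlsim$ then follows because ML-simulation relations and Duplicator winning strategies are two descriptions of the same object.

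First I would establish the crucial bookkeeping fact: for all $\bar u \in \worlds K^k$ and $\bar v \in \worlds{K'}^k$, the ML-harmony condition ``$\bar u \in S_r^{K,k}$ implies $\bar v \in S_r^{K',k}$ for every $r \in \Prop_{K,k}$'' holds if and only if $(\bar u,\bar v)\in\kHoms(K,K')$. This is a direct unfolding of the three families of atomic propositions of $S_k$: the propositions $(i,j)$ force $\bar u[i]=\bar u[j]\Rightarrow \bar v[i]=\bar v[j]$ (well-definedness of $\set{\bar u[i]\mapcoord\bar v[i]}$), the propositions $(p,i)$ force preservation of the unary relations $X_p$, and the propositions $(a,i,j)$ force preservation of the transitions $\rightarrow_a$; together these are exactly the three clauses defining a partial homomorphism. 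Next I would match the moves: a transition $\bar u \rightarrow_i^{K,k}\bar u'$ in $S_k(K)$ is by definition a choice $\bar u'=\bar u[i\mapcoord w]$ with $w$ at distance $\le 1$ from some $\bar u[j]$, $j\ne i$, i.e.\ precisely a Spoiler move of pebble $i$ in $\kSimGame$; symmetrically a transition $\bar v\rightarrow_i^{K',k}\bar v'$ carrying the \emph{same} label $i$ is precisely Duplicator's legal reply with pebble $i$. Hence one round of the ML-simulation game (Spoiler picks a labelled edge, Duplicator answers with an equally-labelled edge, harmony must hold at the new pair) is exactly one Spoiler move followed by one Duplicator move of $\kSimGame$, the $d_i$-position merely recording the half-round in which the label $i$ has been committed.

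With these two identifications in place, both implications become a translation between a relation and a strategy. For the forward direction, assuming $K,\bar w \pebblesim k K',\bar w'$, I would take $Z=\set{(\bar u,\bar v): \text{Duplicator wins }\kSimGame\text{ from }(s,\bar u,\bar v)}$; every such position is valid, so by the bookkeeping fact $Z$ satisfies harmony, and the safety winning condition means that from each $(\bar u,\bar v)\in Z$ every Spoiler pebble move $\bar u\rightarrow_i^{K,k}\bar u'$ can be answered by some $\bar v'$ with $(\bar u',\bar v')\in Z$, which is exactly the zig condition; since $(\bar w,\bar w')\in Z$, the relation $Z$ witnesses $S_k(K),\bar w\mlsim S_k(K'),\bar w'$. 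Conversely, a witnessing relation $Z$ consists of partial-homomorphism pairs (harmony plus the fact) closed under the zig condition, so the positional Duplicator strategy ``from $(s,\bar u,\bar v)$ with $(\bar u,\bar v)\in Z$, answer Spoiler's pebble-$i$ move by the $Z$-preserving reply supplied by the zig condition'' never gets stuck and hence produces an infinite play, so Duplicator wins; this gives conditions 1) and 3) of $\pebblesim k$. The only ingredient not delivered by $\mlsim$ is the connectedness requirement 2) on $\bar w$, which concerns $K$ and $\bar w$ alone and is preserved along all plays by \Cref{rem:connected}; I would treat it as a standing side-condition on the left configuration (equivalently, restrict to connected $\bar w$). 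The main obstacle is not deep: it is the careful unfolding of the atomic-proposition encoding and the faithful matching of the half-round $d_i$-positions with single ML-game rounds, after which both directions are purely a change of vocabulary between ``Duplicator-winning positions'' and ``simulation relation''; note that, unlike the formula characterization, no finite-degree assumption is needed here.
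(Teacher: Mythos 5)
Your proposal is correct and follows essentially the same route as the paper's proof: both hinge on identifying the ML-harmony condition on $S_k(K),S_k(K')$ with membership in $\kHoms(K,K')$, and then translate directly between a witnessing relation $Z$ and a Duplicator winning strategy (the paper extracts $Z$ from a strategy tree where you take the set of all Duplicator-winning positions, which is the same object up to maximality). Your explicit handling of the connectedness side-condition via \Cref{rem:connected} is a point the paper's proof silently skips, but it does not change the argument.
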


\begin{proof}
It can be shown that for any $\bar u\in X^k$ and $\bar u'\in X'^k$
item (\ref{it:MLsim:harmony}) of $\mlsim$ over $S_k(K)$ and $S_k(K')$ is equivalent to $(\bar u, \bar u') \in \kHoms(K,K')$.

Suppose $K,\bar v \pebblesim{k} K',\bar v'$ via $Z$. First observe that $(s,\bar w, \bar w')$ is a valid position of $\kSimGame$ on $(K,K')$. We show that Duplicator has a winning strategy in $\kSimGame$ from $(s,\bar w, \bar w')$ which is the result of `copying' $Z$: suppose $\bar uZ\bar v$ and Spoiler moves from $(s,\bar u, \bar v)$ to $(d_i,\bar u',\bar v)$, where $\bar u' = \bar u[i \mapcoord w]$ and $w$ is a "world" from $K$ at "distance" $\leq 1$ from $\bar u[j]$, for some $1 \leq j \leq k$ with $i\neq j$. Since $\bar u\to^{K,k}_i\bar u'$ and $\bar uZ\bar v$, by item (\ref{it:MLsim:zig}) of $\mlsim$ over $S_k(K)$ and $S_k(K')$, there is $v'\in X'^k$ such that $\bar u'\to^{K',k}_i\bar v'$ and $\bar u'Z\bar v'$. By definition, $(s,\bar u',\bar v')$ is a valid position and Duplicator makes the move from $(d_i,\bar u',\bar v)$ to $(s,\bar u',\bar v')$. Since Duplicator can always make a move, we conclude he has a winning strategy.

For the other direction, suppose that Duplicator has a winning strategy in $\kSimGame$ from $(s,\bar w, \bar w')$, described by an "tree" $T$ without leaves whose vertices are labelled with positions from $\kSimGame[k+1]$. In particular: (i) the root must be labeled with $(s,\bar w,\bar w')$, (ii) any vertex labeled $(d_i,\bar u, \bar v)$ has exactly one child labeled $(s,\bar u, \bar v')$ whose label is consistent with a move of $\kSimGame[k+1]$, (iii) any vertex labeled $(s,\bar u, \bar v)$ has one child for every possible move of $\kSimGame[k+1]$. Let $Z\subseteq X^k\times X'^k$ defined by $\bar uZ\bar v$ iff $(s,\bar u,\bar v)$ is a label of some node of $T$. It is straightforward to see that $Z$ satisfies items (\ref{it:MLsim:harmony}) and (\ref{it:MLsim:zig}), and that $S_k(K),\bar v\mlsim S_k(K'),\bar v'$ via $Z$.
%
%
%
\end{proof}

The "\quasibisim" relation can also be reduced to the "ML-simulation".
Given "Kripke structures" $K$ and $K'$ as in \eqref{eq:kripke_structure}, we can construct in polynomial time "Kripke structures" $Q_k(K,K')$ and $Q'_k(K,K')$ such that
\begin{theorem}\label{thm:reduce_quasisim_to_MLsim}
For "Kripke structures" $K,K'$, $\bar w\in \worlds{K}^k$ and $\bar w'\in \worlds{K'}^k$:
$K,\bar w \pebblequasibisim k K',\bar w'$ iff $Q_k(K,K'),\bar w\mlsim Q'_k(K,K'),\bar w'$.
\end{theorem}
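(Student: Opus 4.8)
The plan is to mirror the construction and proof of \Cref{thm:reduce_sim_to_MLsim}, upgrading the single $k$-power $S_k(K)$ to a pair of structures that simultaneously carry the $k$-powers of both $K$ and $K'$, together with one extra atomic program implementing the ``switching'' move of $\kBisimGame$ (\Cref{it:kbisim:neg}). Concretely, I would take $Q_k(K,K')$ to be the disjoint union $S_k(K)\dcup S_k(K')$ and $Q'_k(K,K')$ to be $S_k(K')\dcup S_k(K)$, keeping inside each summand the pebble-programs $1,\dots,k$ and the homomorphism-type propositions of \Cref{thm:reduce_sim_to_MLsim}, so that the harmony condition of $\mlsim$ restricted to a summand is exactly $(\bar u,\bar v)\in\kHoms(\cdot,\cdot)$ and the zig condition restricted to the pebble-programs is exactly the pebble-moves of $\kBisimGame$. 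I would also add one fresh proposition marking the summand a world belongs to, forcing $\mlsim$ to relate the $S_k(K)$-summand of $Q_k(K,K')$ only to the $S_k(K')$-summand of $Q'_k(K,K')$ (the ``$K$-versus-$K'$'' phase) and the $S_k(K')$-summand only to the $S_k(K)$-summand (the ``$K'$-versus-$K$'' phase).

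The delicate ingredient is the switch. Here I would introduce a dedicated atomic program enabled only at diagonal worlds $(u,\dots,u)$ and crossing from one phase to the other, so that such a move in $Q_k(K,K')$ can be matched only by the corresponding move in $Q'_k(K,K')$ and, after matching, the two related worlds land in the opposite phase, reproducing the passage from a position $(s,\bar u,\bar v)$ to $(s,\bar v,\bar u)$. The point to get right is that after the switch the new ``left'' tuple must be the old ``right'' one; I would therefore arrange these switch-transitions and the phase-marking propositions so that, at a related diagonal pair, the forced successors on both sides are exactly the reversed pair, and so that pebble play may then continue inside the reversed summand exactly as $\kBisimGame$ prescribes for the swapped position. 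Since the switch is restricted to diagonals, where each tuple is determined by a single world, this is where all the bookkeeping must be concentrated.

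With the structures in place, the equivalence is proved by the two copycat arguments of \Cref{thm:reduce_sim_to_MLsim}. For the implication from $\mlsim$ to $\pebblequasibisim{k}$, from an $\mlsim$-witness $Z$ between $Q_k(K,K')$ and $Q'_k(K,K')$ I would read off a Duplicator strategy in $\kBisimGame$: Spoiler's pebble-moves are answered as in \Cref{thm:reduce_sim_to_MLsim} using the zig condition for the programs $1,\dots,k$, and each switch of Spoiler (\Cref{it:kbisim:neg}) is answered using the zig condition for the switch-program, staying inside $Z$. For the converse, from a Duplicator winning strategy witnessing $K,\bar w\pebblequasibisim k K',\bar w'$ I would let $Z$ be the set of $s$-positions reachable under the strategy and check harmony together with both zig clauses, the switch-clause being precisely the requirement that Duplicator survives Spoiler's switch. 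The main obstacle is engineering the switch so that it is matchable only by a switch and faithfully reverses the phase without introducing spurious demands on the matching player; once the phase-marking propositions pin down the reversed pair, the remaining verification is routine, and the finiteness and finite-degree remarks transfer verbatim from \Cref{thm:reduce_sim_to_MLsim}.
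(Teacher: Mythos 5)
Your overall architecture coincides with the paper's: both $Q_k(K,K')$ and $Q'_k(K,K')$ are built on the world set $\worlds{K}^k\dcup\worlds{K'}^k$, with the pebble programs $1,\dots,k$ and the homomorphism-type propositions of \Cref{thm:reduce_sim_to_MLsim} acting inside each summand, and with additional transitions between diagonal tuples implementing the switching move (\Cref{it:kbisim:neg}) of $\kBisimGame$; the two copycat arguments you outline are indeed the intended proof. (The phase-marking proposition you add is harmless but not needed: the initial pair already lives in opposite summands, pebble moves stay within a summand, and switches flip both sides simultaneously.)

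The gap is exactly where you locate it, and it is not routine bookkeeping. A \emph{single} switch program together with unary propositions cannot work: the constraint to be enforced at a related diagonal pair $((u,\dots,u),(v,\dots,v))$ is that the successor of the \emph{left} world must be the \emph{current right} world and vice versa. This is a cross-structure constraint between the two components of a pair of the simulation relation, whereas the transition relations and labelings of $Q_k$ and $Q'_k$ must be fixed independently of that relation; unary markers can only constrain each component separately. The paper's device is to introduce one atomic program for every ordered pair $(v_1,v_2)\in(\worlds{K}\times\worlds{K'})\cup(\worlds{K'}\times\worlds{K})$: in $Q_k(K,K')$ the program $(v_1,v_2)$ has the single transition from the diagonal of $v_1$ to the diagonal of $v_2$, and in $Q'_k(K,K')$ these transitions are reversed, so that the only candidate match for the move labeled $(u,v)$ out of $(u,\dots,u)$ is the step from $(v,\dots,v)$ back to $(u,\dots,u)$ --- the label itself names both endpoints of the swap. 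Note, however, that even with this device the verification is not a verbatim transfer from \Cref{thm:reduce_sim_to_MLsim}: the diagonal $(u,\dots,u)$ also carries outgoing transitions $(u,w)$ for every $w\in\worlds{K'}$ other than the current partner, and one has to argue (or further adjust the construction so) that these do not hand Spoiler unmatchable moves with no counterpart in $\kBisimGame$. So the step you defer as ``routine'' is precisely the one that requires a concrete mechanism and a careful correctness check.
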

%
and hence we obtain
\begin{corollary}
For "Kripke structures" $K,K'$, $\bar w\in \worlds{K}^k$ and $\bar w'\in \worlds{K'}^k$:
$K,\bar w \pebblebisim k K',\bar w'$ iff $Q_k(K,K'),\bar w\mlsim Q'_k(K,K'),\bar w'$ and $Q_k(K',K),\bar w'\mlsim Q'_k(K',K),\bar w$.
\end{corollary}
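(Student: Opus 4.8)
The plan is to obtain this statement as an immediate consequence of \Cref{thm:reduce_quasisim_to_MLsim} together with the definition of the "$k$-bisimulation" relation $\pebblebisim k$. First I would unfold that definition: by construction, $K,\bar w \pebblebisim k K',\bar w'$ holds exactly when both "\quasibisim" conditions $K,\bar w \pebblequasibisim k K',\bar w'$ and $K',\bar w' \pebblequasibisim k K,\bar w$ hold. Hence it suffices to replace each of these two "\quasibisim" statements by an equivalent "ML-simulation" statement, which is precisely what \Cref{thm:reduce_quasisim_to_MLsim} delivers.

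For the first conjunct I would apply \Cref{thm:reduce_quasisim_to_MLsim} directly, keeping $\bar w\in\worlds{K}^k$ and $\bar w'\in\worlds{K'}^k$ in their original roles; this yields that $K,\bar w \pebblequasibisim k K',\bar w'$ holds if and only if $Q_k(K,K'),\bar w \mlsim Q'_k(K,K'),\bar w'$. For the second conjunct I would invoke the same theorem but with the two "Kripke structures" and their distinguished tuples interchanged, i.e.\ instantiating the statement with $K'$ in place of $K$, $K$ in place of $K'$, $\bar w'$ in place of $\bar w$, and $\bar w$ in place of $\bar w'$. Since then $\bar w'\in\worlds{K'}^k$ occupies the role of the first tuple and $\bar w\in\worlds{K}^k$ that of the second, the membership hypotheses of the theorem are satisfied, and its conclusion becomes: $K',\bar w' \pebblequasibisim k K,\bar w$ holds if and only if $Q_k(K',K),\bar w' \mlsim Q'_k(K',K),\bar w$. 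Conjoining the two equivalences gives exactly the stated biconditional.

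There is no genuine mathematical obstacle here, as the result is the symmetric closure of \Cref{thm:reduce_quasisim_to_MLsim}; the only point demanding care is the role-swap in the second application, namely ensuring the construction is read as $Q_k(K',K)$ and $Q'_k(K',K)$ (not $Q_k(K,K')$) and that $\bar w$ and $\bar w'$ land on the correct sides of $\mlsim$. Once the instantiation is spelled out, the corollary follows by taking the conjunction of the two instances of \Cref{thm:reduce_quasisim_to_MLsim}, with no need for a "finite degree" hypothesis since none appears in that theorem.
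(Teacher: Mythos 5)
Your proposal is correct and is precisely the argument the paper intends: the paper gives no explicit proof, deriving the corollary via ``and hence we obtain'' from \Cref{thm:reduce_quasisim_to_MLsim}, and the implicit reasoning is exactly your unfolding of $\pebblebisim k$ as the conjunction $K,\bar w \pebblequasibisim k K',\bar w'$ and $K',\bar w' \pebblequasibisim k K,\bar w$, followed by two instantiations of the theorem (one with the roles of $K,\bar w$ and $K',\bar w'$ interchanged, yielding $Q_k(K',K)$ and $Q'_k(K',K)$). Your attention to the role-swap and the absence of a finite-degree hypothesis is accurate.
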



\paragraph{Definition of $Q_k(K,K')$ and $Q'_k(K,K')$}
Suppose $K$ and $K'$ are "Kripke structures" as in \eqref{eq:kripke_structure} and further satisfy that $X$ and $X'$ are disjoint (this is without loss of generality). The "Kripke structures" $Q_k(K,K')$ and $Q'_k(K,K')$ are defined as follows:
\begin{align*}
    Q_k(K,K') = (Q,\set{\rightarrow_i\mid i\in \Prog_Q}, \set{Q_q \mid q \in \Prop_Q})
\end{align*}
where 
\begin{itemize}
\item $Q=X^k\cup X'^k$
\item $\Prog_Q=\{1,\dots,k\}\cup (X\times X')\cup(X'\times X)$
\item $\Prop_Q=(\Prop\times\{1,\dots,k\}) \cup (\Prog\times \{1,\dots,k\}^2) \cup \{1,\dots,k\}^2$

\item For each $1\leq i\leq k$, $\bar u\to_i \bar v$ iff there is $Y\in\set{X,X'}$, such that
$\bar u,\bar w\in Y^k$ and there is $1 \leq j \leq k$ with $i\neq j$, and $w\in Y$ at "distance" $\leq 1$ from $\bar u[j]$ and $\bar v = \bar u[i\mapcoord w]$

\item For each $(v_1,v_2)\in (X\times X')\cup(X'\times X)$, $\bar u_1\to_{(v_1,v_2)}\bar u_2$ iff $\bar u_1[i]=v_1$ and $\bar u_2[i]=v_2$ for all $1\leq i\leq k$ (notice that in case if $v_1\in X$ then $\bar u_1\in X^k$ and $\bar u_2\in X'^k$; and in case $v_1\in X'$ then $\bar u_1\in X'^k$ and $\bar u_2\in X^k$)

\item For $(\tilde X,\tilde \to)\in \{(X,\to),(X',\to')\}$, if $\bar v\in \tilde X^k$, then 
for $p\in\Prop$, $\bar v\in Q_{(p,i)}$   iff $\bar v[i]\in \tilde X_p$; 
for $a\in\Prog$, $\bar v\in Q_{(a,i,j)}$ iff $\bar v[i]\tilde\to_a \bar v[j]$; and 
$\bar v\in Q_{(i,j)}$                    iff $\bar v[i]=\bar v[j]$


\end{itemize}
and
\begin{align*}
    Q'_k(K,K') = (Q',\set{\rightarrow'_i\mid i\in \Prog_{Q}}, \set{Q'_q \mid q \in \Prop_{Q}})
\end{align*}
where 
\begin{itemize}
\item $Q'=Q$

\item For each $1\leq i\leq k$, $\bar u\to'_i \bar w$ is defined as in $Q_k(K,K')$ replacing $\to_i$ by $\to'_i$

\item For each $(v_1,v_2)\in (X\times X')\cup(X'\times X)$, $\bar u_2\to'_{(v_1,v_2)}\bar u_1$ iff $\bar u_1\to_{(v_1,v_2)}\bar u_2$ in $Q_k(K,K')$

\item $Q'_{(p,i)}$, $Q'_{(a,i,j)}$ and $Q'_{(i,j)}$ are defined as in $Q_k(K,K')$ replacing $Q_{(p,i)}$ by $Q'_{(p,i)}$, $Q_{(a,i,j)}$ by $Q'_{(a,i,j)}$ and $Q_{(i,j)}$ by $Q'_{(i,j)}$
\end{itemize}
\end{toappendix}

\section{Separation}

In this section we show that, while $\CPDLg{\Tw[1]}$ and $\CPDLg{\Tw[2]}$ are "equi-expressive", we have a strict hierarchy $\CPDLg{\Tw[2]} \lleqs \CPDLg{\Tw[3]} \lleqs \dotsb$ from $\Tw[2]$.
\begin{theorem}\label{thm:separation}
    For every $k \geq 2$, $\CPDLg{\Tw[k]} \lleqs \CPDLg{\Tw[k+1]}$.
\end{theorem}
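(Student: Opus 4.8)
The inclusion $\CPDLg{\Tw[k]} \lleq \CPDLg{\Tw[k+1]}$ is immediate: since $\Tw[k] \subseteq \Tw[k+1]$, every $\CPDLg{\Tw[k]}$ expression is already a $\CPDLg{\Tw[k+1]}$ expression, so the identity map witnesses $\lleq$. The whole content is therefore the strictness $\CPDLg{\Tw[k+1]} \not\lleq \CPDLg{\Tw[k]}$, and for this it suffices to exhibit a single $\CPDLg{\Tw[k+1]}$ "formula" that is not "equivalent" to any $\CPDLg{\Tw[k]}$ "formula" (were a translation to exist, it would yield such an equivalent formula). The plan is to use a clique formula together with the bisimulation characterization of \Cref{them:bisim-charac}.

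Concretely, I would fix one "atomic program" $a$ and consider the $(k+2)$-clique "conjunctive program" $C \eqdef \set{a(x_i,x_j) \mid 1 \le i < j \le k+2}$ and the "formula" $\phi \eqdef \tup{C[x_1,x_1]}$. Its "underlying graph" $\uGraph{C[x_1,x_1]}$ is the complete graph on $k+2$ vertices (the extra edge $\set{x_1,x_1}$ is a self-loop and is irrelevant), whose "tree-width" is exactly $k+1$; hence $\phi$ is a legal $\CPDLg{\Tw[k+1]}$ formula whose single "conjunctive program" is not admissible in $\CPDLg{\Tw[k]}$. I would then define two finite "Kripke structures": $K$, with "worlds" $\set{1,\dots,k+2}$, with ${\rightarrow_a} = \set{(i,j) : i \ne j}$ and no "atomic propositions"; and $K'$, the analogous loop-free symmetric clique on $\set{1,\dots,k+1}$. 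Since both are loop-free, in any "$C$-satisfying assignment" two variables joined by an $a$-"atom" must be sent to distinct "worlds". Consequently $K,1 \models \phi$ (send $x_1 \mapsto 1$ and $x_2,\dots,x_{k+2}$ bijectively onto the remaining $k+1$ "worlds"), whereas $K',1 \not\models \phi$, since placing $k+2$ pairwise-distinct variables is impossible in a $(k+1)$-world structure.

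The core step is to establish $K,1 \pebblebisim{k+1} K',1$, so that by \Cref{them:bisim-charac} (the statement for "formulas") the two pointed structures satisfy exactly the same $\CPDLg{\Tw[k]}$ "formulas"; as they disagree on $\phi$, this proves $\phi$ is not "equivalent" to any $\CPDLg{\Tw[k]}$ "formula". For Duplicator's strategy I would maintain the invariant that the pebble configuration induces a partial isomorphism between the two occupied substructures, i.e.\ $\bar u[i] = \bar u[j]$ iff $\bar v[i] = \bar v[j]$; because both structures are loop-free symmetric cliques, any such pattern-preserving map is simultaneously a "partial homomorphism" $K \to K'$ and $K' \to K$, hence a valid game position in $\kBisimGame[k+1]$. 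When Spoiler moves pebble $i$ onto a "world" already carrying another pebble, Duplicator mirrors the coincidence; when Spoiler moves it onto a fresh "world", Duplicator answers with a fresh "world" in the other structure, which exists by counting: the other $k$ pebbles occupy at most $k$ "worlds", and both $\worlds{K}$ (with $k+2$ "worlds") and $\worlds{K'}$ (with $k+1$ "worlds") then leave at least one free. This ``extensible clique'' invariant survives every move type, including the "switching" moves (which fire only on all-equal configurations, trivially preserving the pattern) and the distance-$\le 1$ navigation constraint (vacuous in a clique).

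The main obstacle I anticipate lies entirely in the bookkeeping of this last paragraph: verifying that Duplicator's response is legal under all the move rules of $\kBisimGame[k+1]$ — in particular that Spoiler's option to switch structures (which must be handled because $K$ and $K'$ are non-isomorphic) never breaks the invariant, and that the counting ``at most $k$ occupied, at least one free'' is precisely what fails once one attempts to detect the $(k+2)$-clique, for that would require $k+2$ simultaneously pebbled distinct "worlds" while only $k+1$ pebbles are available. Everything else — the identity inclusion, the fact that $\uGraph{C[x_1,x_1]}$ has "tree-width" $k+1$, and the behaviour of $\phi$ on $K$ and $K'$ — is routine.
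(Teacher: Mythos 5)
Your proof is correct, but it reaches the separation by a genuinely different route than the paper. The paper derives \Cref{thm:separation} from its tree-like model property (\Cref{cor:treewidth-k-model-property}, itself obtained from the unravelling construction of \Cref{prop:treewidth-k-unravelling}, the countable model property, and \Cref{them:bisim-charac}): it exhibits the satisfiable formula $\xi_{k+1}=\tup{C[x_1,x_{k+1}]}\land\lnot\tup{C'[x_1,y]}$, where the negated conjunct forbids $a$-self-loops so that any model of $\xi_{k+1}$ must contain a genuine directed $(k{+}1)$-clique, and then observes that no structure of tree-width $\le k-1$ contains such a clique. You instead avoid the model-property machinery altogether: you pick two concrete finite loop-free cliques $K_{k+2}$ and $K_{k+1}$, show they are $(k{+}1)$-bisimilar by the standard counting argument (with $k{+}1$ pebbles one never pins down $k{+}2$ distinct worlds; the distance constraint is vacuous in a clique and the switching moves preserve the partial-isomorphism invariant), and invoke only the soundness direction of \Cref{them:bisim-charac} --- the direction that needs neither finite degree nor the unravelling. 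What each approach buys: the paper's argument reuses \Cref{cor:treewidth-k-model-property}, which it needs anyway for decidability of satisfiability, and shows the stronger fact that the clique formula has no low-tree-width models at all; your argument is more elementary and self-contained, and since your separating formula $\tup{C[x_1,x_1]}$ needs no negation, it additionally separates the positive fragments. One cosmetic point: the extra edge $\set{x_1,x_1}$ in $\uGraph{C[x_1,x_1]}$ is the singleton $\set{x_1}$, which indeed does not affect the tree-width computation, as you say.
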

Essentially we show that the presence of a $(k+1)$-clique can be expressed in $\CPDLg{\Tw[k]}$ but not in $\CPDLg{\Tw[k-1]}$, for every $k\geq 3$. 
\AP
Consider the following ""$(k+1)$-clique formula"" 
\[
    \xi_{k+1} \eqdef \tup{C[x_1,x_{k+1}]} \land \lnot \tup{C'[x_1,y]}
\] 
of $\CPDLg{\Tw[k]}$ where $C = \set{a(x_i,x_j) \mid 1 \leq i < j \leq k+1}$ and $C' = \set{a(x_1,y), a(y,y)}$, for any fixed $a \in \Prog$. We will show that $\xi_{k+1}$ cannot be expressed in $\CPDLg{\Tw[k-1]}$, for every $k \geq 3$.
This inexpressivity result will be a direct consequence of the following ``tree-like model property'', which is of independent interest.
\begin{proposition}\label{prop:treewidth-k-unravelling}
    For every $k \geq 2$, "Kripke structure" $K$, and world $u \in \worlds{K}$, there exists a "Kripke structure" $\hat K$ of "tree-width" $\leq k-1$ and "world" $\hat u \in \worlds{\hat K}$ such that $K,u \pebblebisim{k} \hat K,\hat u$.
    Further, if $K$ is countable, $\hat K$ has a countable "tree decomposition" of "width" $\leq k-1$.
\end{proposition}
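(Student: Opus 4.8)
The plan is to build $\hat K$ as a \emph{tree-width $(k-1)$ unraveling} of $K$ driven by the game $\kBisimGame[k]$, and then exhibit an explicit winning strategy for Duplicator witnessing $K,u \pebblebisim{k} \hat K,\hat u$. Concretely, I would let $T$ be the tree whose nodes are the finite sequences of legal pebble moves starting from the configuration $(u,\dots,u)\in\worlds{K}^k$: each node $t$ carries a tuple $\bar u_t\in\worlds K^k$, the root carries $(u,\dots,u)$, and $t$ has one child $t'$ with $\bar u_{t'}=\bar u_t[i\mapcoord w]$ for every component $i$ and every world $w$ at "distance" $\le 1$ from some $\bar u_t[j]$ with $j\ne i$. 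The "worlds" of $\hat K$ are the classes $[(t,i)]$ of pairs $(t,i)$ (with $1\le i\le k$) under the least equivalence identifying $(t,j)$ with $(t',j)$ whenever $t'$ is the child of $t$ obtained by changing a component $\ne j$; I set $[(t,i)]\in\hat X_p$ iff $\bar u_t[i]\in X_p$, and an $a$-edge $[(t,i)]\to_a[(t,j)]$ iff $\bar u_t[i]\to_a\bar u_t[j]$, so that every edge lives inside a single node. Taking $\bagmap(t)=\set{[(t,1)],\dots,[(t,k)]}$ then gives a "tree decomposition" of $\hat K$ of "width" $\le k-1$, and I would put $\hat u=[(\text{root},1)]$.

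First I would check that $(T,\bagmap)$ is a genuine "tree decomposition": conditions "A@treedec" and "B@treedec" are immediate since every "world" and every "edge" lives in one "bag", and the connectivity condition "C@treedec" holds because a class $[(t_0,i_0)]$ is ``born'' at a single node and propagates only to those descendants reachable without changing its component, so the set of "bags" containing it is a subtree. This already gives $\tw(\hat K)\le k-1$, and when $K$ is countable the set of finite move-sequences is countable, so $T$ and the decomposition are countable, establishing the final sentence of the statement.

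The heart is the game argument. The projection $\rho([(t,i)])=\bar u_t[i]$ is a "homomorphism" $\hat K\to K$ that both preserves and reflects "atomic propositions", and inside a single "bag" it reflects "edges" (if $\bar u_t[i]\to_a\bar u_t[j]$ then $[(t,i)]\to_a[(t,j)]$ by construction). I would have Duplicator maintain the invariant that the two pebble tuples always agree under $\rho$, together with the property---needed only when $K$ is the left (Spoiler) structure---that the $\hat K$-tuple lies inside one "bag" $\bagmap(t)$, say $\bar b[m]=[(t,\sigma(m))]$ for a function $\sigma$. The ``inside one bag'' part guarantees that the position is a valid $\kHoms$ position, since within a bag $\rho$ reflects edges; when $\hat K$ is on the left this is not needed, as the $\hat K\to K$ direction of $\kHoms$ is free via $\rho$. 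When $K$ is the left structure and Spoiler moves pebble $i$ to $w$ at "distance" $\le 1$ from $\bar a[j]$, Duplicator replies thus: if $w=\bar a[j]$ he \emph{merges}, moving his $\hat K$-pebble onto $\bar b[j]$ and updating $\sigma(i):=\sigma(j)$ with the same node $t$; otherwise he descends to the child $t'$ of $t$ that changes a \emph{free} component $c\notin\sigma(\set{1,\dots,k}\setminus\set{i})$ to $w$, and moves his pebble to $[(t',c)]$. When $\hat K$ is on the left Spoiler may leave the common "bag", but Duplicator just plays $\rho$ of Spoiler's world, preserving $\rho$-agreement and validity; the switch move is available only once all pebbles coincide, at which point the common-bag invariant is trivially restored. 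Since Duplicator always has a legal reply, every play is infinite and he wins; running the same argument from the swapped initial position gives the other "\quasibisim", whence $K,u\pebblebisim{k}\hat K,\hat u$.

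The main obstacle I anticipate is the bookkeeping around the free component. With $k$ pebbles and "bags" of size $k$ there is room for at most one fresh world at a time, so the argument crucially relies on the fact that the other $k-1$ pebbles occupy at most $k-1$ of the $k$ components, leaving some $c\notin\sigma(\set{1,\dots,k}\setminus\set{i})$; this $c$ also satisfies $c\ne\sigma(j)$ (as $j\ne i$), which is exactly what makes the tree step ``change $c$ to a neighbor of component $\sigma(j)$'' legal and the new edge $[(t',c)]\to_a[(t',\sigma(j))]$ present in $\hat K$. The "distance" $0$ case, where Spoiler collapses two pebbles, must be handled by the separate merge rather than a tree step, since forcing it through a fresh component would require a self-loop that need not exist. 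Getting these two cases and the switch move to interlock cleanly, while insisting on the common-bag invariant exactly where validity of the position demands it, is the delicate part of the proof.
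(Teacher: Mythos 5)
Your construction is essentially the paper's: an unraveling of $K$ along an infinite tree whose nodes carry at most $k$ worlds, with $\hat K$'s worlds being equivalence classes of (node, slot) pairs under persistence along tree edges, edges inherited from $K$ within a single bag, and an explicit Duplicator strategy that descends to a child bag containing the new world together with the $k-1$ retained ones. The only substantive difference is that the paper indexes bag elements by \emph{sets} of worlds rather than by tuple components, which makes the collapse case disappear (if $w$ equals some $\bar a[j']$ then $[w,v']_\approx$ automatically coincides with the pebble already sitting on $\bar a[j']$), whereas your component-indexed version needs the explicit merge. On that point your case split is slightly too narrow as written: you merge only when $w=\bar a[j]$ for the distance-witness $j$, but Spoiler can move pebble $i$ onto $\bar a[j']$ for some \emph{other} $j'\ne i$ while using $j$ as the witness; your ``otherwise'' branch would then place pebble $i$ on a fresh component, violating the function requirement of $\kHoms$ and making the position invalid. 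The fix is immediate (merge whenever $w$ coincides with any $\bar a[j']$, $j'\ne i$, which is consistent since coinciding $K$-pebbles already share their $\hat K$-pebble by the invariant), so the proof is correct once that condition is restated.
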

\begin{proof}
    Consider the set $\+S$ of all nonempty sets of at most $k$ "worlds" from $K$.

    Consider the infinite "tree" $T$ having $V(T) = \set{u} \cdot \+S^*$ (\ie, vertices are finite sequences on $\+S$ starting with $\set u$) and having an edge $\set{u,v}$ if $u$ is a prefix of $v$ of length $|v|-1$. Consider the root vertex $r$ of $T$ to be $r = \set{u}$.
    Let us denote by $\lambda(v)$ the last set of $v\in V(T)$ (\ie, $\lambda(v \cdot S) = S$ for every $S \in \+S$ and $v \in \+S^*$).



    For a pair of worlds $u,u'$ of $K$ and a pair of vertices $v,v'$ of the "tree" $T$, let us write $\approx$ to denote the reflexive, symmetric and transitive closure of $\set{((u,v),(u',v'))\mid v' \text{ is child of } v \text{ in $T$, }u=u' \text{ and } u\in \lambda(v) \cap \lambda(v')}$. Let $[u,v]_\approx$ denote the equivalence class of $(u,v)$ with respect to $\approx$.

    We now define $\hat K$ as having 
    \begin{enumerate}
        \item \label{item:Kunravel:1} $\worlds{\hat K} = \set{[u,v]_\approx \mid u \in \worlds{K}, v \in V(T), u \in \lambda(v)}$;
        \item for every "atomic program" $a \in \Prog$ we have that $\dbracket{a}_{\hat K}$ consists of all pairs $([u,v]_\approx,[u',v]_\approx)$ of "worlds" such that $(u,u') \in \dbracket{a}_{K}$; and
        \item for every "atomic proposition"  $p \in \Prop$ we have that $\dbracket{p}_{\hat K}$ consists of all "worlds" $[u,v]_\approx$ such that $u \in \dbracket{p}_{K}$.
    \end{enumerate}
    Finally, consider the labeling $\bagmap : V(T) \to \pset{\worlds{\hat K}}$ mapping every vertex $v$ to $\set{[u,v]_\approx \mid u \in \lambda(v)}$. It follows that $(T,\bagmap)$ is a "tree decomposition" of $\hat K$ of "tree-width" at most $k-1$. It follows that $(T,\bagmap)$ is a "tree decomposition". 
    \begin{claimrep}
        $(T,\bagmap)$ is a "tree decomposition" of $\hat K$ of "tree-width" at most $k-1$.
    \end{claimrep}
    \begin{proof}
        $(T,\bagmap)$ has to verify the three conditions ("A@@treedec", "B@@treedec", "C@@treedec") of a "tree decomposition". 

        Condition "A@@treedec" holds  since every "world" $[u,v]_\approx$ of $\hat K$ is contained in the "bag" of $v$.

        Condition "B@@treedec" holds for any "atomic program" $a \in \Prog$ for the same reason: every pair $([u,v]_\approx,[u',v']_\approx)$ of "worlds" such that $v=v'$ have that both $[u,v]_\approx$ and $[u',v']_\approx$ are contained in the "bag" of $v$.
        
        For the ``connectedness'' condition "C@@treedec", consider any "world" $[u,v]_\approx$ of $\hat K$, and observe that $[u,v]_\approx \in \bagmap(v')$ for any vertex $v'$ from $V_{u,v} = \set{v'\in V(T) \mid (u,v) \approx (u,v')}$. Further, $V_{u,v}$ forms a connected "subtree@tree" of $T$, and any vertex $v'$ of $T$ with a "bag" containing $[u,v]_\approx$ must be such that $v' \in V_{u,v}$. Hence, $(T,\bagmap)$ verifies condition "C@@treedec".

        Finally, since every "bag" contains at most $k$ elements, the "tree-width" of $(T,\bagmap)$ is at most $k-1$.
    \end{proof}
    We show that, by construction, $\hat K$ is "$k$-bisimilar" to $K$.
    \begin{claim} 
        $K,u \pebblebisim{k} \hat K,\hat u$, where $\hat u = [u,r]_\approx$ and $r$ is the root of~$T$.
    \end{claim}
    \begin{proof}
        We show that, more generally, for any $k$-tuple $\bar u$ of "worlds" from $K$ 
        and any vertex of $T$ of the form $v = w \cdot \set{\bar u[1], \dotsc, \bar u[k]}$, we have
        $K,\bar u \pebblebisim{k} \hat K,\bar{\hat u}$,
        where $\bar{\hat u}[i] = [\bar u[i],v]_\approx$ for every $i$.

        Note that $\set{\bar u[i] \mapsto \bar{ \hat u}[i]}_{i\leq k}$ is a function, and further it is a bijection: if $\bar{\hat u}[i] = \bar{\hat u}[j]$ then $(\bar u[i],v) \approx (\bar u[j],v)$ which means that $\bar u[i] = \bar u[j]$. By definition of $\dbracket{{\rightarrow_a}}_{\hat K}$ in fact both $\set{\bar u[i] \mapsto \bar{\hat u}[i]}_i$ and $\set{\bar{\hat u}[i] \mapsto \bar u[i]}_i$ are "partial homomorphisms" from $\kHoms(K,\hat K)$ and $\kHoms(\hat K, K)$, respectively. In this way we
        verify the first conditions of 
        $K,\bar u \pebblequasibisim k \hat K,\bar{\hat u}$ and $ \hat K,\bar{\hat u}\pebblequasibisim k K,\bar{u}$.

        Let $u'_i$ in $K$ be at "distance" $\leq 1$ from some $\bar u[j]$. We can then show 
        $K,\bar u[i\mapcoord u'_i] \pebblebisim{k} \hat K,\bar{\hat u}[i\mapcoord \hat u'_i]$ for $\hat u'_i = [u'_i,v']_\approx$ and $v' = v \cdot (\set{\bar u[i'] \mid i'\neq i} \cup \set{u'_i}) \in V(T)$.
        Observe that for every $i' \neq i$ we have
        $\bar{\hat u}[i'] = [\bar u[i'],v]_\approx = [\bar u[i'],v']_\approx$
        and thus that $\bar {\hat u}[i\mapcoord \hat u'_i]$ is of the required form.

        We proceed symmetrically for any $\hat u'_i$ in $\hat K$ at "distance" $\leq 1$ from some $\bar{\hat u}[j]$: by definition of $\hat K$ there must be some $v$ and $u'_i$ such that $\hat u'_i = [u'_i,v]_\approx$, $\bar{\hat u}[j] = [\bar u[j],v]_\approx$, and  $u'_i$, $\bar u[j]$ are at "distance" $\leq 1$ in $K$. We then continue with 
        $K,\bar u[i\mapcoord u'_i]$ and
         $\hat K,\bar{\hat u}[i\mapsto \hat u'_i]$, of the required form.

        We can then repeat the same strategy ad aeternam, showing that $K,\bar u \pebblebisim{k} \hat K,\bar{\hat u}$.
    \end{proof}
    Finally, observe that $(T,\bagmap)$ and $\hat K$ are countable if $K$ is countable.
\end{proof}
\AP
As is the case of $\ICPDL$, $\CPDLp$ is definable in Least Fixed Point logic, and thus it inherits the Löwenheim-Skolem ""countable model property"": if a $\CPDLp$ "formula" is satisfiable, it is satisfiable in a countable "structure@@kripke".
We therefore obtain the following corollary from \Cref{prop:treewidth-k-unravelling} and \Cref{them:bisim-charac}.
\begin{corollary}\label{cor:treewidth-k-model-property}
    For every $k\geq 2$, $\CPDLg{\Tw}$ has the ``\,""$\Tw$-model property""'': if a formula $\phi \in \CPDLg{\Tw}$ is satisfiable, then it is satisfiable in a countable "Kripke structure" of "tree-width@@kripke" at most $k$. 
\end{corollary}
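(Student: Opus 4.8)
The plan is to derive this corollary directly by combining the tree-like model property of \Cref{prop:treewidth-k-unravelling} with the bisimulation characterization of \Cref{them:bisim-charac}, after first passing to a countable model. So suppose $\phi \in \CPDLg{\Tw[k]}$ is satisfiable. First I would invoke the countable model property stated just above: there exist a countable Kripke structure $K$ and a world $u \in \worlds{K}$ with $K,u \models \phi$.

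Next I would apply \Cref{prop:treewidth-k-unravelling} \emph{with parameter $k+1$} (not $k$) to this countable $K$ and world $u$; since $k \geq 2$ we have $k+1 \geq 3$, so the proposition applies. This yields a Kripke structure $\hat K$ of tree-width at most $(k+1)-1 = k$, together with a world $\hat u$, such that $K,u \pebblebisim{k+1} \hat K,\hat u$. Because $K$ is countable, the proposition further provides a countable tree decomposition of $\hat K$ of width $\leq k$; as each bag is finite and every world of $\hat K$ lies in some bag, $\hat K$ is itself countable.

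Finally I would use the characterization. The formula $\phi$ lives in $\CPDLg{\Tw[k]}$, whose indistinguishability is governed precisely by $(k+1)$-bisimulation. Applying the 2-to-1 direction of \Cref{them:bisim-charac} to $K,u \pebblebisim{k+1} \hat K,\hat u$, we obtain that $K,u \models \phi$ if and only if $\hat K,\hat u \models \phi$; hence $\hat K,\hat u \models \phi$. This exhibits $\phi$ as satisfiable in a countable Kripke structure of tree-width at most $k$, which is exactly the claim.

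The only real subtlety -- there is no genuine obstacle here, the work having been done in \Cref{prop:treewidth-k-unravelling} and \Cref{them:bisim-charac} -- is bookkeeping the index shift: one must unravel at level $k+1$ so that the resulting width-$k$ model remains $(k+1)$-bisimilar to $K$, matching the pebble count that characterizes $\CPDLg{\Tw[k]}$. It is also worth observing that only the 2-to-1 implication of \Cref{them:bisim-charac} is invoked, so its finite-degree hypothesis is irrelevant; this matters because the unravelled structure $\hat K$ need not be of finite degree.
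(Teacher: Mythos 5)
Your proposal is correct and follows exactly the paper's intended derivation: pass to a countable model via the Löwenheim--Skolem property, apply \Cref{prop:treewidth-k-unravelling} at level $k+1$ to obtain a countable width-$k$ structure that is $(k+1)$-bisimilar to it, and conclude via the 2-to-1 direction of \Cref{them:bisim-charac}. Your two side remarks---the index shift (which is precisely why the argument needs $k\geq 2$, as the paper's subsequent remark about $\CPDLg{\Tw[1]}$ confirms) and the fact that only the 2-to-1 implication is used, so the finite-degree hypothesis is not an issue for the possibly infinite-degree unravelling---are both accurate and match the paper.
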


We can now proceed to the proof of \Cref{thm:separation}.
\begin{proof}[Proof of \Cref{thm:separation}]
Let $k\geq 2$.  Observe that the  "$(k+1)$-clique formula" $\xi_{k+1} \in \CPDLg{\Tw[k]}$  implies that there exists a directed $(k+1)$-clique of $a$'s starting from the current "world", and it is trivially satisfiable in a directed $(k+1)$-clique "structure@@kripke". 
However, there cannot be any $(k+1)$-cliques in a "structure@@kripke" of "tree-width" $\leq k-1$. Hence, in light of \Cref{cor:treewidth-k-model-property}, $\xi_k$ cannot be expressed in $\CPDLg{\Tw[k-1]}$.
\end{proof}

\begin{remark}
    While \Cref{prop:treewidth-k-unravelling} works for every $k \geq 2$, it does not follow that $\CPDLg{\Tw[1]}$ has the "$\Tw[1]$-model property", because the characterization between $\pebblebisim{k}$ and $\CPDLg{\Tw[k-1]}$  of \Cref{them:bisim-charac} holds only for $k\geq 3$. In fact, since $\CPDLg{\Tw[1]} \langsemequiv \CPDLg{\Tw[2]}$ by \Cref{thm:ICPDL_equals_TW1_equals_TW2}, the "$3$-clique formula" is actually expressible in $\CPDLg{\Tw[1]}$.
\end{remark}

\section{Satisfiability}
\label{sec:sat}

The ""satisfiability problem"" for a fragment $\+C$ of $\CPDLp$ is the problem of, given a "formula" $\phi \in \+C$ whether there exists some "Kripke structure" $K$ and "world" $w$ thereof such that $K, w \models \phi$.

Observe that any countable "tree decomposition" can be turned into a "tree decomposition" whose underlying "tree" is binary, preserving the "width".
Decidability of the "satisfiability problem" for $\CPDLg{\Tw}$ and more generally for $\CPDLp$ follows readily from the previous \Cref{cor:treewidth-k-model-property}, the fact that $\CPDLp$ expressions can be effectively expressed in Monadic Second Order Logic (MSO), and the known result that satisfiability for MSO on structures of bounded "tree-width" is decidable by MSO interpretations onto "tree"-structures  (see eg~\cite[Fact~2]{DBLP:journals/apal/Seese91}) and Rabin's theorem. Since every $\CPDLp$ "formula" belongs to $\CPDLg{\Tw}$ for some $k$, it follows that $\CPDLp$ is decidable as well.

\begin{proposition}
    $\CPDLp$-"satisfiability@satisfiability problem" is decidable.
\end{proposition}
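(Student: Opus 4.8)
The plan is to make the reduction sketched above precise: bring the arbitrary formula into a bounded-"tree-width" fragment, invoke the "$\Tw$-model property" to restrict attention to models of bounded "tree-width@@kripke", and then reduce to the decidable MSO theory of such models. First I would observe that every $\CPDLp$ "formula" $\phi$ already lies in $\CPDLg{\Tw[k]}$ for a value $k \geq 2$ computable from $\phi$: take $k = \max\big(\set{2} \cup \set{\tw(\uGraph{C[x_s,x_t]}) \mid C[x_s,x_t] \text{ occurs in } \phi}\big)$, which is well defined since $\phi$ contains finitely many "conjunctive programs" and the "tree-width" of each fixed finite "graph" $\uGraph{C[x_s,x_t]}$ is computable. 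By \Cref{cor:treewidth-k-model-property}, $\phi$ is satisfiable if{f} it is satisfiable in a countable "Kripke structure" of "tree-width@@kripke" at most $k$ (the forward direction is the corollary, the converse is immediate). So it suffices to decide whether $\phi$ has a model in the class $\+K_k$ of "Kripke structures" of "tree-width@@kripke" at most $k$.

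Second, I would give an effective translation of $\phi$ into a sentence of Monadic Second-Order logic (MSO) over the finite signature $\set{\rightarrow_a : a \in \Prog} \cup \set{X_p : p \in \Prop}$ of symbols occurring in $\phi$. By simultaneous induction on "subexpressions" I would build a unary MSO formula $\hat\psi(x)$ with $\dbracket{\psi}_K = \set{w \mid K \models \hat\psi(w)}$ for every "formula" $\psi$, and a binary MSO formula $\hat\pi(x,y)$ with $\dbracket{\pi}_K = \set{(u,v) \mid K \models \hat\pi(u,v)}$ for every "program" $\pi$. The Boolean connectives, the diamond $\tup{\cdot}$, tests, $\epsilon$, converse, union and composition are all first-order; the Kleene star $\pi^*$ is handled by the textbook MSO definition of reflexive transitive closure (for every set closed under $\hat\pi$ and containing $x$, it contains $y$); and a "conjunctive program" $C[x_s,x_t]$ is rendered by existentially quantifying one first-order variable $z_x$ for each $x \in \vars(C)$, conjoining $\hat{\pi'}(z_x,z_{x'})$ over the "atoms" $\pi'(x,x') \in C$, and keeping $z_{x_s},z_{x_t}$ free. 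The sentence $\Phi := \exists x.\, \hat\phi(x)$ is then satisfied exactly by those structures possessing a "world" satisfying $\phi$.

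Finally I would invoke the decidability of MSO-satisfiability over $\+K_k$. Since any countable "tree decomposition" can be made binary without increasing its "width", each structure in $\+K_k$ is encoded by a labelled binary "tree": its "tree decomposition", where every "bag" is labelled by the quantifier-free type (which "atomic propositions" hold and which "atomic programs" connect which elements) of its at most $k+1$ "worlds". The structure is MSO-interpretable inside this labelled "tree" \cite[Fact~2]{DBLP:journals/apal/Seese91}, so $\Phi$ translates into an MSO sentence over such "trees" expressing simultaneously that the labelling encodes a legal "width"-$\leq k$ decomposition and that the decoded structure satisfies $\Phi$; the existence of such a "tree" is decidable by Rabin's theorem, and composing this with the first reduction decides satisfiability of $\phi$. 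The step I expect to be the crux is this translation pipeline: each ingredient—MSO-definability of reflexive transitive closure, the bounded existential pattern for "conjunctive programs", Seese's interpretation, and Rabin's theorem—is standard, but one must check that the entire construction stays within MSO and that the parameter $k$ furnished by the model property is exactly the width for which the tree encoding is available.
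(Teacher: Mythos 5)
Your proposal is correct and follows essentially the same route as the paper, which proves this proposition by exactly the combination you describe: every $\CPDLp$ "formula" lies in $\CPDLg{\Tw}$ for some $k$, the "$\Tw$-model property" of \Cref{cor:treewidth-k-model-property} restricts attention to countable models of "tree-width@@kripke" at most $k$ (with binary "tree decompositions"), and decidability then follows from the effective MSO-expressibility of $\CPDLp$ together with Seese's interpretation onto "trees" and Rabin's theorem. The paper states this only as a brief sketch, so your write-up simply supplies the routine details it leaves implicit.
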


\paragraph*{Complexity}
Next, we will pin down the complexity of the satisfiability problem.
We follow closely the proof from \cite[Theorem~4.8]{DBLP:journals/jsyml/GollerLL09} of satisfiability for $\ICPDL$ and adapt it to the more general case of $\CPDLp$. 
In order to make clear the reuse of material, all definitions and notations are either verbatim or follow closely the ones from \cite{DBLP:journals/jsyml/GollerLL09}.
We will make clear the parts in which our proof diverges.

The proof goes by a series of reductions:
\begin{itemize}
    \item From the "satisfiability problem" of $\CPDLg{\Tw}$ on arbitrary "structures@@kripke" to "structures@@kripke" of "tree-width@@kripke" $k$ (trivial from \Cref{cor:treewidth-k-model-property}).
    \item From the "satisfiability problem" of $\CPDLg{\Tw}$ on "structures@@kripke" of "tree-width@@kripke" $k$ to the "satisfiability problem" of $\CPDLg{\Tw}$ on ``regular trees''.
    \item From the "satisfiability problem" of $\CPDLg{\Tw}$ on regular trees to the emptiness problem of an automata model on "trees".
\end{itemize}

\subsection{$\omega$-regular tree satisfiability}
The "$\omega$-regular tree satisfiability" is the problem of deciding, given an $\CPDLp$ "formula" $\phi$ and a "two-way alternating parity tree automaton" ("TWAPTA") $\+T$, whether there is an infinite "tree" in $L(\+T )$ that, when viewed as a "Kripke structure", is a model of $\phi$.

\AP
Let $\intro*\SigmaN$ be a finite node alphabet and $\intro*\SigmaE$ a finite edge alphabet. 
A ""$\SigmaN$-labeled $\SigmaE$-tree"" is a partial function $T : \SigmaE^* \to 
\SigmaN$ whose domain, denoted $dom(T)$, is prefix-closed. The elements of $dom(T)$ are 
the nodes of $T$. If $dom(T) = \SigmaE^*$, then $T$ is called ""complete"". 
\AP
\begin{toappendix}
In the context of a "complete" tree $T$, 
a node $va \in dom(T)$ with $a \in \SigmaE$, is called the ""$a$-successor"" of $v$, and $v$ is the ""$a$-predecessor"" of $va$. 
\end{toappendix}
In the rest of the section we work with "complete" trees.
\AP
We use $\intro*\tree(\SigmaN, \SigmaE)$ to denote the set of all "complete" "$\SigmaN$-labeled $\SigmaE$-trees". 
If $\SigmaE$ is not important, we simply talk of \reintro{$\SigmaN$-labeled trees}.

The trees accepted by "TWAPTAs" are "complete" "$\pset{P}$-labeled $A$-trees", where $P \subseteq \Prop$ and $A \subseteq \Prog$ are finite sets of "atomic propositions" and "atomic programs", respectively. Such a tree $T$ can be identified with the "Kripke structure" $(A^*, \set{\rightarrow_a \mid a \in A}, \set{T_p \mid p \in P})$, where ${\rightarrow_a} = \set
{ (u, ua) \mid u \in A^*}$ for all $a \in A$ and $T_p = \set{ u \in A^* \mid  p \in T(u) }$ for $p \in P$. Observe that "Kripke structures" derived in this way are deterministic and total 
with respect to $A$, \ie,  the transition relation $\rightarrow_a$ is a total function 
for all $a \in A$.

"TWAPTAs" are walking automata which, at each transition, can read the current node label from $\SigmaN$ and move to a child checking an edge label from $\SigmaE$, move to the parent by checking an edge label from $\SigmaE$, or stay at the same node. Further, the automata model features alternation, so transitions are "positive Boolean combinations@\posB" of these kinds of moves. The acceptance is based on a "parity condition@successful". \ifarxiv Due to space constraints, and since our arguments do not need to handle details relative to its definition, we defer its definition to \Cref{sec:def:twapta}.\fi

For a "TWAPTA" $\+T$ and state $s$ thereof, we define $\dbracketaut{\+T , s}$ to be the set of all pairs $(T, u)$ such that $T \in \tree(\SigmaN , \SigmaE )$ and $u \in \SigmaE^*$ so that there exists a "successful" run of $\+T$ on $T$ starting from state $s$ and vertex $u$\ifarxiv \ (details in \Cref{sec:def:twapta}). \else . \fi
The language $L(\+T)$ accepted by $\+T$ is defined as $\set{ T \in \tree(\SigmaN , \SigmaE ) \mid (T, \epsilon) \in \dbracketaut{\+T,s_0} }$ where $s_0$ is the initial state of $\+T$.
\begin{toappendix}
\subsection{Definition of Two-Way Alternating Parity Tree Automata}
\label{sec:def:twapta}
\AP
To define "TWAPTAs", we need a few preliminaries. For a finite set $X$, we denote by $\intro*\posB(X)$ the set of all positive Boolean formulas where the elements of $X$ are used as variables. The constants \textit{true} and \textit{false} are admitted, \ie\ we have $\textit{true}$, $\textit{false}$ $\in \posB (X)$ for any set $X$. A subset $Y \subseteq X$ can be seen as a valuation in the obvious way by assigning $\textit{true}$ to all elements of $Y$ and $\textit{false}$ to all elements of $X \setminus Y$. For an edge alphabet $\SigmaE$, 
let $\overline{\SigmaE} = \set{ \bar a \mid a \in \SigmaE}$ be a disjoint copy of $\SigmaE$. For $u \in \Sigma^*$ and $d \in \SigmaE \cup \overline{\SigmaE} \cup \set \epsilon$ define:
\begin{align*}
u {\cdot} d &= \begin{cases}
    ud & \text{if $d \in \SigmaE$}\\
    u & \text{if $d = \epsilon$}\\
    v & \text{if there is $a \in \SigmaE$ with $d = \bar a$ and $u = va$}\\
    \textit{undefined} & \text{otherwise}
\end{cases}
\end{align*}

\AP
A ""two-way alternating parity tree automaton"" (""TWAPTA"") over "complete" $\SigmaN$-labeled $\SigmaE$-trees is a tuple $T = (S, \delta, s_0 , \textit{Acc})$, where

\begin{itemize}
    \item $S$ is a finite non-empty set of states,
    \item  $\delta : S \times \SigmaN \to \posB(mov(\SigmaE))$ is the transition function, where $mov(\SigmaE) = S \times (\SigmaE \cup \overline{\SigmaE} \cup \set{ \epsilon } )$ is the set of moves,
    \item  $s_0 \in S$ is the initial state, and
    \item  $\textit{Acc} \mid S \to \Nat$ is the priority function.
\end{itemize}

For $s \in S$ and $d \in \SigmaE \cup \overline{\SigmaE} \cup \set\epsilon$ , we write the corresponding move as $\tup{s, d}$. Intuitively, a move $\tup{s, a}$ , with $a \in \SigmaE$, means that the automaton sends a copy of itself in state $s$ to the "$a$-successor" of the current tree node. Similarly, $\tup{s,\bar a}$
means to send a copy to the "$a$-predecessor" (if existing), and $\tup{s,\epsilon}$ means to stay in the current node. Formally, the behaviour of "TWAPTAs" is defined in terms of runs. Let $\+T$ be a "TWAPTA" as above, $T \in \tree(\SigmaN , \SigmaE)$, $u \in \SigmaE^*$ a node, and $s \in S$ a state of $\+T$. \AP An ""$(s,u)$-run"" of $\+T$ on $T$ is a (not necessarily "complete") $(S \times \SigmaE^*)$-labeled tree $T_R$ such that
\begin{itemize}
    \item $T_R(\epsilon) = (s,u)$, and
    \item for all $\alpha \in dom(T_R)$, if $T_R (\alpha) = (p, v)$ and $\delta(p, T(v)) = \theta$, then there is a subset $Y \subseteq mov(\SigmaE)$ that satisfies $\theta$ and such that for all $(p', d) \in Y$, $v \cdot d$ is defined and there exists a successor $\beta$ of $\alpha$ in $T_R$ with $T_R(\beta) = (p', v \cdot d)$.
\end{itemize}

\AP
We say that an "$(s,u)$-run" $T_R$ is ""successful"" if for every infinite path $\alpha_1 \alpha_2 \dotsb$ in $T_R$ (which is assumed to start at the root), the following number is even:
\begin{align*}
    \min \{ \textit{Acc}(s) \mid {}& s \in S \text{ with } T_R (\alpha_i ) \in \set{ s } \times \SigmaE^* 
    \text{ for infinitely many } i \}
\end{align*} 

\AP
For $s \in S$ define
$\intro*\dbracketaut{\+T , s} \eqdef \{ (T, u) \mid T \in \tree(\SigmaN , \SigmaE ), u \in \SigmaE^*$ , and
there exists a "successful" "$(s,u)$-run" of $\+T$ on $T$ $\}$ and
$\dbracketaut{\+T} \eqdef \dbracketaut{\+T , s_0}$.

Now the language $L(\+T)$ accepted by $\+T$ is defined as
\begin{align*}
L(\+T) \eqdef \set{ T \in \tree(\SigmaN , \SigmaE ) \mid (T, \epsilon) \in \dbracketaut{\+T} }.
\end{align*}

\AP
For a "TWAPTA" $\+T = (S, \delta, s_0 , \textit{Acc})$, we define its size $|\+T | \eqdef | S |$ as its number of states and we define its ""index"" $\intro*\indexAut{\+T}$ as $\max \set{ \textit{Acc}(s) \mid s \in S }$. The size $| \delta |$ of the transition function $\delta$ is the sum of the lengths of all positive Boolean functions that appear in the range of $\delta$.
\end{toappendix}
We use the following upper bound for the emptiness problem for "TWAPTAs", where the "index" $\indexAut{\+T}$ refers to the number of indices used in the parity condition.
\begin{theorem}(\cite{Var89}, \cite[Theorem~3.1]{DBLP:journals/jsyml/GollerLL09})\label{thm:TWAPTA-complexity}
For a given "TWAPTA" $\+T$ with transition function $\delta$, it can be checked in time $\expfun( | \+T | + \indexAut{\+T}) \cdot | \delta |^{O(1)}$ whether $L(\+T ) \neq \emptyset$. Where $\intro*\expfun(n)$ means $2^{n^{O(1)}}$.
\end{theorem}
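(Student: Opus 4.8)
The plan is to follow the classical route for deciding emptiness of two-way alternating parity tree automata, reducing everything to a parity game of controlled size; since the stated bound is exactly the one established in \cite{Var89} and restated as \cite[Theorem~3.1]{DBLP:journals/jsyml/GollerLL09}, one may simply invoke it, but let me sketch the underlying argument.

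First I would convert the "TWAPTA" $\+T$ into an equivalent \emph{one-way nondeterministic} parity tree automaton $\+N$. The two-way navigation is eliminated by Vardi's technique: at each node one summarises the possible up-and-down traversals by an annotation recording, for each pair of states, whether $\+T$ can descend into the subtree rooted at that node in the first state and eventually return to its parent in the second while respecting the parity condition on every infinite branch it spawns inside the subtree. A "successful" run of $\+T$ then corresponds to a locally consistent choice of such annotations, which a one-way (still alternating) automaton can guess and verify, at the cost of an exponential blowup in the number of states but only a polynomial blowup in the "index". Removing alternation is then the simulation theorem for parity tree automata, proved via positional determinacy of parity games, again exponential in the states and polynomial in the index. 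The net effect is that $\+N$ has $2^{\mathrm{poly}(|\+T| + \indexAut{\+T})}$ states and $\mathrm{poly}(|\+T| + \indexAut{\+T})$ priorities.

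Next I would test $L(\+N) \neq \emptyset$. Emptiness of a one-way nondeterministic parity tree automaton reduces to a parity game, the emptiness game in which one player fixes, at the current node, a transition together with the states sent to each child, and the opponent chooses a child to descend into; its positions are essentially the states of $\+N$ times the edge alphabet, its priorities are those of $\+N$, and $\+N$ is nonempty iff the first player wins. A parity game with $N$ positions and $p$ priorities is solvable in time $N^{O(p)}$, so with $N = 2^{\mathrm{poly}(|\+T|+\indexAut{\+T})}$ and $p = \mathrm{poly}(|\+T|+\indexAut{\+T})$ we obtain $N^{O(p)} = 2^{\mathrm{poly}(|\+T|+\indexAut{\+T})} = \expfun(|\+T| + \indexAut{\+T})$; carrying along the dependence on the transition function contributes the factor $|\delta|^{O(1)}$.

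The main obstacle is the two-way elimination step: faithfully capturing in a one-way device the possibly unbounded back-and-forth traversals of $\+T$ while preserving acceptance on every infinite branch. This is the crux of Vardi's construction and rests on the game-theoretic characterisation of acceptance together with positional determinacy of parity games; once it is in place, the removal of alternation and the final emptiness test are standard.
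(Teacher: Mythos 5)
The paper offers no proof of this statement at all: it is imported verbatim as a citation to Vardi and to Theorem~3.1 of G\"oller--Lohrey--Lutz, so your opening move --- ``one may simply invoke it'' --- is exactly what the paper does, and to that extent you match it.

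The sketch you then give, however, has an accounting error that would break the stated bound if taken literally. You first pass from the two-way alternating automaton to a \emph{one-way alternating} automaton ``at the cost of an exponential blowup in the number of states,'' and then remove alternation, ``again exponential in the states.'' Two sequential exponential blowups compose to a doubly exponential state space, and the subsequent parity-game emptiness test, which runs in time $N^{O(p)}$, would then cost $2^{2^{\mathrm{poly}(|\+T|+\indexAut{\+T})}}$ rather than $\expfun(|\+T|+\indexAut{\+T})$. Your claimed ``net effect'' of $2^{\mathrm{poly}}$ states does not follow from the two steps as you describe them. The actual construction of Vardi avoids this by going from the two-way alternating automaton \emph{directly} to a one-way nondeterministic parity tree automaton in a single exponential step: the nondeterministic automaton guesses, at each node, both a positional strategy for the automaton player in the membership game and the annotation summarising finite detours through the node, and verifies local consistency plus the parity condition along all strategy paths (the latter via a determinised word automaton folded into the same construction). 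If you want to present the argument in two stages, you must either make the first stage (elimination of two-wayness) only polynomial --- which is not what the standard annotation technique gives for the state space --- or fuse the strategy-guessing and dealternation into one exponential step as Vardi does. With that correction the rest of your argument (the emptiness game of size $N^{O(p)}$ with $N = 2^{\mathrm{poly}}$ and $p = \mathrm{poly}$, and the $|\delta|^{O(1)}$ factor for the transition function) is the standard and correct route.
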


We can now formally define "$\omega$-regular tree satisfiability". Let $\phi$ be an $\CPDLp$ "formula", let $A = \set{ a \in \Prog \mid a \text{ occurs in }\phi}$ and $P = \set{ p \in \Prop \mid p \text{ occurs in } \phi}$. 
\AP
The "formula" $\phi$ is ""satisfiable with respect to@@twapta"" a "TWAPTA" $\+T$ over "$\pset{P}$-labeled $A$-trees" if there is $T \in L(\+T )$ such that $\epsilon \in \dbracketaut{\phi}_T$. 
\AP
Finally, ""$\omega$-regular tree satisfiability"" is the problem to decide, given such $\phi$ and $\+T$, whether $\phi$ is "satisfiable with respect to@@twapta" $\+T$.

\subsection{Solving $\omega$-regular tree satisfiability}
\label{sec:solving-omega-reg-sat}

We prove that the "$\omega$-regular tree satisfiability" problem is in $2$\exptime by an exponential time reduction to the non-emptiness problem for "TWAPTAs". 
We will actually work with $\ICPDLp$ instead of $\CPDLp$ to simplify the reuse of the construction of \cite{DBLP:journals/jsyml/GollerLL09} for $\ICPDL$.

The main ingredient of the reduction is a mutual inductive translation of (i) $\ICPDLp$ "formulas" into "TWAPTAs" and (ii) of $\ICPDLp$ "programs" into a certain kind of non-deterministic automata ("NFAs") introduced in \cite{DBLP:journals/jsyml/GollerLL09}. The latter are standard NFAs on words which navigate a "complete" "$\SigmaN$-labeled tree" reading symbols from $\set{ a, \bar a \mid a \in \SigmaE }$. They can also make conditional $\epsilon$-transitions, which are executable only if the current tree node is accepted by a given "TWAPTA". The definition of "NFAs" over "TWAPTAs" is relegated to \ifarxiv \Cref{sec:def:NFA-over-TWAPTA}. \else the "full version".\fi

\begin{toappendix}
\subsection{Definition of "Non-deterministic Finite Automata@NFA" over "TWAPTAs"}
\label{sec:def:NFA-over-TWAPTA}
\AP
Formally, a ""non-deterministic finite automaton"" (""NFA"") $\+A$ over a "TWAPTA" $\+T$  with statespace $S$
is a tuple $(Q, p_0, q_0, \rightarrow_{\+A})$, where $Q$ is a finite set of states, $p_0$ and $q_0$ are two selected states, called ""initial@@NFA"" and ""final@@NFA"" respectively, and $\rightarrow_{\+A}$ is a set of labeled-transitions of the following forms, where $q, q' \in Q$ and $a \in \SigmaE$: (i) $q \xrightarrow{a}_{\+A} q'$, (ii) $q \xrightarrow{\bar a}_{\+A} q'$ or (iii) $q \xrightarrow{\+T,s} q'$  with $s \in S$.

\AP
Transitions of the third kind are called ""test transitions"". 
"NFAs" define binary relations on the set of nodes of a "complete" "$\SigmaN$-labeled $\SigmaE$-tree". To make this explicit, let $T \in \tree(\SigmaN , \SigmaE )$ and define ${\Rightarrow_{\+A,T}} \subseteq (\SigmaE^* \times Q) \times (\SigmaE ^* \times Q)$ as the smallest relation such that for all $u \in \SigmaE^*$ , $a \in \SigmaE$ , $p, q \in Q$, and $s \in S$, we have:
%
    (i) $(u,p) \Rightarrow_{\+A,T} (ua,q) \text{ if } p \xrightarrow{a}_{\+A} q$,
    (ii) $(ua,p) \Rightarrow_{\+A,T} (u,q) \text{ if } p \xrightarrow{\bar a}_{\+A} q$,
    (iii) $(u,p) \Rightarrow_{\+A,T} (u,q) \text{ if } p \xrightarrow{\+T,s}_{\+A} q \text{ and }
    (T,u) \in \dbracketaut{\+T,s}$.

\AP
Define
$\intro*\dbracketnfa{\+A} \eqdef \set{ (T, u, v) \mid T \in \tree(\SigmaN , \SigmaE )\text{, }u, v \in \SigmaE^*, \text{ and } (u, p_0) \Rightarrow_{\+A,T}^* (v, q_0)}$.

\bigskip
\end{toappendix}

For a "NFA" $\+A$ over a "TWAPTA" $\+T$, we define $\dbracketnfa{\+A}$ as the set of all triplets $(T,u,v)$ such that $T \in \tree(\SigmaN , \SigmaE )$, $u, v \in \SigmaE^*$, and there is an accepting run from $u$ to $v$ of $\+A$ on $T$ using $+T$ "tests@test transitions" from $\+T$\ifarxiv \ (details in \Cref{sec:def:NFA-over-TWAPTA}).\else . \fi


\paragraph{From $\ICPDLp$ to automata.} 
Fix a finite set of "atomic propositions" $P \subseteq \Prop$ and "atomic programs" $A \subseteq \Prog$ over which $\ICPDLp$ "formulas" and "programs" are built. In our context of trees, for $\ICPDLp$ "formulas" $\psi$ and "programs" $\pi$, let us define 
\AP%
$\phantomintro\dbrackett{}$%
$\dbrackett{\psi}$ and $\dbrackett{\pi}$ as follows:
\ifarxiv
\begin{align*}
    \intro*\dbrackett{\psi} &\eqdef \set{ (T, u) \mid T \in \tree(\pset{P} , A), u \in A^*, u \in \dbracket{\psi}_T },\\
    \reintro*\dbrackett{\pi} &\eqdef \set{ (T, u, v) \mid T \in \tree(\pset{P} , A), u, v \in A^*, (u, v) \in \dbracket{\pi}_T }.
\end{align*}
\else
\begin{align*}
    \intro*\dbrackett{\psi} &\eqdef \set{ (T, u) \mid T \in \tree(\pset{P} , A), u \in A^*, u \in \dbracket{\psi}_T },\\
    \reintro*\dbrackett{\pi} &\eqdef \set{ (T, u, v) \mid T \in \tree(\pset{P} , A), u, v \in A^*, \\
    &\hspace{6.4em} (u, v) \in \dbracket{\pi}_T }.
\end{align*}
\fi

The aim of this section is to show how to convert
\begin{itemize}
    \item each "formula" $\psi$ into a "TWAPTA" $\+T ( \psi )$ such that $\dbracketaut{\+T ( \psi )} = \dbrackett{\psi}$ and
    \item each "program" $\pi$ into a "TWAPTA" $\+T (\pi)$ and an "NFA" $\+A(\pi)$ over $\+T (\pi)$ such that $\dbracketnfa{\+A(\pi)} = \dbrackett{\pi}$.
\end{itemize}

All automata work over "$\pset{P}$-labeled $A$-trees". The construction is by induction on the structure of $\psi$ and $\pi$. 
The definition of the "TWAPTA" $\+T( \psi )$ for each "formula" $\psi$ is as in \cite[\S 3.2]{DBLP:journals/jsyml/GollerLL09}.
We only need to describe the inductive construction of $\+A(\pi)$ and $\+T (\pi)$ for a $\ICPDLp$ "program" $\pi$. If $\pi$ is of the form $a$, $\bar a$, $\psi ?$, $\pi_1 \cup \pi_2$, $\pi_1 \cap \pi_2$, $\pi_1 \circ \pi_2$, or $\chi^*$, we produce the construction as in \cite[\S 3.2]{DBLP:journals/jsyml/GollerLL09}.

\subsubsection{The case of "conjunctive programs"}
\label{subsec:sat:conjunctiveprog}
We are left with the case of a "conjunctive program" of the form $\pi = C[x_s,x_t]$. By inductive hypothesis, for every "atom" $\pi'(x,x') \in C$ there is a "TWAPTA" $\+T (\pi')$ and an "NFA" $\+A(\pi')$ over $\+T (\pi')$ such that $\dbracketnfa{\+A(\pi')} = \dbrackett{\pi'}$. Let us assume, without loss of generality, that the sets of states of the automata for the "atoms" are pairwise disjoint. Hence, we will rather refer to them as $\+T (\pi'(x,x'))$ and $\+A(\pi'(x,x'))$. 
\AP
Let $\intro*\allStates$ be the (disjoint) union of the states $\+A(\pi'(x,x'))$ for every $\pi'(x,x') \in C$.

\knowledgenewrobustcmd{\progStates}[2]{\cmdkl{#1 {\dashrightarrow
} #2}}%
For any $\pi'(x,x') \in C$  and states $q,q'$ of $\+A(\pi'(x,x'))$, we will consider a new sort of "program" of the form 
\AP
\[\intro*\progStates{p}{q}\]
with the semantics $\dbracket{\progStates{p}{q}} \eqdef \dbracketnfa{\+A_{q,q'}(\pi'(x,x'))}$, where $\+A_{q,q'}(\pi'(x,x'))$ is the result of setting $q$ and $q'$ to be the "initial@@NFA" and "final@@NFA" states, respectively, of the "NFA" $\+A(\pi'(x,x'))$ over the "TWAPTA" $\+T(\pi'(x,x'))$. Note that, formally, $\progStates{p}{q}$ depends also on $\+A(\pi'(x,x'))$ and $\+T(\pi'(x,x'))$, but since we have assumed state-disjointness, the automata can be unequivocally recovered, and are then implicit in the definition. 
\AP
An ""$\ICPDL$ program over $\allStates$"" is any "program" described by the syntax 
\begin{align*}
    \pi &\eqqdef \pi \star \pi \quad \text{ for } \star \in \set{\cap,\cup,\circ}\\
    \pi &\eqqdef \progStates p q \quad \text{ for } p,q \in \allStates
\end{align*}
with the expected semantics.

We show next how to translate $C[x_s,x_t]$ into an "equivalent" "$\ICPDL$ program over $\allStates$". 
From there, it suffices to apply the translation of the previous cases ($\cap$, $\circ$, $\cup$) \cite[\S 3.2]{DBLP:journals/jsyml/GollerLL09} into a suitable "NFA".

\AP
\knowledgenewrobustcmd{\pathinT}{\mathrel{\cmdkl{\leadsto_{T_C}}}}%
Consider any binary "tree" $T_C$ having $\vars(C)$ as vertices. For any two variables $x,y \in \vars(C)$, let us write $x \intro*\pathinT y$ to denote the (unique) simple path from $x$ to $y$ in $T_C$. 
Let $f : \vars(C) \to \pset{\allStates}$ be a function mapping each "variable" $y$ to a set of states $Q_y \subseteq \allStates$ so that for every "atom" $\pi'(x,x') \in C$:
\begin{itemize}
    \item if the path $x \pathinT x'$ contains $y$, then $Q_y$ contains exactly one state from $\+A(\pi'(x,x'))$,
    \item otherwise, $Q_y$ contains no state from $\+A(\pi'(x,x'))$.
\end{itemize}
The intuition is that with $T_C$ we guess the `shape' in which $C$ will be mapped to the tree, and with $f$ we guess intermediate states in the paths witnessing each "atom". 
\AP
\knowledgenewrobustcmd{\ShapesC}{\cmdkl{\+S}}%
Let $\intro*\ShapesC$ be the set of all pairs $(T_C,f)$ verifying the conditions above.

For any two variables $y,y' \in \vars(C)$ such that $y'$ is a child of $y$ in $T_C$, and any two states $q  \in f(y)$ and $q' \in f(y')$ belonging to some common "NFA" $\+A(\pi'(x,x'))$, we have that the path $x \pathinT x'$ either (a) visits first $y$ and then $y'$, or (b) visits first $y'$ and then $y$. We then associate to such tuple $(y,y',q,q')$ the "atom" $\rho(y,y')$ with $\rho = \progStates{q}{q'}$ (if (a) holds), or the "atom" $\rho(y',y)$ with $\rho = \progStates{q'}{q}$ (if (b) holds).

For a given choice of $T_C$ and $f$, consider $C_{T_C,f}$ as the set of all the "atoms" described in this way. Observe that 
\begin{enumerate}
    \item $\vars(C_{T_C,f}) = \vars(C)$, 
    \item the number of "atoms" of $C_{T_C,f}$ is at most polynomial in the number of "atoms" of $C$ (in fact, quadratic), and \label{it:quadratic-atoms} 
    \item $\bigcup_{(T_C,f) \in \ShapesC}\dbrackett{C_{T_C,f}[x_s,x_t]} = \dbrackett{C[x_s,x_t]}$,
    \item the "underlying graph@@C" of $C_{T_C,f}$ is of "tree-width" 1.
\end{enumerate}
Since for each $\hat C = C_{T_C,f}$ with $(T_C,f) \in \ShapesC$ we have that $\uGraphC{\hat C}$ is of "tree-width" 1, it follows that  $\uGraph{\hat C[x_s,x_t]} \in \Tw[2]$. Hence, we can construct in polynomial time an "equivalent" "$\ICPDL$ program over $\allStates$", 
\AP
call it $\intro*\translatedC$, of polynomial size  (\cf~\Cref{thm:ICPDL_equals_TW1_equals_TW2}). 
%
Therefore, $C[x_s,x_t]$ is equivalent to an exponential union of polynomial-sized "$\ICPDL$ programs over $\allStates$", one $\translatedC$ for each $(T_C,f) \in \ShapesC$.
We can then apply the construction for $\ICPDL$ \cite[\S 3.2]{DBLP:journals/jsyml/GollerLL09} to obtain the desired "TWAPTA" $\+T (C[x_s,x_t])$ and an "NFA" $\+A(C[x_s,x_t])$ over $\+T (C[x_s,x_t])$ such that $\dbracketnfa{\+A(C[x_s,x_t])} = \dbrackett{C[x_s,x_t]}$.

\subsubsection{Automata size}
\AP
The ""intersection width""  $\intro*\iwidth(\pi)$ of an $\ICPDL$ "program" $\pi$ is defined in \cite{DBLP:journals/jsyml/GollerLL09} as follows:
\begin{align*}
\iwidth(a) &= \iwidth(\bar a) \eqdef 1 \text{ for all } a \in \Prog, 
\qquad \iwidth(\theta?) \eqdef 1,\\
\iwidth(\pi_1 \cup \pi_2) &=
\iwidth(\pi_1 \circ \pi_2) \eqdef
\max \set{ \iwidth(\pi_1), \iwidth(\pi_2)},\\
\iwidth(\pi^*) &\eqdef \iwidth(\pi),
\qquad
\iwidth(\pi_1 \cap \pi_2) \eqdef \iwidth(\pi_1) + \iwidth(\pi_2).
\end{align*}
\AP
This is generalized to any "program" or "formula" $\alpha$ by defining $\intro*\Iwidth(\alpha)$ to be the maximum "intersection width" of a "program" therein (or 1 if it contains no "program").
The interest of $\Iwidth$ stems from the fact that, as shown in \cite{DBLP:journals/jsyml/GollerLL09}, for every $\ICPDL$ "expression" $\alpha$, the constructed "TWAPTA" is exponential only in $\Iwidth(\alpha)$:
\begin{lemma}(\cite[Lemma 3.7]{DBLP:journals/jsyml/GollerLL09})\label{lem:ICPDL-automata-size}
%
    For every $\ICPDL$ "expression" $e$, 
    $| \+T (e) | \leq 9 \cdot | e |^{2\cdot\Iwidth(e)+1}$ and $\indexAut{\+T (e)} \leq | e |$.
\end{lemma}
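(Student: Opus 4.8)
The plan is to prove both inequalities simultaneously by mutual structural induction on the $\ICPDL$ "expression" $e$, following step by step the inductive automaton construction of \cite[\S 3.2]{DBLP:journals/jsyml/GollerLL09}. The guiding principle is that the polynomial \emph{degree} $2\Iwidth(e)+1$ of the size bound is controlled solely by the additive behaviour of $\iwidth$ under "program intersection", whereas the index is controlled by the behaviour of the construction under Kleene star. Throughout, one uses the super-additivity $|e| \geq |e_1| + |e_2| + 1$ of expression size under binary constructors to absorb products of sub-bounds.

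In the base cases --- atomic programs $a$ and $\bar a$, tests $\theta?$, and atomic propositions $p$ --- the constructed automaton has a constant number of states and index at most $1$, so both bounds hold since $\Iwidth \geq 1$ and the leading constant is at most $9$. For the Boolean and modal formula constructors $\lnot\phi$, $\phi_1 \land \phi_2$, $\tup{\pi}$, and for the program constructors $\pi_1 \cup \pi_2$, $\pi_1 \circ \pi_2$, and $\pi^*$, the automaton is assembled from the sub-automata by little more than a disjoint union plus a constant number of fresh states and transitions; hence the state count grows only additively. Since in each of these cases $\iwidth$ (and thus $\Iwidth$) is either inherited or taken as a maximum over the immediate subexpressions, the degree $2\Iwidth+1$ is unchanged, and a direct computation shows the constant $9$ survives.

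The crucial case is "program intersection" $\pi = \pi_1 \cap \pi_2$, and this is where the main difficulty lies. The "TWAPTA" must verify that the two path relations hold \emph{between the same pair of nodes}; exploiting the fact that the structures are "trees", the GLL construction runs the two (two-way) sub-navigations in parallel and synchronises them, tracking a tuple consisting of one active thread-state per currently open intersection. The maximal number of simultaneously open threads along any branch of the construction is exactly $\iwidth(\pi)$, and the defining identity $\iwidth(\pi_1 \cap \pi_2) = \iwidth(\pi_1) + \iwidth(\pi_2)$ is precisely what makes the induction close: each additional open thread multiplies the state count by a further polynomial factor in $|\pi|$, and the exponent $2\Iwidth+1$ is engineered with enough slack (the additive $+1$) to absorb this product against the combined size $|\pi| \geq |\pi_1| + |\pi_2| + 1$ while preserving the leading constant $9$. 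The most delicate point of the whole argument is verifying that the two-way, alternating walk on the tree lets one capture $\cap$ with only this product-style blow-up in the number of states (rather than anything worse), so that the degree tracks $2\Iwidth$ exactly.

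Finally, for the index bound one observes that the only constructor that raises the priority function is the Kleene star, which introduces at most one new priority to enforce the progress (fairness) condition of the iteration; every other constructor merely takes the maximum of the indices of its subexpressions. Consequently $\indexAut{\+T(e)}$ is bounded by the number of nested stars, and hence by the number of "subexpressions" of $e$, which is at most $|e|$.
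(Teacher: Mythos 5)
This lemma is not proved in the paper at all: it is imported verbatim as Lemma~3.7 of \cite{DBLP:journals/jsyml/GollerLL09}, so your reconstruction can only be measured against that source. For the \emph{size} bound your outline is faithful to the cited construction: the induction is on the expression, all constructors other than intersection are essentially additive on states, and the only multiplicative step is $\pi_1\cap\pi_2$, where the synchronised product of the two tree navigations is what forces the exponent to track $2\cdot\Iwidth(e)$, with the additive $+1$ in the exponent and the super-additivity of $|e|$ under binary constructors closing the induction. As a sketch, that part is fine.

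The \emph{index} argument, however, rests on a wrong premise. In the construction of \cite[\S 3.2]{DBLP:journals/jsyml/GollerLL09} the Kleene star lives entirely at the level of the NFA $\+A(\pi)$ and costs no new priorities: the TWAPTA for $\tup{\pi}$ simulates $\+A(\pi)$ using one fixed odd priority that forbids an infinite simulation, whether or not $\pi$ contains stars, so $\pi^*$ merely takes the maximum of the sub-indices. What \emph{does} increase the parity index is negation, because $\+T(\lnot\psi)$ is obtained by dualising $\+T(\psi)$ and shifting its priorities. This is precisely what \Cref{rem:index-better} records: the index is bounded by $\negdepth(e)$, the nesting depth of negations, not by the nesting depth of stars. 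Your claim that every constructor other than the star merely takes the maximum of the indices of its subexpressions is therefore false for $\lnot$, and your derivation of $\indexAut{\+T(e)}\leq|e|$ does not go through as written; the numerical bound is still true, since $\negdepth(e)\leq|e|$, but for a different reason than the one you give.
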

 
\begin{remark}\label{rem:index-better}
    \AP
    By simple inspection of the construction of \cite{DBLP:journals/jsyml/GollerLL09}, the previous bound can be improved to $\indexAut{\+T (e)} \leq {\negdepth(e)}$, where $\intro*\negdepth(e)$ denotes the maximum number of nested negations in the "expression" $e$ (plus one).
\end{remark}

In our construction for a $\ICPDLp$ "expression" we obtain an "intersection width" polynomially bounded since at each step we translate a "conjunctive program" into a union of polynomial-sized "$\ICPDL$ programs over" the already built automata.
Concretely, the "intersection width" of each $\translatedC$ above is at most the number of "atoms" of $C_{T_C,f}$, hence quadratic in the number of "atoms" of $C$.
Further, the construction does not change the number of nested negations. Therefore, by \Cref{lem:ICPDL-automata-size}, we obtain, for each $\ICPDLp$ "expression" $\alpha$, a "TWAPTA" $\+T(\alpha)$ which is of single exponential size.

\AP
Let us denote by the ""conjunctive width"" $\intro*\cqsize{\pi}$ of a $\CPDLp$ "program" $\pi$ to the following measure:
\begin{align*}
    \cqsize{a} &= \cqsize{\bar a} \eqdef 1 \text{ for all } a \in \Prog,
    \qquad
    \cqsize{\theta?} \eqdef 1,\\
    \cqsize{\pi_1 \cup \pi_2} &= 
    \cqsize{\pi_1 \circ \pi_2} \eqdef
    \max \set{ \cqsize{\pi_1}, \cqsize{\pi_2}},\\
    \cqsize{\pi^*} &\eqdef \cqsize\pi,
    \qquad
    \cqsize{C[x_s,x_t]} \eqdef \sum_{\pi(x,y) \in C}\cqsize\pi.
\end{align*}
We define $\intro*\Cqsize{\alpha}$
$\phantomintro*\Cqsizealt$%
for any "expression" $\alpha$ to be the maximum "conjunctive width" of a "program" therein (or 1 if it contains no "program").

In the previous reduction from a "conjunctive program" to a union of "$\ICPDL$ programs over $\allStates$", observe that each "$\ICPDL$ program over $\allStates$" is of polynomial size, and thus polynomial in the "conjunctive width" of the original "formula". Further, the "intersection width" is polynomial in the "conjunctive width". Combining this observation with the size bounds of \Cref{lem:ICPDL-automata-size} plus \Cref{rem:index-better}, and the complexity of "TWAPTA" of \Cref{thm:TWAPTA-complexity} we obtain the following\ifarxiv \ (details in \Cref{sec:sat-auotmatabounds}).\else . \fi

\begin{toappendix}
\subsection{Automata bounds}
\label{sec:sat-auotmatabounds}
First we show that from the reduction from a "conjunctive program" to a union of "$\ICPDL$ programs over $\allStates$", we get the following bounds.
\begin{claim}\label{cl:size-CPDLplus-ICPDL}
    Let $\pi' = \bigcup_{T_C,f} \translatedC$ be the "$\ICPDL$ program over" previously built automata produced from an $\ICPDLp$ "conjunctive program" $\pi$. Then: (1) the size\footnote{By `size' we mean without taking into account the sizes of the "NFAs"/"TWAPTAs" over which it is based.} of $\pi'$ is in $\expfun(\cqsize{\pi})$; (2) $\iwidth(\pi') \leq \cqsize{\pi}^{O(1)}$; and (3) ${\negdepth(\pi')} = {\negdepth(\pi)}$.
\end{claim}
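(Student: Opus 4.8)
The plan is to prove the three bounds separately, reducing everything to a size estimate for each individual $\translatedC$ and a counting estimate for the index set $\ShapesC$. Throughout I write $m=|C|$ for the number of "atoms" of $C$, and record the two crude inequalities that drive all the estimates: since every "atom" contributes at least $1$ to the sum defining $\cqsize{C[x_s,x_t]}$, we have $m\leq\cqsize{\pi}$; and since each "atom" mentions at most two "variables", $|\vars(C)|\leq 2m\leq 2\cqsize{\pi}$.

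Parts (2) and (3) are the easy ones, so I would dispatch them first. For the "intersection width", recall that $\iwidth$ distributes as a maximum over $\cup$; hence from $\pi'=\bigcup_{(T_C,f)\in\ShapesC}\translatedC$ I get $\iwidth(\pi')=\max_{(T_C,f)}\iwidth(\translatedC)$, and invoking the already-noted bound $\iwidth(\translatedC)\leq|C_{T_C,f}|$ together with property~(\ref{it:quadratic-atoms}) (the number of "atoms" of $C_{T_C,f}$ is quadratic in $m$) yields $\iwidth(\pi')\leq O(m^2)\leq\cqsize{\pi}^{O(1)}$. For the negation depth, the point is that each $\translatedC$ is built from the "atoms" $\progStates{p}{q}$ using only $\cap,\cup,\circ$, and the outer union adds no further operator; none of these introduces a negation, and every negation of $\pi$ stays encapsulated inside the sub-automata referenced by the $\progStates{p}{q}$. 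Thus the maximal nesting of negations is exactly that of the underlying "atoms" of $C$, i.e.\ $\negdepth(\pi')=\negdepth(\pi)$.

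The substance is part (1). By \Cref{thm:ICPDL_equals_TW1_equals_TW2} each $\translatedC$ has size polynomial in $|C_{T_C,f}|=O(m^2)$, hence $\cqsize{\pi}^{O(1)}$, so it remains only to show $|\ShapesC|\leq\expfun(\cqsize{\pi})$, from which $|\pi'|\leq|\ShapesC|\cdot\cqsize{\pi}^{O(1)}\leq\expfun(\cqsize{\pi})$. I would bound $|\ShapesC|$ as the product of the number of binary "trees" $T_C$ and the number of admissible $f$ per tree. The number of "trees" on the vertex set $\vars(C)$ is at most $|\vars(C)|^{|\vars(C)|}\leq(2\cqsize{\pi})^{2\cqsize{\pi}}=\expfun(\cqsize{\pi})$. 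For $f$ the crucial move is to count via its defining constraint rather than as an arbitrary map into $\pset{\allStates}$: $f$ selects, for each "atom" $\pi'(x,x')$ and each "variable" $y$ on the path $x\pathinT x'$, exactly one state of $\+A(\pi'(x,x'))$. There are at most $m\cdot|\vars(C)|\leq 2\cqsize{\pi}^2$ such (atom, path-vertex) slots, each filled in at most $N$ ways, where $N$ is the largest number of states of an "atom" automaton; since the "atoms" are strict subexpressions, $\cqsize{\pi'}\leq\cqsize{\pi}$, and the inductive size bounds of the construction give $N\leq\expfun(\cqsize{\pi})$. Hence the admissible $f$ number at most $N^{2\cqsize{\pi}^2}\leq\expfun(\cqsize{\pi})$, and multiplying the two counts gives $|\ShapesC|\leq\expfun(\cqsize{\pi})$.

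The main obstacle is precisely this last count for $f$. The naive estimate, treating $f$ as an arbitrary function $\vars(C)\to\pset{\allStates}$, gives $2^{|\allStates|\cdot|\vars(C)|}$, which is \emph{doubly} exponential in $\cqsize{\pi}$ because $|\allStates|$ is already single-exponential; this would overshoot the $2$\exptime target. The fix is to exploit that an admissible $f$ places exactly one state per path-vertex per "atom", so the exponent is the polynomial number of slots $O(\cqsize{\pi}^2)$ instead of $|\allStates|$, keeping $N^{O(\cqsize{\pi}^2)}$ single-exponential. The only remaining care-point is the justification that $N\leq\expfun(\cqsize{\pi})$, which is a clean instance of the outer structural induction on $\ICPDLp$ "expressions": the "atom" automata are constructed for strict subexpressions whose "conjunctive width" is bounded by $\cqsize{\pi}$, so their sizes are already known to be single-exponential in $\cqsize{\pi}$.
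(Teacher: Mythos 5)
Your proof follows the same route as the paper's: each $\translatedC$ is of size $\cqsize{\pi}^{O(1)}$ by the polynomial translation of \Cref{thm:ICPDL_equals_TW1_equals_TW2} applied to the quadratic-size $C_{T_C,f}$, the union has $\expfun(\cqsize{\pi})$ many disjuncts, $\iwidth$ distributes as a maximum over $\cup$, and the construction introduces no new negations. You are in fact more explicit than the paper, which simply asserts that there are at most $2^{\cqsize{\pi}^{O(1)}}$ programs $\translatedC$: your slot-counting of the admissible $f$ is the right way to avoid the doubly-exponential naive count, with the caveat that the bound $N\leq\expfun(\cqsize{\pi})$ on the number of states of the atom automata is no more (and no less) justified than the paper's own assertion, since the inductive bound actually proved is $|\+T(e)|\leq 9\cdot|e|^{2\cdot\Cqsize{e}+1}$, which also carries a dependence on $|e|$.
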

\begin{proof}
    Note that if $\pi = C[x_s,x_t]$, each $\translatedC$ is of polynomial size $\cqsize{\pi}^{O(1)}$ where remember $\cqsize{\pi}$ of "atoms" of $C$, as the result of the polynomial time translation of \Cref{thm:ICPDL_equals_TW1_equals_TW2} applied to the "conjunctive program" $C_{T_C,f}$ of quadratic size.  Then, there are at most $2^{\cqsize{\pi}^{O(1)}}$ such "programs" $\translatedC$, each "program" being of polynomial size $\cqsize{\pi}^{O(1)}$. This yields a bound of $\cqsize{\pi}^{\cqsize{\pi}^{O(1)}}$ which is in $\expfun(\cqsize{\pi})$.
    The fact that $\iwidth(\pi')$ is polynomial in $\pi$ follows from the fact that each $\translatedC$ is of polynomial size, and that $\iwidth(\bigcup_{T_C,f} \translatedC)$ is $\max_{T_C,f}\set{\iwidth(\translatedC)}$.
\end{proof}

Combining the size bound of \Cref{lem:ICPDL-automata-size} with \Cref{rem:index-better}, \Cref{cl:size-CPDLplus-ICPDL} and the complexity of "TWAPTA" of \Cref{thm:TWAPTA-complexity} we obtain:
\end{toappendix}
\begin{propositionrep}\label{thm:omega-tree-sat}
    For a "TWAPTA" $\+T$ with transition function $\delta$ and a $\CPDLp$ "formula" $\phi$, we can decide in time
    \[\expfun\big(|\+T| + \indexAut{\+T} + |\phi|^{\Cqsize{\phi}} \big) \cdot |\delta|^{O(1)}\] 
    whether there exists some $T \in L(\+T)$ such that $\epsilon \in \dbracket{\phi}_T$.
    Hence, the "$\omega$-regular tree satisfiability" problem for $\CPDLp$ is in $2$\exptime, and the "$\omega$-regular tree satisfiability" problem for $\set{\phi \in \CPDLp \mid \Cqsize{\phi} \leq n}$ is in \exptime, for any $n \in \Nat$.
\end{propositionrep}
\begin{proof}
    We first show, by induction on $\phi$, the following bound on the constructed "TWAPTA":
    \begin{claim}\label{cl:automatabound}
        $| \+T (e) | \leq 9 \cdot | e |^{2\cdot \Cqsize{e}+1}$, and $\indexAut{\+T (e)} \leq  \negdepth(e) $.
    \end{claim}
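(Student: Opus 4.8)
The plan is to prove the claim by induction on the structure of the $\CPDLp$ expression $e$, tracking the size bound and the index bound simultaneously. For a uniform treatment it is convenient to regard $e$ as an $\ICPDLp$ expression and to recall, via \Cref{prop:intersecion_does_not_add_expressive_power}, that a program intersection $\pi_1 \cap \pi_2$ is just a degenerate conjunctive program, so the only genuinely new construct relative to $\ICPDL$ is a conjunctive program $C[x_s,x_t]$. For every other shape of $e$ (an "atomic proposition" $p$, an "atomic program" $a$ or $\bar a$, a test $\theta?$, a Boolean combination, a diamond $\tup{\pi}$, or a program built with $\cup$, $\circ$, or $*$) the automaton $\+T(e)$ — and, for programs, the pair consisting of $\+T(e)$ together with its "NFA" — is produced exactly as in the $\ICPDL$ construction of \cite[\S 3.2]{DBLP:journals/jsyml/GollerLL09}. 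On all these operators $\Cqsize{\cdot}$ satisfies the very same recurrences as $\Iwidth$ (it is a maximum for $\cup$ and $\circ$, and is unchanged by $*$, diamonds, and Boolean connectives), and none of them alters $\negdepth$ except negation, which increases it by exactly one. Hence the size estimate of \Cref{lem:ICPDL-automata-size}, read with $\Iwidth$ replaced by $\Cqsize$, propagates verbatim, and the index bound follows from \Cref{rem:index-better}. This disposes of every case except the conjunctive program.

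For $e = C[x_s,x_t]$, the inductive hypothesis furnishes, for each "atom" $\pi'(x,x') \in C$, automata $\+T(\pi'(x,x'))$ and $\+A(\pi'(x,x'))$ with $|\+T(\pi'(x,x'))| \le 9\,|\pi'|^{2\Cqsize{\pi'}+1}$ and $\indexAut{\+T(\pi'(x,x'))} \le \negdepth(\pi')$. Following \Cref{subsec:sat:conjunctiveprog}, $C[x_s,x_t]$ is equivalent to the "$\ICPDL$ program over $\allStates$" $\pi^\dagger = \bigcup_{(T_C,f)} \translatedC$, whose leaves are states of the already-built $\+A(\pi'(x,x'))$. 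The three quantitative facts of \Cref{cl:size-CPDLplus-ICPDL} are then exactly what is required: $|\pi^\dagger| \in \expfun(\cqsize{C[x_s,x_t]})$, $\iwidth(\pi^\dagger) \le \cqsize{C[x_s,x_t]}^{O(1)}$, and $\negdepth(\pi^\dagger) = \negdepth(C[x_s,x_t])$. Applying the $\ICPDL$ construction of \Cref{lem:ICPDL-automata-size} to $\pi^\dagger$ over the atoms' automata yields $\+T(C[x_s,x_t])$; the index bound is immediate from $\negdepth(\pi^\dagger)=\negdepth(e)$ together with \Cref{rem:index-better}, since the reduction introduces no new nested negations.

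The delicate part — and the main obstacle — is to verify that the size estimate thus obtained is genuinely bounded by $9\,|e|^{2\Cqsize{e}+1}$, i.e.\ that the exponent stays linear in $\Cqsize{e}$. Two points require care. First, the union over the (exponentially many) shapes $(T_C,f)$ must not inflate the exponent: because $\iwidth(\cdot)$ takes a maximum over a union, the governing intersection width remains $\iwidth(\pi^\dagger) \le \cqsize{C[x_s,x_t]}^{O(1)} \le \Cqsize{e}^{O(1)}$, so by \Cref{lem:ICPDL-automata-size} the contribution of each $\translatedC$ enters only through this polynomially bounded width rather than through the exponential number of disjuncts. Second, the atoms' automata enter the final size solely through $|e|$ and their inductive bounds, whose exponents $2\Cqsize{\pi'}+1$ are dominated by $2\Cqsize{e}+1$ since $\Cqsize{\pi'} \le \Cqsize{e}$ and $\cqsize{C[x_s,x_t]} = \sum_{\pi'(x,y)\in C}\cqsize{\pi'}$. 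The remaining bookkeeping is to combine the single-exponential size $\expfun(\cqsize{C[x_s,x_t]})$ of $\pi^\dagger$, the width $\iwidth(\pi^\dagger)$, and the inductive bounds on the atoms so that their product collapses into the stated polynomial in $|e|$ with exponent $2\Cqsize{e}+1$; this is the one step where the precise constants must be tracked, every other case being a routine re-reading of the $\ICPDL$ estimates of \cite{DBLP:journals/jsyml/GollerLL09} with $\Iwidth$ replaced by $\Cqsize$ and with $\negdepth$ controlling the index in place of $|e|$.
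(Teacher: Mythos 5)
Your proposal takes essentially the same route as the paper: induction with the non-conjunctive operators handled by re-reading \Cref{lem:ICPDL-automata-size} (with $\Iwidth$ replaced by $\Cqsize{e}$, which obeys the same recurrences on $\cup$, $\circ$, $*$ and tests) together with \Cref{rem:index-better} for the index, and the conjunctive-program case handled via the translation of \Cref{subsec:sat:conjunctiveprog} and the three bounds of \Cref{cl:size-CPDLplus-ICPDL}. The only piece you announce but do not carry out --- the final arithmetic --- is exactly what the paper writes down to close the case $e=C[x_s,x_t]$ with $C=\set{\pi_i(x_i,y_i)}_i$: it bounds $|\+T(e)| \le \bigl(\sum_i|\+T(\pi_i)|\bigr) + |\+T(\hat e)|$, where $\hat e$ is the $\ICPDL$ program over the previously built automata, applies the inductive hypothesis and \Cref{cl:size-CPDLplus-ICPDL} to get $\sum_i 9\,|\pi_i|^{2\Cqsize{\pi_i}+1} + 9\cdot 2^{2\Cqsize{e}^c+1}$, and absorbs both terms into $9\,\bigl(\sum_i|\pi_i|\bigr)^{2\Cqsize{e}+1} = 9\,|e|^{2\Cqsize{e}+1}$ using $\Cqsize{\pi_i}\le\Cqsize{e}$. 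Since you have all the right ingredients in place and correctly identify why neither the exponentially many shapes $(T_C,f)$ nor the atoms' automata inflate the exponent, there is no gap in the approach --- but a complete proof does need that last chain of inequalities spelled out rather than described, as the claim is nothing but this quantitative bound.
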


\proofcase{Base case}
    For any "expression" $e$ which no "conjunctive programs", then $e$ is a $\CPDL$ "expression". Then the bound on the size of the equivalent "TWAPTA" of \Cref{lem:ICPDL-automata-size} applies where $\Iwidth(e)=1$.

    We obtain $| \+T (e) | \leq  9 \cdot | e |^{3}$, and $\indexAut{\+T (e)} \leq  \negdepth(e) $.

    \proofcase{Inductive case}
    If $e$ is a "conjunctive program" $C[x_s,x_t]$, for $C= \set{\pi_i(x_i,y_i) \mid 1 \leq i \leq n}$. 

    We have
    $| \+T (\pi_i) | \leq  9 \cdot | \pi_i |^{2\cdot \Cqsize{\pi_i}+1}$, and $\indexAut{\+T (\pi_i)} \leq  \negdepth(\pi_i) $ by inductive hypothesis, and
    $| \+T (e) | \leq (\sum_i | \+T (\pi_i) |) + |\+T(\hat e)|$ where $\hat e$ is the "$\ICPDL$ program over" the $\+T(\pi_i)$'s equivalent to $e$, as defined in \Cref{subsec:sat:conjunctiveprog}. By the bounds of \Cref{cl:size-CPDLplus-ICPDL}, $|\hat e| \leq 2^{\cqsize{e}^c}$ for some $c \in\Nat$. Hence, applying the automata bound of \Cref{lem:ICPDL-automata-size}  we have $|\+T(\hat e)| \leq 9 \cdot  2^{2 \cdot \cqsize{e}^c+1} \leq 9 \cdot  2^{2 \cdot \Cqsize{e}^c+1}$.

    Thus,
    \begin{align*}
        | \+T (e) | &\leq \big(\sum_i | \+T (\pi_i) |\big) + 9 \cdot  2^{2 \cdot \Cqsize{e}^c+1}    \\
        &\leq \big(\sum_i 9 \cdot | \pi_i |^{2\cdot \Cqsize{\pi_i}+1}\big) + 9 \cdot  2^{2 \cdot \Cqsize{e}^c+1}\\
        &\leq 9 \cdot \big(\sum_i  \cdot | \pi_i |^{2\cdot \Cqsize{e}+1} + \cdot  2^{2 \cdot \Cqsize{e}^c+1}\big)\\
        &\leq 9 \cdot \big(\sum_i  | \pi_i |\big)^{2\cdot \Cqsize{e}+1}\\
        &= 9 \cdot |e|^{2\cdot \Cqsize{e}+1}
    \end{align*}
    This concludes the proof of \Cref{cl:automatabound}. Now, applying the complexity of "TWAPTA" of \Cref{thm:TWAPTA-complexity} with the bounds of \Cref{cl:automatabound} we obtain that emptiness for $\+T(\phi)$ can be checked in
    \begin{align*}
        \expfun( | \+T(\phi) | + \indexAut{\+T(\phi)}) \cdot | \delta |^{O(1)} &
        \leq \expfun( 9 \cdot |\phi|^{2\cdot \Cqsize{\phi}+1} + \negdepth(\phi)) \cdot | \+T(\phi) |^{O(1)} \\
        &\leq \expfun(|\phi|^{\Cqsize{\phi}})
    \end{align*}
    From this, it follows (again by \Cref{thm:TWAPTA-complexity}) that for a "TWAPTA" $\+T$ with transition function $\delta$, and a $\CPDLp$ "formula" $\phi$, testing emptiness for $\+T \cap \+T(\phi)$ can be decided in $\expfun\big(|\+T| + \indexAut{\+T} + |\phi|^{\Cqsize{\phi}} \big) \cdot |\delta|^{O(1)}$.
\end{proof}

\subsection{Reducing to $\omega$-regular tree satisfiability}
This is essentially the reduction shown in \cite[\S 4.2]{DBLP:journals/jsyml/GollerLL09}.
The exact same reduction as \cite[Lemma 4.7]{DBLP:journals/jsyml/GollerLL09} (with a trivial adaptation changing "tree-width" $2$ to "tree-width" $k$) yields that for every $\CPDLp(\Tw)$ "formula" $\phi$ there is a polynomially computable $\CPDLp(\Tw)$ "formula" $\phi'$ and "TWAPTA" $\+T$ such that the following holds.
\begin{lemma}\label{lem:redux-omega-tree-sat}
    $\phi$ is satisfiable on a "structure@@kripke" of "tree-width" $k$ if{f} $\phi'$ is "satisfiable with respect to@@twapta" $\+T$. Moreover, $\Cqsize{\phi}=\Cqsize{\phi'}$.
\end{lemma}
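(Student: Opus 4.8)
The plan is to reproduce the reduction of \cite[\S 4.2]{DBLP:journals/jsyml/GollerLL09} verbatim, replacing the fixed bag size $3$ (appropriate for "tree-width" $2$) by bag size $k+1$, and to additionally check that the construction leaves the "conjunctive width" unchanged. The underlying idea is standard: a "Kripke structure" of "tree-width@@kripke" $\leq k$ is faithfully described by a binary "tree decomposition" of "width" $\leq k$ (recall that any countable "tree decomposition" can be made binary without increasing its "width"), and such a decomposition is in turn encoded as a "complete" labeled tree over suitable finite node and edge alphabets. The "TWAPTA" $\+T$ is then the device that singles out, among all trees, exactly those encoding a legal decomposition, so that satisfiability over width-$k$ structures reduces to deciding whether $\phi'$ is "satisfiable with respect to@@twapta" $\+T$.

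First I would fix the encoding alphabet, exactly as in \cite{DBLP:journals/jsyml/GollerLL09} but with slot indices ranging over $\set{1,\dots,k+1}$ instead of $\set{1,2,3}$: each tree node represents a "bag" of up to $k+1$ slots, its node label records the "atomic propositions" holding at the "world" in each slot together with the "atomic programs" relating each ordered pair of slots, and the overlap between a bag and each of its children (which slots denote the same "world") is recorded as well. The automaton $\+T$ then checks the \emph{local} consistency of this data across each parent/child pair and, crucially, enforces the connectivity condition "C@@treedec" of "tree decompositions", namely that the bags hosting a fixed "world" form a connected "subtree@tree"; this is the only nontrivial part of $\+T$, and its construction carries over unchanged up to the widened index range. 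For fixed $k$, both $\+T$ and the output formula $\phi'$ are clearly computable in polynomial time.

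Next I would define the rewriting $\phi \mapsto \phi'$. The only "expressions" sensitive to the concrete shape of the model are the "atomic programs" $a$ and $\bar a$, and each occurrence is replaced by a fixed $\CPDL$ navigation "program" $\nu_a$ over the encoding: by condition "B@@treedec" any $a$-"edge" of the original structure lives inside a single bag, so $\nu_a$ walks the encoding along overlap "edges" (using $\cup$, $\circ$, Kleene star and converse, but \emph{no} intersection) until it reaches a bag recording an $a$-transition between the relevant slots, which it reads off with a simple proposition test. All remaining constructs --- the Boolean connectives, $\tup{\cdot}$, $\circ,\cup,{}^*,?$, and in particular the "conjunctive programs" $C[x_s,x_t]$ --- are left structurally intact, the rewriting merely recursing into subexpressions. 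Correctness in both directions is then exactly as in \cite[Lemma~4.7]{DBLP:journals/jsyml/GollerLL09}: a width-$k$ model of $\phi$ yields, via its encoded decomposition, a tree $T \in L(\+T)$ with $\epsilon \in \dbracket{\phi'}_T$, and conversely every such $T$ decodes into a genuine "Kripke structure" of "tree-width" $\leq k$ satisfying $\phi$, since $\+T$ guarantees the "tree decomposition" axioms.

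It remains to justify the ``moreover'' claim $\Cqsize{\phi}=\Cqsize{\phi'}$, which is the only point genuinely specific to our setting and the reason the complexity bound of \Cref{thm:omega-tree-sat} is preserved. Since $\nu_a$ is an ordinary $\CPDL$ "program" with no "conjunctive program" inside, we have $\cqsize{\nu_a}=1=\cqsize{a}$; and since every "conjunctive program" $C[x_s,x_t]$ of $\phi$ is mapped to a "conjunctive program" with the same $\vars(C)$ and the same "atoms" up to this atomic rewriting, the "conjunctive width" of each "program" is preserved atom by atom, so taking the maximum over the whole "formula" gives $\Cqsize{\phi'}=\max(\Cqsize{\phi},1)=\Cqsize{\phi}$. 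The main obstacle is purely bookkeeping: checking that the global condition "C@@treedec" for bags of $k+1$ slots is still enforced by $\+T$ and that the navigation programs $\nu_a$ remain sound and complete for the larger index range --- conceptually nothing new occurs relative to \cite{DBLP:journals/jsyml/GollerLL09}, which is precisely what makes the adaptation ``trivial''.
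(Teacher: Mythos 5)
Your proposal is correct and follows exactly the route the paper takes: the paper's ``proof'' is simply the citation of \cite[Lemma~4.7]{DBLP:journals/jsyml/GollerLL09} with the bag size widened from $3$ to $k+1$, which is precisely the adaptation you spell out, and your atom-by-atom argument for $\Cqsize{\phi}=\Cqsize{\phi'}$ (the navigation programs $\nu_a$ are plain $\CPDL$ programs of "conjunctive width" $1$ and the "conjunctive programs" are left structurally intact) correctly justifies the ``moreover'' clause that the paper only asserts.
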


The reduction is polynomial only if the "tree-width" is considered to be fixed, otherwise the reduction is exponential.

As a consequence of \Cref{lem:redux-omega-tree-sat}, \Cref{thm:omega-tree-sat}, and \Cref{cor:treewidth-k-model-property} we obtain the main result of this section:

\begin{theorem}\label{thm:sat-cpdlp}
    For every $k \geq 1$,
    \begin{enumerate}
        \item $\CPDLp$-"satisfiability@satisfiability problem" is decidable in $3$\exptime,
        \item $\CPDLg{\Tw}$-"satisfiability@satisfiability problem" is $2$\exptime-complete,
        \item $\set{ \phi \in \CPDLp \mid \Cqsize{\phi} \leq k}$-"satisfiability@satisfiability problem" is  \exptime-complete. 
    \end{enumerate}
\end{theorem}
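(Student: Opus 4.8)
The plan is to derive all three upper bounds from a single pipeline, assembled from \Cref{cor:treewidth-k-model-property}, \Cref{lem:redux-omega-tree-sat}, and \Cref{thm:omega-tree-sat}: given a formula, first invoke the bounded tree-width model property to restrict attention to models of bounded tree-width, then apply the reduction to $\omega$-regular tree satisfiability, and finally solve the latter within the time bound of \Cref{thm:omega-tree-sat}. The three cases differ only in (i) whether the tree-width is fixed, which makes \Cref{lem:redux-omega-tree-sat} polynomial rather than exponential, and (ii) how large the conjunctive width $\Cqsize{\cdot}$ is, since this is precisely the quantity controlling the automaton blow-up in \Cref{thm:omega-tree-sat}.

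For part (2), since $k$ is fixed, any satisfiable $\phi \in \CPDLg{\Tw}$ has, by \Cref{cor:treewidth-k-model-property}, a model of tree-width $\le k$; \Cref{lem:redux-omega-tree-sat} then produces in polynomial time a formula $\phi'$ and a TWAPTA $\+T$ with $\Cqsize{\phi'} = \Cqsize{\phi} \le |\phi|$, and \Cref{thm:omega-tree-sat} decides $\omega$-regular tree satisfiability for these in $2$\exptime. For part (3), I would first observe that $\Cqsize{\phi} \le k$ forces each conjunctive program $C[x_s,x_t]$ occurring in $\phi$ to have at most $k$ atoms, since every atom contributes at least $1$ to the sum defining $\cqsize{\cdot}$; hence $|\vars(C)| \le k+1$ and $\tw(\uGraph{C[x_s,x_t]}) \le k$, so $\{\phi \in \CPDLp \mid \Cqsize{\phi}\le k\} \subseteq \CPDLg{\Tw}$ and the model property again applies. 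Now the reduction is polynomial and, crucially, preserves $\Cqsize{\phi'} = \Cqsize{\phi}\le k$, so the term $|\phi'|^{\Cqsize{\phi'}}$ in \Cref{thm:omega-tree-sat} stays polynomial and the whole procedure runs in \exptime.

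For part (1) the tree-width $k$ of $\phi \in \CPDLp$ is no longer fixed but is bounded by $|\phi|$, so $\phi \in \CPDLg{\Tw}$ for some $k \le |\phi|$; the model property still applies, but now \Cref{lem:redux-omega-tree-sat} is only exponential, so $\phi'$ and $\+T$ become exponentially large while $\Cqsize{\phi'} = \Cqsize{\phi} \le |\phi|$ is preserved. Feeding these sizes into the bound of \Cref{thm:omega-tree-sat} yields the claimed $3$\exptime upper bound. For the matching lower bounds I would reuse known hardness: $\ICPDL$-satisfiability is $2$\exptime-hard \cite{DBLP:journals/jsyml/GollerLL09}, and since $\CPDLg{\Tw[1]} \langsemequiv \ICPDL$ and $\ICPDL \langsemequiv \CPDLg{\Tw[2]} \lleq \CPDLg{\Tw}$ by \Cref{thm:ICPDL_equals_TW1_equals_TW2} (all via equivalence-preserving polynomial translations), $\CPDLg{\Tw}$-satisfiability is $2$\exptime-hard for every $k \ge 1$; likewise $\PDL$-satisfiability is \exptime-hard \cite{DBLP:journals/jcss/FischerL79} and every $\CPDL$ formula has conjunctive width $1 \le k$, giving \exptime-hardness of the bounded conjunctive width fragment.

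The main obstacle is the careful accounting of automaton sizes and parity indices through the pipeline so that the exponents come out exactly right: one must track how the exponential cost of the non-fixed tree-width reduction combines with the $|\phi'|^{\Cqsize{\phi'}}$ term and the outer $\expfun$ of \Cref{thm:omega-tree-sat}, and verify that bounding $\Cqsize$ (resp.\ fixing $k$) is precisely what collapses one exponential level in parts (3) and (2). The model-property and reduction steps themselves are immediate given the quoted lemmas, and the lower bounds are routine transfers of existing results.
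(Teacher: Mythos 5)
Your proposal is correct and follows exactly the paper's own derivation: the paper obtains the upper bounds by assembling \Cref{cor:treewidth-k-model-property}, \Cref{lem:redux-omega-tree-sat} and \Cref{thm:omega-tree-sat} in precisely the pipeline you describe, and gets the lower bounds by transferring the $2$\exptime-hardness of $\ICPDL$ through \Cref{thm:ICPDL_equals_TW1_equals_TW2} and the \exptime-hardness of $\PDL$, just as you do. The only remark worth making is that your size accounting for part (1) (with $\Cqsize{\phi'}=\Cqsize{\phi}\leq|\phi|$ and $|\phi'|$ singly exponential) in fact lands comfortably within the stated $3$\exptime bound, so nothing is at risk there.
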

The lower bound of item 2 of \Cref{thm:sat-cpdlp} follows from the known $2$\exptime lower bound of $\ICPDL$ satisfiability \cite[Theorem~2]{DBLP:journals/jsyml/LangeL05} combined with \Cref{thm:ICPDL_equals_TW1_equals_TW2}. The lower bound of item 3 follows from the \exptime lower bound of $\PDL$ \cite{DBLP:journals/jcss/FischerL79}.

Due to the following remark, the last item of \Cref{thm:sat-cpdlp} can be seen as a generalization of the result \cite[Theorem~4.8]{DBLP:journals/jsyml/GollerLL09} stating that the satisfiability problem for $\ICPDL$ "formulas" of bounded "intersection width"  is in \exptime.
\begin{remark}
    There is a polynomial time translation of $\ICPDL$ "formulas" of "intersection width" $k$ to $\CPDLg{\Tw[2]}$ "formulas" of "conjunctive width" $k$\ifarxiv \ (\eg, the translation $\tr_3$ from \S \ref{sec:from_ICPDL}).\else . \fi
\end{remark}


\section{Model Checking}
\label{sec:modelchecking}

We consider the (finite) ""model checking"" problem, as the problem of, given a finite "Kripke structure" $K$, a "world" $w \in \worlds{K}$, and a $\CPDLp$ "formula" $\phi$, whether $K,w \models \phi$. This problem is known to be \ptime-complete for $\PDL$, $\ICPDL$, and many other variants \cite{DBLP:journals/japll/Lange06,DBLP:journals/jcss/FischerL79}. 
In the unrestricted setting, it is easy to see that the "model checking" problem for $\CPDLp$ is hard for the complexity class $\DP= \set{ A \setminus B : A,B \in \np}$.

\begin{propositionrep}
    The "model checking" problem for $\CPDLp$ is \DP-hard, and in $\textup{P}^{\np}$.
\end{propositionrep}
\begin{proof}
    The lower-bound follows from a straightforward reduction from the "model checking" problem for Boolean combinations of "conjunctive queries". \sidediego{cite sth?}
    For the upper-bound, simply note that the algorithm in the proof of \Cref{thm:modelchecking} (\cref{it:modelch:ptime}) runs in $\textup{P}^{\np}$.
\end{proof}

However, provided the "tree-width" of $\CPDLp$ "formulas" is bounded, "model checking" remains \ptime.
\begin{theoremrep}\label{thm:modelchecking}
    For any class $\+G$ of connected "graphs":
    \begin{enumerate}
        \item \label{it:modelch:ptime}
        if $\+G \subseteq \Tw$ for some $k$, then the "model checking" problem for $\CPDLp(\+G)$ is \ptime-complete;
        \item \label{it:modelch:notptime}
         otherwise, the "model checking" problem for $\CPDLp(\+G)$ is not in \ptime, under the hypothesis that $\+G$ is recursively enumerable and $\wone \neq \fpt$. This holds even for the "positive" fragment of $\CPDLp(\+G)$.
    \end{enumerate}
\end{theoremrep}
\begin{proofsketch}
    \proofcase{\ref{it:modelch:ptime}.}
The procedure for showing that $\CPDLp(\Tw)$ "model checking" is in \ptime is a classical dynamic programming algorithm. Given a "formula" $\phi$ and a finite "Kripke structure" $k$, we iteratively label all "worlds" of $K$ with "subexpressions" $\psi$ of $\phi$ based on the labeling of "subexpressions" of $\psi$. 

For processing a "program" $\pi \in sub(\phi)$ we exploit the fact that the evaluation of "conjunctive queries" of "tree-width" $\leq k$ is in polynomial time \cite[Theorem~3]{DBLP:journals/tcs/ChekuriR00} (more precisely "LOGCFL"-complete \cite[Theorem~6.12]{DBLP:journals/jacm/GottlobLS01}). This yields an algorithm which is linear in $\phi$ and polynomial in $K$, where the degree of the polynomial is $k+1$ if $\phi \in \CPDLg{\Tw}$. 

The lower bound comes from \ptime-hardness of "model checking" for modal logic (see, eg, \cite[Proposition 5]{DBLP:journals/japll/Lange06}).

\proofcase{\ref{it:modelch:notptime}.} This follows from the fact that a similar statement is known for "conjunctive queries": the "model checking" problem  for the class of Boolean "conjunctive queries" on binary relations whose underlying graph is in $\+G$ is not in \ptime unless $\fpt = \wone$ \cite[Corollary~19]{DBLP:conf/stoc/GroheSS01}. 
Then the result follows from a polynomial-time reduction from the evaluation of Boolean "conjunctive queries" to the "model checking" of $\CPDLp(\+G)$: A Boolean "conjunctive query" $q$ holds true in a (finite) "structure@@kripke" $K$ if{f} the "formula" $\tup{C[x,x]}$ is satisfied in some of its "worlds", where
$C$ is the set of "atoms" of $q$, and $x$ is any variable of~$q$. 
\end{proofsketch}
\begin{proof}
    \proofcase{\ref{it:modelch:ptime}.}
The procedure for showing that $\CPDLp(\Tw)$ "model checking" is in \ptime is a classical dynamic programming algorithm. Given a "formula" $\phi$ and a finite "Kripke structure" $k$, we iteratively label all "worlds" of $K$ with "subexpressions" $\psi$ of $\phi$ based on the labeling of "subexpressions" of $\psi$. 
We will use a unary relation (\ie\ a set) $U_\psi \subseteq \worlds{K}$ for every "formula" $\psi \in sub(\phi)$ and a binary relation $B_\pi \subseteq \worlds{K} \times \worlds{K}$ for every "program" $\pi \in sub(\phi)$, which are all initialized in the empty relation.
We start by processing all "atomic propositions" $p \in \subexpr(\phi)$ and "atomic programs" $a \in \subexpr(\phi)$: we set $U_p = X_p$ and $B_{a} = {\to_a}$. For the converse of "atomic programs" $\bar a \in sub(\phi)$ we set $B_{\bar a} = (\to_a)^{-1}$.
Now take any "formula" $\psi \in sub(\phi)$ such that all its "subexpressions" from $sub(\psi) \setminus \set\psi$ have already been processed. We process $\psi$ as follows:
\begin{itemize}
    \item If $\psi = \psi_1 \land \psi_2$, then set $U_\psi = U_{\psi_1} \cap U_{\psi_2}$
    \item if $\psi = \lnot\psi'$, then set $U_{\psi} = \worlds{K} \setminus U_{\psi'}$,
    \item if $\psi = \tup{\pi}$, then set $U_{\psi} = \set{w \in \worlds{K} : \exists w' s.t.\ (w,w') \in B_{\pi}}$.
\end{itemize}
It is easy to see that each one of these operations is in polynomial time (even linear with the right data structure).
Now take any "program" $\pi \in sub(\phi)$ such that all its "subexpressions" from $sub(\pi) \setminus \set\pi$ have already been processed. We process $\pi$ as follows:
\begin{itemize}
    \item If $\pi = \pi_1 \star \pi_2$ for $\star \in \set{\circ,\cup}$, then set $B_\pi = B_{\pi_1} \star B_{\pi_2}$, 
    \item if $\pi = (\pi')^*$, then set $B_\pi = B_{\pi'}^*$,
    \item if $\pi = C[x_s,x_t]$, we evaluate $C[x_s,x_t]$ as if it were a "conjunctive query" on the already processed relations to populate $B_\pi$. 
\end{itemize}
It is clear that the first two items can be done in quadratic time. For the last item, it is well-known that the evaluation of "conjunctive queries" of "tree-width" $\leq k$ is in polynomial time \cite[Theorem~3]{DBLP:journals/tcs/ChekuriR00} (more precisely "LOGCFL"-complete \cite[Theorem~6.12]{DBLP:journals/jacm/GottlobLS01}), via a bottom-up processing of the "tree decomposition" of the query.

All in all, this yields an algorithm which is linear in $\phi$ and polynomial in $K$, where the degree of the polynomial is $k+1$ if $\phi \in \CPDLg{\Tw}$. 

The lower bound comes from \ptime-hardness of "model checking" for modal logic (see, eg, \cite[Proposition 5]{DBLP:journals/japll/Lange06}).

\proofcase{\ref{it:modelch:notptime}.} This follows from the fact that a similar statement is known for "conjunctive queries": the "model checking" problem  for the class of Boolean "conjunctive queries" on binary relations whose underlying graph is in $\+G$ is not in \ptime unless $\fpt = \wone$ \cite[Corollary~19]{DBLP:conf/stoc/GroheSS01}.\footnote{The fact that here we have connected "conjunctive queries" plays no role, and the result of \cite{DBLP:conf/stoc/GroheSS01} holds also for connected "conjunctive queries"} 
We have that  a Boolean "conjunctive query" $q$ holds true in a "structure@@kripke" if{f} the "formula" $\tup{C[x,x]}$ is satisfied in some of its "worlds", where
$C$ is the set of atoms of $q$, and $x$ is any variable of~$q$.

This \ptime-reduction from "model checking" of "conjunctive queries" to "model checking" of $\CPDLp(\+G)$ implies that the "model checking" of $\CPDLp(\+G)$ cannot be in \ptime, under the hypothesis that $\wone \neq \fpt$.\footnote{A similar parallel can be made for parameterized "model checking", where the parameter is the size of the "formula". In our case we would obtain that parameterized "model checking" of $\CPDLp(\+G)$ is \wone-hard.}
\end{proof}


\section{Conclusions}

We have introduced an expressive logic $\CPDLp$, which captures several formalisms known on "Kripke structures" and "graph databases". 
We ignore if the complexity of satisfiability for $\CPDLp$ is complete for $3$\exptime, we leave this issue for future work.
While the "satisfiability" of $\CPDLp$ is decidable, the \emph{finite}-satisfiability is a notable open problem even for $\loopCPDL$ \cite[\S 7]{DBLP:journals/jsyml/GollerLL09}. \ifarxiv See also the discussion in \Cref{sec:appendixconcl}. \else \fi

\begin{toappendix}
\section{Extended discussion}
\label{sec:appendixconcl}

\paragraph{Finite satisfiability} Observe that in the satisfiability problem that we study here, the "Kripke structure"  may be infinite. It is a well-known open problem, even for the case of "loop-CPDL", whether the \emph{finite}-satisfiability problem is decidable \cite[\S 7]{DBLP:journals/jsyml/GollerLL09}.

\paragraph{Infinite state model checking}
Infinite state model checking (\ie, "model checking" for some finitely represented infinite "Kripke structures") has been studied for $\CPDL$ \cite{DBLP:conf/csl/GollerL06} and $\ICPDL$ \cite{DBLP:journals/jsyml/GollerLL09}. While we believe that the upper bound results can be extended to $\CPDLp$, we leave this for future work.

\paragraph{Connectedness} The connectedness restriction that we assume on "conjunctive programs" is, as already mentioned, not essential. All upper bounds hold for non-connected "programs", and the characterization can be easily adapted to capture these more general kind of queries. The reason why we adopted the connectedness condition is to make the comparison with the $\PDL$ family of logics, which can only move inside one connected component of a "structure@@kripke".

\paragraph{Arbitrary arity programs} In our developments we have shown that we can consider not only binary "programs" $C[x,y]$ but actually arbitrary arity "programs" $C[\bar x]$. This can be seen as a query language capturing the whole language of "CRPQ" or even "Regular Queries" for arbitrary arity queries (as opposed to just queries of arity $\leq 2$ as depicted in \Cref{fig:expressive-power}).
One can also consider extending the composition, union, and Kleene star operators to $2n$-arity relations in the obvious way. This yields a more expressive formalism. We currently do not know whether the satisfiability problem is decidable, and we leave this direction for future work. Another direction would be to abandon Kleene star in favor of a more general form of transitive closure allowing ``parameters'', in the same way as in first-order logic with transitive closure (an idea which was explored in \cite{DBLP:conf/dlog/BourhisKR14} in a different context).

\paragraph{Constants} Since this work intends to capture some query languages such as "CQs" and "CRPQs", it would make sense to have access to \emph{constants}, which in this context are usually modelled as \emph{nominals} (\ie, a kind of "atomic propositions" which can hold true in at most one "world"). We believe that nominals can be treated easily in $\CPDLp$ at the expense of increasing the "tree-width" of the tree-like model property. We leave these details for future work.
\end{toappendix}

\section*{Acknowledgements}
This work was partially funded by the French-Argentinian IRP
\href{http://www.lia-sinfin.org/}{SINFIN}. Diego Figueira is partially funded by ANR QUID, grant ANR-18-CE400031. Santiago Figueira is partially funded by UBA, grant UBACyT 20020190100021BA.

\renewcommand{\appendixsectionformat}[2]{Missing details from Section~#1}

\ifarxiv
  \bibliographystyle{ACM-Reference-Format}
  \bibliography{long,references}
\else
  \bibliographystyle{IEEEtran} \bibliography{IEEEabrv,long,references}


\begin{thebibliography}{24}


\ifx \showCODEN    \undefined \def \showCODEN     #1{\unskip}     \fi
\ifx \showDOI      \undefined \def \showDOI       #1{#1}\fi
\ifx \showISBNx    \undefined \def \showISBNx     #1{\unskip}     \fi
\ifx \showISBNxiii \undefined \def \showISBNxiii  #1{\unskip}     \fi
\ifx \showISSN     \undefined \def \showISSN      #1{\unskip}     \fi
\ifx \showLCCN     \undefined \def \showLCCN      #1{\unskip}     \fi
\ifx \shownote     \undefined \def \shownote      #1{#1}          \fi
\ifx \showarticletitle \undefined \def \showarticletitle #1{#1}   \fi
\ifx \showURL      \undefined \def \showURL       {\relax}        \fi
\providecommand\bibfield[2]{#2}
\providecommand\bibinfo[2]{#2}
\providecommand\natexlab[1]{#1}
\providecommand\showeprint[2][]{arXiv:#2}

\bibitem[Atserias et~al\mbox{.}(2004)]%
        {DBLP:conf/cp/AtseriasKV04}
\bibfield{author}{\bibinfo{person}{Albert Atserias},
  \bibinfo{person}{Phokion~G. Kolaitis}, {and} \bibinfo{person}{Moshe~Y.
  Vardi}.} \bibinfo{year}{2004}\natexlab{}.
\newblock \showarticletitle{Constraint Propagation as a Proof System}. In
  \bibinfo{booktitle}{\emph{Principles and Practice of Constraint Programming -
  {CP} 2004, 10th International Conference, {CP} 2004, Toronto, Canada,
  September 27 - October 1, 2004, Proceedings}} \emph{(\bibinfo{series}{Lecture
  Notes in Computer Science}, Vol.~\bibinfo{volume}{3258})}.
  \bibinfo{publisher}{Springer}, \bibinfo{pages}{77--91}.
\newblock
\urldef\tempurl%
\url{https://doi.org/10.1007/978-3-540-30201-8\_9}
\showDOI{\tempurl}


\bibitem[Baeza(2013)]%
        {DBLP:conf/pods/Baeza13}
\bibfield{author}{\bibinfo{person}{Pablo~Barcel{\'{o}} Baeza}.}
  \bibinfo{year}{2013}\natexlab{}.
\newblock \showarticletitle{Querying graph databases}. In
  \bibinfo{booktitle}{\emph{ACM Symposium on Principles of Database Systems
  (PODS)}}. \bibinfo{publisher}{{ACM}}, \bibinfo{pages}{175--188}.
\newblock
\urldef\tempurl%
\url{https://doi.org/10.1145/2463664.2465216}
\showDOI{\tempurl}


\bibitem[Benedikt et~al\mbox{.}(2016)]%
        {DBLP:conf/lics/BenediktBB16}
\bibfield{author}{\bibinfo{person}{Michael Benedikt}, \bibinfo{person}{Pierre
  Bourhis}, {and} \bibinfo{person}{Michael {Vanden Boom}}.}
  \bibinfo{year}{2016}\natexlab{}.
\newblock \showarticletitle{A Step Up in Expressiveness of Decidable Fixpoint
  Logics}. In \bibinfo{booktitle}{\emph{Annual Symposium on Logic in Computer
  Science (LICS)}}. \bibinfo{publisher}{{ACM}}, \bibinfo{pages}{817--826}.
\newblock
\urldef\tempurl%
\url{https://doi.org/10.1145/2933575.2933592}
\showDOI{\tempurl}


\bibitem[Bodlaender(1993)]%
        {bodlaender1993linear}
\bibfield{author}{\bibinfo{person}{Hans~L Bodlaender}.}
  \bibinfo{year}{1993}\natexlab{}.
\newblock \showarticletitle{A linear time algorithm for finding
  tree-decompositions of small treewidth}. In
  \bibinfo{booktitle}{\emph{Proceedings of the twenty-fifth annual ACM
  symposium on Theory of computing}}. \bibinfo{pages}{226--234}.
\newblock


\bibitem[Bourhis et~al\mbox{.}(2014)]%
        {DBLP:conf/dlog/BourhisKR14}
\bibfield{author}{\bibinfo{person}{Pierre Bourhis}, \bibinfo{person}{Markus
  Kr{\"{o}}tzsch}, {and} \bibinfo{person}{Sebastian Rudolph}.}
  \bibinfo{year}{2014}\natexlab{}.
\newblock \showarticletitle{How to Best Nest Regular Path Queries}. In
  \bibinfo{booktitle}{\emph{Informal Proceedings of the 27th International
  Workshop on Description Logics, Vienna, Austria, July 17-20, 2014}}
  \emph{(\bibinfo{series}{{CEUR} Workshop Proceedings},
  Vol.~\bibinfo{volume}{1193})}. \bibinfo{publisher}{CEUR-WS.org},
  \bibinfo{pages}{404--415}.
\newblock
\urldef\tempurl%
\url{http://ceur-ws.org/Vol-1193/paper\_80.pdf}
\showURL{%
\tempurl}


\bibitem[Calvanese et~al\mbox{.}(2000)]%
        {DBLP:conf/kr/CalvaneseGLV00}
\bibfield{author}{\bibinfo{person}{Diego Calvanese},
  \bibinfo{person}{Giuseppe~De Giacomo}, \bibinfo{person}{Maurizio Lenzerini},
  {and} \bibinfo{person}{Moshe~Y. Vardi}.} \bibinfo{year}{2000}\natexlab{}.
\newblock \showarticletitle{Containment of Conjunctive Regular Path Queries
  with Inverse}. In \bibinfo{booktitle}{\emph{Principles of Knowledge
  Representation and Reasoning (KR)}}. \bibinfo{publisher}{Morgan Kaufmann},
  \bibinfo{pages}{176--185}.
\newblock


\bibitem[Chekuri and Rajaraman(2000)]%
        {DBLP:journals/tcs/ChekuriR00}
\bibfield{author}{\bibinfo{person}{Chandra Chekuri} {and}
  \bibinfo{person}{Anand Rajaraman}.} \bibinfo{year}{2000}\natexlab{}.
\newblock \showarticletitle{Conjunctive query containment revisited}.
\newblock \bibinfo{journal}{\emph{Theoretical Computer Science}}
  \bibinfo{volume}{239}, \bibinfo{number}{2} (\bibinfo{year}{2000}),
  \bibinfo{pages}{211--229}.
\newblock
\urldef\tempurl%
\url{https://doi.org/10.1016/S0304-3975(99)00220-0}
\showDOI{\tempurl}


\bibitem[Emerson and Jutla(1991)]%
        {DBLP:conf/focs/EmersonJ91}
\bibfield{author}{\bibinfo{person}{E.~Allen Emerson} {and}
  \bibinfo{person}{Charanjit~S. Jutla}.} \bibinfo{year}{1991}\natexlab{}.
\newblock \showarticletitle{Tree Automata, Mu-Calculus and Determinacy
  (Extended Abstract)}. In \bibinfo{booktitle}{\emph{32nd Annual Symposium on
  Foundations of Computer Science, San Juan, Puerto Rico, 1-4 October 1991}}.
  \bibinfo{publisher}{{IEEE} Computer Society}, \bibinfo{pages}{368--377}.
\newblock
\urldef\tempurl%
\url{https://doi.org/10.1109/SFCS.1991.185392}
\showDOI{\tempurl}


\bibitem[Figueira et~al\mbox{.}(2023)]%
        {thispaper}
\bibfield{author}{\bibinfo{person}{Diego Figueira}, \bibinfo{person}{Santiago
  Figueira}, {and} \bibinfo{person}{Edwin {Pin}}.}
  \bibinfo{year}{2023}\natexlab{}.
\newblock \showarticletitle{PDL on Steroids: on Expressive Extensions of PDL
  with Intersection and Converse}. In \bibinfo{booktitle}{\emph{Annual
  Symposium on Logic in Computer Science (LICS)}}. \bibinfo{publisher}{ACM
  Press}.
\newblock


\bibitem[Fischer and Ladner(1979)]%
        {DBLP:journals/jcss/FischerL79}
\bibfield{author}{\bibinfo{person}{Michael~J. Fischer} {and}
  \bibinfo{person}{Richard~E. Ladner}.} \bibinfo{year}{1979}\natexlab{}.
\newblock \showarticletitle{Propositional Dynamic Logic of Regular Programs}.
\newblock \bibinfo{journal}{\emph{Journal of Computer and System Sciences
  (JCSS)}} \bibinfo{volume}{18}, \bibinfo{number}{2} (\bibinfo{year}{1979}),
  \bibinfo{pages}{194--211}.
\newblock
\urldef\tempurl%
\url{https://doi.org/10.1016/0022-0000(79)90046-1}
\showDOI{\tempurl}


\bibitem[G{\"{o}}ller and Lohrey(2006)]%
        {DBLP:conf/csl/GollerL06}
\bibfield{author}{\bibinfo{person}{Stefan G{\"{o}}ller} {and}
  \bibinfo{person}{Markus Lohrey}.} \bibinfo{year}{2006}\natexlab{}.
\newblock \showarticletitle{Infinite State Model-Checking of Propositional
  Dynamic Logics}. In \bibinfo{booktitle}{\emph{EACSL Annual Conference on
  Computer Science Logic (CSL)}} \emph{(\bibinfo{series}{Lecture Notes in
  Computer Science}, Vol.~\bibinfo{volume}{4207})}.
  \bibinfo{publisher}{Springer}, \bibinfo{pages}{349--364}.
\newblock
\urldef\tempurl%
\url{https://doi.org/10.1007/11874683\_23}
\showDOI{\tempurl}


\bibitem[G{\"{o}}ller et~al\mbox{.}(2009)]%
        {DBLP:journals/jsyml/GollerLL09}
\bibfield{author}{\bibinfo{person}{Stefan G{\"{o}}ller},
  \bibinfo{person}{Markus Lohrey}, {and} \bibinfo{person}{Carsten Lutz}.}
  \bibinfo{year}{2009}\natexlab{}.
\newblock \showarticletitle{{PDL} with intersection and converse:
  satisfiability and infinite-state model checking}.
\newblock \bibinfo{journal}{\emph{Journal of Symbolic Logic}}
  \bibinfo{volume}{74}, \bibinfo{number}{1} (\bibinfo{year}{2009}),
  \bibinfo{pages}{279--314}.
\newblock
\urldef\tempurl%
\url{https://doi.org/10.2178/jsl/1231082313}
\showDOI{\tempurl}


\bibitem[Gottlob et~al\mbox{.}(2001)]%
        {DBLP:journals/jacm/GottlobLS01}
\bibfield{author}{\bibinfo{person}{Georg Gottlob}, \bibinfo{person}{Nicola
  Leone}, {and} \bibinfo{person}{Francesco Scarcello}.}
  \bibinfo{year}{2001}\natexlab{}.
\newblock \showarticletitle{The complexity of acyclic conjunctive queries}.
\newblock \bibinfo{journal}{\emph{J. ACM}} \bibinfo{volume}{48},
  \bibinfo{number}{3} (\bibinfo{year}{2001}), \bibinfo{pages}{431--498}.
\newblock
\urldef\tempurl%
\url{https://doi.org/10.1145/382780.382783}
\showDOI{\tempurl}


\bibitem[Gr{\"a}del et~al\mbox{.}(2003)]%
        {gradel2003automata}
\bibfield{author}{\bibinfo{person}{Erich Gr{\"a}del}, \bibinfo{person}{Wolfgang
  Thomas}, {and} \bibinfo{person}{Thomas Wilke}.}
  \bibinfo{year}{2003}\natexlab{}.
\newblock \bibinfo{booktitle}{\emph{Automata, logics, and infinite games: a
  guide to current research}}. Vol.~\bibinfo{volume}{2500}.
\newblock \bibinfo{publisher}{Springer}.
\newblock


\bibitem[Grohe et~al\mbox{.}(2001)]%
        {DBLP:conf/stoc/GroheSS01}
\bibfield{author}{\bibinfo{person}{Martin Grohe}, \bibinfo{person}{Thomas
  Schwentick}, {and} \bibinfo{person}{Luc Segoufin}.}
  \bibinfo{year}{2001}\natexlab{}.
\newblock \showarticletitle{When is the evaluation of conjunctive queries
  tractable?}. In \bibinfo{booktitle}{\emph{Proceedings on 33rd Annual {ACM}
  Symposium on Theory of Computing, July 6-8, 2001, Heraklion, Crete, Greece}}.
  \bibinfo{publisher}{{ACM}}, \bibinfo{pages}{657--666}.
\newblock
\urldef\tempurl%
\url{https://doi.org/10.1145/380752.380867}
\showDOI{\tempurl}


\bibitem[Harel et~al\mbox{.}(2001)]%
        {harel2001dynamic}
\bibfield{author}{\bibinfo{person}{David Harel}, \bibinfo{person}{Dexter
  Kozen}, {and} \bibinfo{person}{Jerzy Tiuryn}.}
  \bibinfo{year}{2001}\natexlab{}.
\newblock \showarticletitle{Dynamic logic}.
\newblock In \bibinfo{booktitle}{\emph{Handbook of philosophical logic}}.
  \bibinfo{publisher}{Springer}, \bibinfo{pages}{99--217}.
\newblock


\bibitem[Kolaitis and Vardi(1995)]%
        {DBLP:journals/jcss/KolaitisV95}
\bibfield{author}{\bibinfo{person}{Phokion~G. Kolaitis} {and}
  \bibinfo{person}{Moshe~Y. Vardi}.} \bibinfo{year}{1995}\natexlab{}.
\newblock \showarticletitle{On the Expressive Power of Datalog: Tools and a
  Case Study}.
\newblock \bibinfo{journal}{\emph{Journal of Computer and System Sciences
  (JCSS)}} \bibinfo{volume}{51}, \bibinfo{number}{1} (\bibinfo{year}{1995}),
  \bibinfo{pages}{110--134}.
\newblock
\urldef\tempurl%
\url{https://doi.org/10.1006/jcss.1995.1055}
\showDOI{\tempurl}


\bibitem[Kolaitis and Vardi(2000)]%
        {DBLP:journals/jcss/KolaitisV00}
\bibfield{author}{\bibinfo{person}{Phokion~G. Kolaitis} {and}
  \bibinfo{person}{Moshe~Y. Vardi}.} \bibinfo{year}{2000}\natexlab{}.
\newblock \showarticletitle{Conjunctive-Query Containment and Constraint
  Satisfaction}.
\newblock \bibinfo{journal}{\emph{Journal of Computer and System Sciences
  (JCSS)}} \bibinfo{volume}{61}, \bibinfo{number}{2} (\bibinfo{year}{2000}),
  \bibinfo{pages}{302--332}.
\newblock
\urldef\tempurl%
\url{https://doi.org/10.1006/jcss.2000.1713}
\showDOI{\tempurl}


\bibitem[Lange(2006)]%
        {DBLP:journals/japll/Lange06}
\bibfield{author}{\bibinfo{person}{Martin Lange}.}
  \bibinfo{year}{2006}\natexlab{}.
\newblock \showarticletitle{Model checking propositional dynamic logic with all
  extras}.
\newblock \bibinfo{journal}{\emph{J. Appl. Log.}} \bibinfo{volume}{4},
  \bibinfo{number}{1} (\bibinfo{year}{2006}), \bibinfo{pages}{39--49}.
\newblock
\urldef\tempurl%
\url{https://doi.org/10.1016/j.jal.2005.08.002}
\showDOI{\tempurl}


\bibitem[Lange and Lutz(2005)]%
        {DBLP:journals/jsyml/LangeL05}
\bibfield{author}{\bibinfo{person}{Martin Lange} {and} \bibinfo{person}{Carsten
  Lutz}.} \bibinfo{year}{2005}\natexlab{}.
\newblock \showarticletitle{2-ExpTime lower bounds for propositional dynamic
  logics with intersection}.
\newblock \bibinfo{journal}{\emph{J. Symb. Log.}} \bibinfo{volume}{70},
  \bibinfo{number}{4} (\bibinfo{year}{2005}), \bibinfo{pages}{1072--1086}.
\newblock
\urldef\tempurl%
\url{https://doi.org/10.2178/jsl/1129642115}
\showDOI{\tempurl}


\bibitem[Reutter et~al\mbox{.}(2017)]%
        {DBLP:journals/mst/ReutterRV17}
\bibfield{author}{\bibinfo{person}{Juan~L. Reutter}, \bibinfo{person}{Miguel
  Romero}, {and} \bibinfo{person}{Moshe~Y. Vardi}.}
  \bibinfo{year}{2017}\natexlab{}.
\newblock \showarticletitle{Regular Queries on Graph Databases}.
\newblock \bibinfo{journal}{\emph{Theoretical Computer Science}}
  \bibinfo{volume}{61}, \bibinfo{number}{1} (\bibinfo{year}{2017}),
  \bibinfo{pages}{31--83}.
\newblock
\urldef\tempurl%
\url{https://doi.org/10.1007/s00224-016-9676-2}
\showDOI{\tempurl}


\bibitem[Rudolph and Kr{\"{o}}tzsch(2013)]%
        {DBLP:conf/pods/RudolphK13}
\bibfield{author}{\bibinfo{person}{Sebastian Rudolph} {and}
  \bibinfo{person}{Markus Kr{\"{o}}tzsch}.} \bibinfo{year}{2013}\natexlab{}.
\newblock \showarticletitle{Flag {\&} check: data access with monadically
  defined queries}. In \bibinfo{booktitle}{\emph{ACM Symposium on Principles of
  Database Systems (PODS)}}. \bibinfo{publisher}{{ACM}},
  \bibinfo{pages}{151--162}.
\newblock
\urldef\tempurl%
\url{https://doi.org/10.1145/2463664.2465227}
\showDOI{\tempurl}


\bibitem[Seese(1991)]%
        {DBLP:journals/apal/Seese91}
\bibfield{author}{\bibinfo{person}{Detlef Seese}.}
  \bibinfo{year}{1991}\natexlab{}.
\newblock \showarticletitle{The Structure of Models of Decidable Monadic
  Theories of Graphs}.
\newblock \bibinfo{journal}{\emph{Ann. Pure Appl. Log.}} \bibinfo{volume}{53},
  \bibinfo{number}{2} (\bibinfo{year}{1991}), \bibinfo{pages}{169--195}.
\newblock
\urldef\tempurl%
\url{https://doi.org/10.1016/0168-0072(91)90054-P}
\showDOI{\tempurl}


\bibitem[Vardi(1989)]%
        {Var89}
\bibfield{author}{\bibinfo{person}{Moshe Vardi}.}
  \bibinfo{year}{1989}\natexlab{}.
\newblock \showarticletitle{A Note on the Reduction of Two-way Automata to
  One-way Atuomata}.
\newblock \bibinfo{journal}{\emph{Information Processing Letters (IPL)}}
  \bibinfo{volume}{30}, \bibinfo{number}{5} (\bibinfo{year}{1989}),
  \bibinfo{pages}{261--264}.
\newblock


\end{thebibliography}
\fi

\clearpage
\ifarxiv
\appendix
\fi
\end{document}
\endinput